\newcommand{\RR}{\mathbb{R}}
\newcommand{\NN}{\mathbb{N}}
\newcommand{\norm}[1] {\left \| #1 \right \|}
\newcommand{\Exp}[1] {\mathbb{E} \left[ #1 \right]}
\newcommand{\Prob}[1]{\text{\text{Pr}} \left[ #1 \right]}
\newcommand{\var}[1]{\text{Var} \left( #1 \right)}
\newcommand{\abs}[1]{\left| #1 \right|}
\newcommand{\Etl}[1] {\mathbb{E}_{{x}'(t,L)} \left[ #1 \right]}
\newcommand{\atrace}{\overline{\mathrm{Tr}} \text{ }} 
\newcommand{\trace}{\mathrm{Tr}  \text{ }}
\newtheorem{theorem}{Theorem}[section]
\newtheorem{lemma}[theorem]{Lemma}
\newtheorem{proposition}[theorem]{Proposition}
\newtheorem{corollary}[theorem]{Corollary}
\begin{document}

\title{On the Asymptotic Convergence of \\ Subgraph Generated Models}

\author{Xinchen~Xu,~Francesca~Parise
\thanks{X. Xu is with the Center for Applied Math, Cornell University, Ithaca, NY, 14850. F. Parise is with the School of Electrical and Computer Engineering, Cornell University, Ithaca, NY, 14850. E-mails: xx294@cornell.edu,  fp264@cornell.edu. This material is based upon work supported by the Air Force Office of Scientific Research under award number FA9550-24-1-0082. 
 This work has been submitted to the IEEE for possible publication. Copyright may be transferred without notice, after which this version may no longer be accessible. }

}

\IEEEtitleabstractindextext{%
\begin{abstract}
We study a family of random graph models - termed subgraph generated models (SUGMs) - initially developed by Chandrasekhar and Jackson in \cite{chandrasekhar2016network} in which higher-order structures are explicitly included in the network formation process. We use matrix concentration inequalities to show convergence of the adjacency matrix of networks realized from such SUGMs to the expected adjacency matrix as a function of the network size. We apply this result to study concentration of centrality measures (such as degree, eigenvector, and Katz centrality) in sampled networks to the corresponding centralities in the expected network, thus proving that node importance can be predicted from knowledge of the random graph model without the need of exact network data. 
\end{abstract}

}

\maketitle

\IEEEdisplaynontitleabstractindextext
\IEEEpeerreviewmaketitle

\IEEEraisesectionheading{\section{Introduction}\label{sec:introduction}}

 Many social and economic applications involve large populations of agents interacting in heterogeneous ways over a network. Consider, for instance, the dynamics of opinion exchange across social networks or the influence of peer decisions on an individual's choice to adopt a new product or behavior. The increasing size of this type of systems, exemplified by platforms like Facebook with billions of users, presents unique challenges for planners aiming to regulate these interactions. In fact in many cases, the planner cannot collect data about exact agents' interactions as this would be either too costly or impossible due to privacy or proprietary concerns \cite{breza2020using}. However, in these cases, it might be feasible for the planner to collect statistical information about agents' interactions that can be used to infer a random graph model. A key question is then whether knowledge of such a random graph model is sufficient to infer relevant  features of the realized network (or of a socio-economic process evolving over it). This question has been addressed in a number of recent works in the literature by focusing on random graph models in which each link is formed independently (such as Erdos-Renyi, stochastic block or graphon models).

As argued in \cite{chandrasekhar2016network}, many relevant networks nevertheless exhibit higher-order structure that cannot be captured by models in which links are realized independently from each other. For example, \cite{milo2002network} shows that the Bi-Fan network structure (Figure \ref{fig:intro}.A) is over-expressed in the C. Elegans neuronal network, and \cite{benson2016higher} uses the directed triangle structure (Figure \ref{fig:intro}.B) to discover social communities in the Twitter follower network. 

\begin{figure}[!h]
\centering
\includegraphics[width=2.5in]{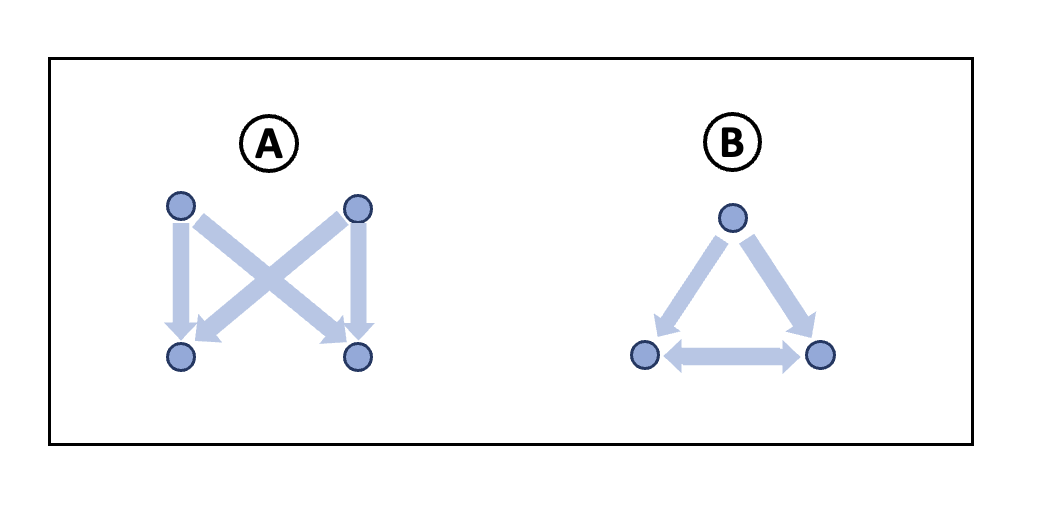}
\caption{Examples of higher-order structures found in real world networks.}
\label{fig:intro}
\end{figure}

To capture this type of complex interactions, \cite{chandrasekhar2016network} proposes a novel random graph model - the \textit{subgraph generated model (SUGM)} - in which one samples not only links but also higher order structures (such as, triangle or cliques). We here consider two variants of the SUGM (see Figure \ref{fig:formation}). First, we consider  the \textit{weighted} SUGM, in which the union of all generated subgraphs is used to construct a  network. The weight of each link in this network  corresponds to the number of times the link has been generated as part of different subgraphs realizations. Second, we consider the \textit{unweighted} SUGM (as originally introduced in \cite{chandrasekhar2016network}), in which links are either present or not, with no associated weight, and a link is present if it has been generated by at least one subgraph. For both models, our objective is to study whether key  properties of the realized networks can be inferred from knowledge of the generating process alone (the SUGM in our case). 

\subsection{Contributions}

As a first theoretical contribution, we derive an upper bound on the spectral norm of the difference between the realized and expected adjacency matrix, for both the weighted and unweighted case. The spectral norm (which coincides with the maximum eigenvalue for symmetric matrices) is relevant for studying a number of processes over networks. For the weighted SUGM, our concentration results follow straightforwardly from known matrix concentration inequalities since the adjacency matrix of the realized network can be rewritten as a sum of independent random matrices (each corresponding to one of the possible subgraphs). The analysis is instead more complex for the unweighted case, as keeping links that are generated in at least one of the subgraphs is a nonlinear operation. To overcome this issue we exploit matrix Efron-Stein inequalities to relate the unweighted adjacency matrices to a variance proxy. The main technical step is then to derive an upper bound on the log trace moment generating function of such variance proxy. 

As a second theoretical contribution, we bound the network centrality measures between the realized and expected networks. Specifically, we focus on degree, eigenvector and Katz centrality. Our interest for such measures stems from applications. For example, eigenvector centrality is related to the importance of each agent's initial opinion on the final consensus value in DeGroot opinion dynamics models \cite{golub2010naive}, while Katz centrality coincides with the Nash equilibrium in linear quadratic network games \cite{jackson2015games}. It is therefore relevant to study whether knowledge of the SUGM is sufficient to well approximate  these network measures, in the limit of large graphs. As the second main result, we provide an affirmative answer to this question under suitable assumptions on the probability with which each of the subgraphs is generated. Intuitively, we require the subgraphs to be generated frequently enough to have an increasing presence as the network grows, yet sparsely enough to guarantee that the chance of any given link being part of multiple subgraphs  vanishes. Under these assumptions we prove convergence (in normalized L1 norm) of the vector of centrality measures. 

Our work is mainly related to \cite{chandrasekhar2016network}. Therein the authors introduce the unweighted SUGM and focus on  statistical estimation of the generating parameters of the random graph model from network observations, while our paper focuses on the concentration of the realized networks to the expected networks. Notably, \cite{chandrasekhar2016network} relies on similar assumptions on the probabilities of the subgraph types as the ones needed in our work to ensure convergence of centrality measures. We also note that the SUGM is related to other random graph models based on subgraph distributions \cite{karrer2010random,ritchie2017generation}. In \cite{karrer2010random}, for example, a random graph model is proposed in which the realized graph is obtained as union of  subgraphs (similar to the SUGM weighted case), but the subgraph sampling procedure is different  as it  is based on a specified subgraph distribution (similar to the way links are sampled in the configuration model). Our paper is also related to a recent strand of literature in which random graph models (e.g., configuration, graphon or stochastic block models) are used to assess different network properties such as centrality measures (e.g. in \cite{avella2018centrality,dasaratha2020distributions}), opinion dynamics (e.g. in \cite{golub2012homophily}), equilibria of network games (e.g. in \cite{parise2023graphon,galeotti2010network,sadler2020diffusion}) and contagion processes (e.g. in \cite{rossi2017threshold,valente2020diffusion,akbarpour2020just,jackson2017behavioral}). None of the works cited above, however, focused on random graph models that capture higher-order structure, which instead is a main feature of the SUGM.

\subsection{Article Structure}

The rest of the paper is organized as follows. In Section 2 we provide the definitions of the two random network models this paper is built upon: the \textit{weighted} and the \textit{unweighted subgraph generated models}. Subsequently, in Section 3 we derive our main convergence results for the spectral norm in the two models. In Section 4, we apply the theoretical results to derive  convergence of graph centrality measures. Section 5 runs numerical simulations and demonstrates the theoretical results under several random network settings. Section 6 presents
concluding remarks and future directions. Appendix~A introduces common notations used in the proofs and Appendix~B contains the proofs omitted in the paper.

\section{Introduction of the Two Subgraph Generated Models}

We start with the definitions of the two random network models this paper is built upon: the \textit{weighted} and the \textit{unweighted subgraph generated model}. The latter was first introduced in \cite{chandrasekhar2016network} under the name of subgraph generated model (SUGM). We here introduce the prefix weighted/unweighted to distinguish two versions of this model. 

\begin{itemize}
        \item
        \textbf{Weighted Subgraph Generated Model (wSUGM)} \\ \\
        A Weighted Subgraph Generated Model of size $n$ is formally defined as follows. Consider a set $T$ of finitely many types of nonempty undirected subgraphs on which the model is based on; for instance, in a model with only links and triangles, we would have $T = \{\textit{link},\textit{triangle}\}$. The subgraphs are denoted by $(g_t(\cdot))_{t \in T}$, where each $g_t(L)$ is the subgraph of type $t$ generated on an  ordered  list\footnote{We hereby define $L$ as an ordered list of nodes so that subgraphs that are not complete or symmetric may be included in our generating process.} $L = [v_1, v_2, \ldots, v_{m_t}]$ of $m_t \leq n$ nodes, with $m_t$ denoting the size of the subgraph of type $t$. In the model $\text{wSUGM}(n,T,p)$, each possible subgraph $g_t(L)$ is formed independently with probability $p(t,L) \in [0,1]$, and the resulting adjacency matrix $A_w \in \NN^{n \times n}$ is formed by setting $(A_w)_{ij}$ equal to the \textbf{total number} of times edge $(i,j)$ is generated across all subgraphs involving edge $(i,j)$. \\

        \item
        \textbf{Unweighted Subgraph Generated Model (uSUGM)} \\ \\
        An Unweighted Subgraph Generated Model  of size $n$ is formally defined as follows. Let $T$, $(g_t(\cdot))_{t \in T}$ and $p(t,L)$ be defined as above. In the model $\text{uSUGM}(n,T,p)$, each possible subgraph $g_t(L)$ is formed independently with probability $p(t,L) \in [0,1]$, and the adjacency matrix $A_u \in \{0,1\}^{n \times n}$ is formed by setting $(A_u)_{ij} = 1$ if the edge $(i,j)$ is generated by \textbf{at least one} of the subgraphs involving edge $(i,j)$.  \\
\end{itemize}

\begin{figure*}[h]
\centering
\includegraphics[width=5.5in]{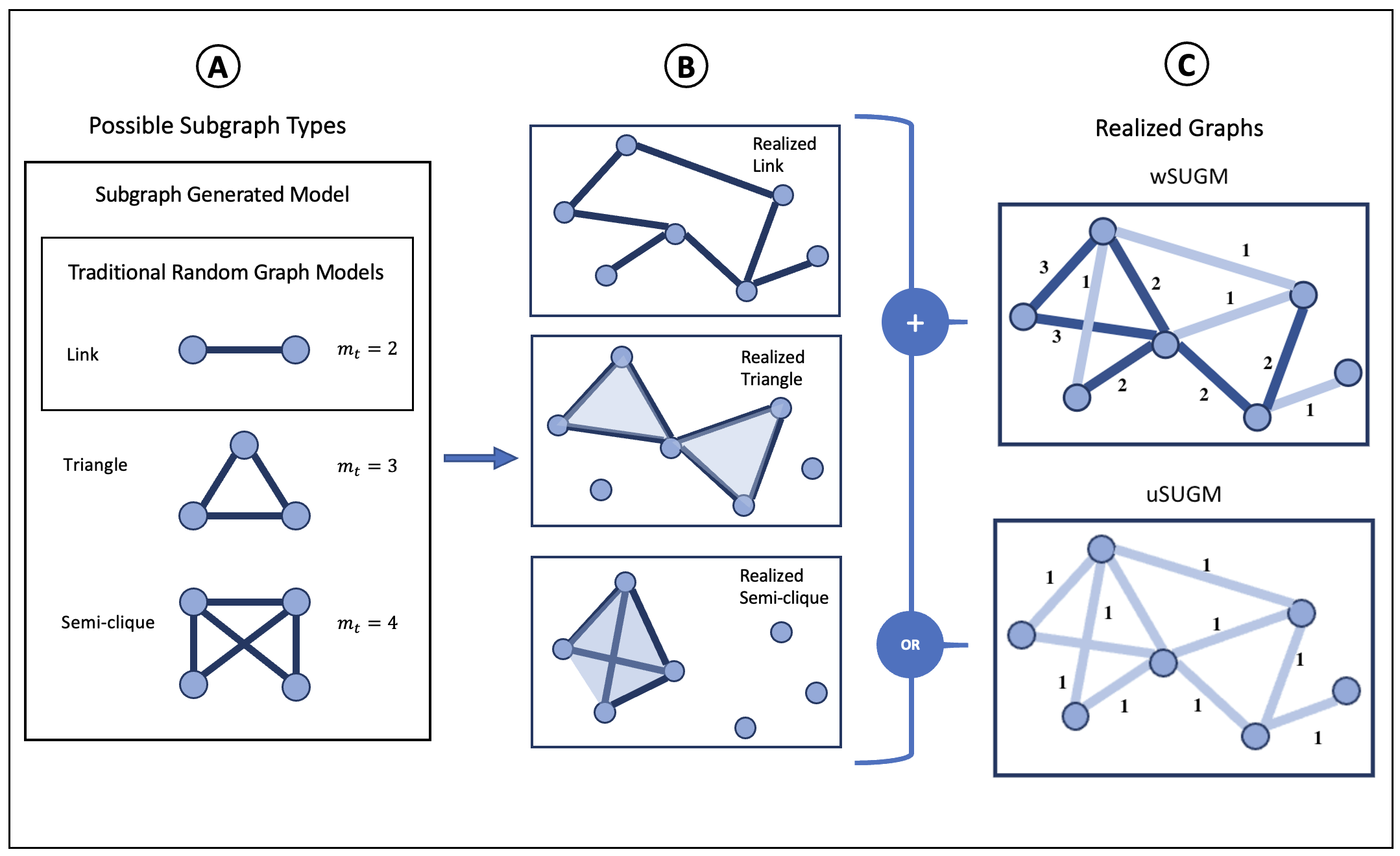} \\
\centering Fig. 2. Graph formation process for wSUGM and uSUGM.
\caption{In the SUGM, various subgraphs (e.g. links, triangles, semi-cliques) are generated independently according to given probability parameters. For the weighted SUGM, the resulting realized network is the direct sum of all such subgraph generated networks, while for the unweighted SUGM, the weight of each edge in the realized network can be viewed as applying the logic "OR" function over the weights of this edge in all realized subgraphs.}
\label{fig:formation}
\end{figure*}

Figure \ref{fig:formation}  describes the two graph generating process. These two random models are capable of generating networks that capture  the higher-order connectivity patterns observed in many real graphs. \cite{chandrasekhar2016network} showed that the unweighted subgraph generated model outperforms a number of random graph models in generating realistic distributions of networks, with fewer parameters.  

\section{Convergence Theory}

In this section, we present the main theorems of this paper. Specifically, we show that the spectral norm of the difference between the realized and expected  adjacency matrix of a SUGM can be bounded with high probability. 

\subsection{The Weighted Subgraph Generated Model} \label{section:wSUGM}

We start with presenting the result on the weighted model, which follows straightforwardly from standard matrix concentration inequalities. 

\begin{proposition} \label{prop:main_wSUGM_easy}
    Let $G_w$ be a random graph of size $n$ generated by the $\text{wSUGM}(n,T,p)$ with finite subgraph type set $T$ and probabilities $p(\cdot,\cdot)$. Let $A_w$ be the adjacency matrix of $G_w$. Denote the maximum expected degree by $\Delta_w := \norm{\Exp{A_w}}_{\infty}$  and the max subgraph size by $M := \max_{t \in T} \{m_t\}$. Let $\epsilon > 0$, and suppose that for $n$ sufficiently large,
    \begin{equation*} \label{Assumption:P1_A1}
        \Delta_w> \frac{4}{9} \ln{(2n/\epsilon)} \tag{\text{A1}}.
    \end{equation*}
    Then with probability at least $1-\epsilon$, for $n$ sufficiently large, 
    $$\norm{A_w - \Exp{A_w}}_2 \leq \sqrt{4M^2 \Delta_w\ln{(2n/\epsilon)}}.$$
\end{proposition}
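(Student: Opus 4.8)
The plan is to exploit the fact that, for the weighted model, the realized adjacency matrix decomposes as a sum of independent random matrices, and then to apply the matrix Bernstein inequality. First I would write $A_w = \sum_{t \in T} \sum_{L} X_{t,L}\, B_{t,L}$, where the outer sum ranges over subgraph types and the inner sum ranges over all ordered lists $L$ of $m_t$ distinct nodes, $X_{t,L}$ is the independent $\text{Bernoulli}(p(t,L))$ indicator that the subgraph $g_t(L)$ is generated, and $B_{t,L} \in \{0,1\}^{n\times n}$ is the deterministic symmetric adjacency matrix of $g_t(L)$ embedded on the nodes of $L$. Centering then gives $A_w - \Exp{A_w} = \sum_{t,L} Y_{t,L}$ with $Y_{t,L} := (X_{t,L} - p(t,L))\, B_{t,L}$, a sum of independent, mean-zero, symmetric random matrices, which is precisely the setting of matrix Bernstein.

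To invoke the inequality I need a uniform almost-sure bound $R$ on $\norm{Y_{t,L}}_2$ and a bound on the variance proxy $\sigma^2 := \norm{\sum_{t,L}\Exp{Y_{t,L}^2}}_2$. For the former, since $\abs{X_{t,L}-p(t,L)} \le 1$ and $B_{t,L}$ is the adjacency matrix of a graph on $m_t \le M$ nodes, whose spectral norm is at most its maximum degree $m_t-1 \le M$, I get $\norm{Y_{t,L}}_2 \le M =: R$. For the variance, using $\Exp{(X_{t,L}-p(t,L))^2} = p(t,L)(1-p(t,L)) \le p(t,L)$ yields $\Exp{Y_{t,L}^2} \preceq p(t,L)\, B_{t,L}^2$, so it suffices to bound $\norm{\sum_{t,L} p(t,L)\, B_{t,L}^2}_2$.

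The main step, and the place where assumption \eqref{Assumption:P1_A1} is ultimately used, is bounding this variance matrix. Since the summands are symmetric and positive semidefinite, I would pass to the induced infinity norm via $\norm{\cdot}_2 \le \norm{\cdot}_\infty$ (valid for symmetric matrices) and bound the maximum row sum. The crucial observation is that $\sum_j (B_{t,L}^2)_{ij} = \sum_k (B_{t,L})_{ik}\, d_k^{(t,L)} \le M \sum_k (B_{t,L})_{ik}$, where $d_k^{(t,L)} \le m_t-1 \le M$ is the degree of node $k$ within $g_t(L)$. Summing against $p(t,L)$ and recognizing $\sum_{t,L} p(t,L) \sum_k (B_{t,L})_{ik}$ as the $i$-th expected degree $(\Exp{A_w}\mathbf{1})_i \le \Delta_w$, I obtain $\sigma^2 \le M\Delta_w$. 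I expect this variance-proxy estimate to be the only genuinely delicate point; the remaining steps are routine.

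Finally I would substitute $R = M$ and $\sigma^2 \le M\Delta_w$ into the matrix Bernstein bound $\Prob{\norm{A_w - \Exp{A_w}}_2 \ge s} \le 2n\exp\!\big(\tfrac{-s^2/2}{\sigma^2 + Rs/3}\big)$ and choose $s = \sqrt{4M^2\Delta_w\ln(2n/\epsilon)}$. A short computation shows the exponent is at most $-\ln(2n/\epsilon)$, equivalently the failure probability is at most $\epsilon$, exactly when $(2M-1)\Delta_w \ge s/3$; since $s/3 = \tfrac{2M}{3}\sqrt{\Delta_w\ln(2n/\epsilon)}$ and $2M-1 \ge M$, this reduces to $\Delta_w \ge \tfrac{4}{9}\ln(2n/\epsilon)$, which is precisely assumption \eqref{Assumption:P1_A1}. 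This confirms the stated high-probability bound and is consistent with the paper's remark that the weighted case follows straightforwardly from standard matrix concentration inequalities.
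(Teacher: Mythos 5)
Your proposal is correct and follows essentially the same route as the paper's proof: the same decomposition of $A_w - \Exp{A_w}$ into independent centered summands $(x(t,L)-p(t,L))A(t,L)$, the same uniform bound $R = M$ and variance-proxy bound $\sigma^2 \le M\Delta_w$ via row sums of $A(t,L)^2$, and the same Bernstein-type matrix inequality (the paper uses the Chung--Radcliffe form, which matches the Tropp form you quote) with the same choice of deviation level. The only difference is cosmetic bookkeeping in the final arithmetic: the paper loosens $v^2 < M^2\Delta_w$ and uses \eqref{Assumption:P1_A1} to show $a/3 < M\Delta_w$, whereas you keep $\sigma^2 \le M\Delta_w$ and reduce to $(2M-1)\Delta_w \ge s/3$, both of which close the bound under exactly the same assumption.
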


Here is a brief summary of the techniques used in the proof of this proposition. For a fixed size $n$, consider $\text{wSUGM}(n,T,p)$: for each subgraph type $t$ and ordered list $L$ of $m_t$ vertices, we construct the Bernoulli random variable $x(t,L) \sim \textit{Bern}(p(t,L))$. 
If we enforce an ordering on the random variables $x(t,L)$ by assigning an order to the subgraph types $t \in T$ and  then sorting $L$ lexicographically, then the entries of the random vector 
\begin{align*}
    \Vec{x} = \{ \ldots, x(t,L), \ldots \} \in \mathcal{Z}
\end{align*}
are mutually independent random variables $x(t,L) \in [0,1]$. Construct $A{(t,L)} \in \{0,1\}^{n \times n}$ as the adjacency matrix representing the subgraph $g_t(L)$: $A{(t,L)}_{ij} = A{(t,L)}_{ji} = 1$ if and only if $i \in L, j \in L$ and edge $(i,j)$ exists in $g_t(L)$. Define the measurable function $W : \mathcal{Z}\rightarrow \mathrm{H}^n$ where $\mathrm{H}^n$ is the set of $n \times n$ symmetric matrices and 
$$W(\Vec{x}) := \sum_{t \in T} \sum_{L \in \mathfrak{S}(n, m_t)}  x(t,L) A(t,L)$$
where the notation $\mathfrak{S}(n, m_t)$ contains all possible  ordered lists of size $m_t$ over $n$ vertices. Note that $W(\Vec{x})$ constructs the adjacency  matrix of the realized network. Since $W(\Vec{x})$ is composed of a summation of independent symmetric matrices, $X(t,L) := x(t,L) A(t,L)$, we can apply the following matrix concentration inequality result. 

\begin{theorem} [Theorem 5, \cite{chung2011spectra}] \label{thm:chung_ineq}
 Consider $m$ zero-mean independent random symmetric matrices $X_1, \ldots, X_m$ of dimension $n$. If $\norm{X_i}_2 \leq K$ for all $i$, then for any $a > 0$,  
$$\Prob{\norm{\sum_{i = 1}^m X_i}_2 > a} \leq 2 n \cdot \exp \left( -\frac{a^2}{2v^2+2Ka/3} \right)$$
where $v^2 := \norm{\sum_{i = 1}^m \var{X_i}}_2$.
\end{theorem}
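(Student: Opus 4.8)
The plan is to apply the matrix concentration inequality of Theorem~\ref{thm:chung_ineq} to centered versions of the matrices $X(t,L)$ set up above, choosing the deviation parameter $a$ so that the resulting tail bound equals exactly $\epsilon$; throughout write $\ell := \ln(2n/\epsilon)$. First I would replace each $X(t,L) = x(t,L) A(t,L)$ by its centered version $\bar X(t,L) := (x(t,L) - p(t,L)) A(t,L)$. These matrices are symmetric, mutually independent and zero-mean, and since $\Exp{A_w} = \sum_{t,L} p(t,L) A(t,L)$ their sum equals $A_w - \Exp{A_w}$ exactly. Thus Theorem~\ref{thm:chung_ineq} applies to $\sum_{t,L} \bar X(t,L)$, and it remains to estimate the two parameters $K$ and $v^2$ it requires.

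For the uniform bound, since $\abs{x(t,L) - p(t,L)} \le 1$ I would bound $\norm{\bar X(t,L)}_2 \le \norm{A(t,L)}_2$. As $A(t,L)$ is the adjacency matrix of a subgraph on at most $M$ nodes, every row sum is at most $M-1$, so $\norm{A(t,L)}_2 \le \norm{A(t,L)}_\infty \le M$; hence one may take $K = M$.

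For the variance proxy, each $\bar X(t,L)$ is zero-mean, so $\var{\bar X(t,L)} = \Exp{\bar X(t,L)^2} = p(t,L)(1 - p(t,L)) A(t,L)^2$. The sum $B := \sum_{t,L} p(t,L)(1-p(t,L)) A(t,L)^2$ is symmetric and entrywise nonnegative, so $\norm{B}_2 \le \norm{B}_\infty$, its maximum row sum. For each term the $i$-th row sum of $A(t,L)^2$ is $\sum_k A(t,L)_{ik}\, d_k(t,L) \le M\, d_i(t,L)$, where $d_i(t,L)$ is the degree of node $i$ in $g_t(L)$ and I use that every degree inside the subgraph is at most $M$. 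Bounding $1 - p(t,L) \le 1$ and recalling $\Delta_w = \norm{\Exp{A_w}}_\infty = \max_i \sum_{t,L} p(t,L)\, d_i(t,L)$ then gives $v^2 = \norm{B}_2 \le M \Delta_w$. I expect this variance estimate to be the main obstacle, since it requires tracking how the squared subgraph matrices load onto each row and linking the result back to the expected degree.

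Finally I would substitute $K = M$, $v^2 \le M\Delta_w$ and $a = \sqrt{4 M^2 \Delta_w\, \ell} = 2M\sqrt{\Delta_w \ell}$ into the denominator of Theorem~\ref{thm:chung_ineq}. Assumption~(A1) rearranges to $\sqrt{\Delta_w \ell} < \tfrac{3}{2}\Delta_w$, which controls the linear Bernstein term via $\tfrac{2}{3} K a = \tfrac{4}{3}M^2\sqrt{\Delta_w \ell} < 2 M^2 \Delta_w$. Together with $2 v^2 \le 2 M \Delta_w$, the denominator is at most $2 M \Delta_w (1 + M)$, so the exponent satisfies $\frac{a^2}{2 v^2 + \tfrac{2}{3} K a} \ge \frac{2 M \ell}{1 + M} \ge \ell$ because $M \ge 1$. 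Hence the tail probability is at most $2 n \exp(-\ell) = \epsilon$, and on the complementary event (of probability at least $1-\epsilon$, for all $n$ large enough that (A1) holds) we conclude $\norm{A_w - \Exp{A_w}}_2 \le a = \sqrt{4 M^2 \Delta_w \ln(2n/\epsilon)}$, as claimed.
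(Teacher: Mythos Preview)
Your proposal is correct and follows essentially the same route as the paper's proof in Appendix~\ref{appendix:main_wSUGM_easy}: center the $X(t,L)$, bound $K$ via $\norm{A(t,L)}_2\le M$, bound $v^2$ by passing to the $\infty$-norm and reducing the row sums of $A(t,L)^2$ to $M\,d_i(t,L)$ so that $v^2\le M\Delta_w$, and then invoke (A1) to control the Bernstein term. The only cosmetic difference is that the paper deliberately weakens $v^2\le M\Delta_w$ to $v^2< M^2\Delta_w$ so that the denominator collapses cleanly to $4M^2\Delta_w$ and the exponent equals $\ell$ exactly, whereas you keep the sharper variance bound and instead use $\tfrac{2M}{1+M}\ge 1$; both arguments yield the same tail bound $\epsilon$.
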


To apply this theorem to $W(\Vec{x})$, we need to derive a bound on $\norm{X(t,L) - \Exp{X(t,L)}}_2$ for each random matrix $X(t,L)$ and an estimate of the variance $\norm{\sum_{t \in T} \sum_{L \in \mathfrak{S}(n, m_t)} \var{X(t,L)} }_2$. This is done in Appendix \ref{appendix:main_wSUGM_easy}. 

\subsection{Setup for Efron-Stein Inequalities} \label{section:ES_setup}

In the $\text{uSUGM}(n,T,p)$, the adjacency matrix $A_u \in \{0,1\}^{n \times n}$ is formed by setting $(A_u)_{ij}=(A_u)_{ji} = 1$ if the edge $(i,j)$ is generated by \textbf{at least one} of the subgraphs involving edge $(i,j)$. To model this, define the function $\omega : \mathrm{H}^n \rightarrow \mathrm{H}^n$ as: 
$$\omega(A)_{ij} = \min(A_{ij}, 1)$$
and the  measurable function $U : \mathcal{Z} \rightarrow \mathrm{H}^n$ as
    \begin{align*}
        U(\Vec{x}) &= \omega\left( W(\Vec{x}) \right) \\
        &= \omega\left(\sum_{t \in T} \sum_{L \in \mathfrak{S}(n, m_t)} x(t,L) A(t,L)\right), 
    \end{align*}
so that $A_u= U(\Vec{x}) $. 

To handle  the non-linearity involved in $U$, we will be using the matrix Efron–Stein inequalities developed by \cite{paulin2016efron}.
To this end, note that for a fixed $n$,  $\Exp{\norm{U(\Vec{x})}_2} < \infty$ and let 
$$\hat{U} := \hat{U}(\Vec{x}) := U(\Vec{x}) - \Exp{U(\Vec{x})}$$ be the corresponding centered random matrix. 

Recall the random vector 
\begin{align*}
    \Vec{x} = \{ \ldots, x(t,L), \ldots \}
\end{align*}
contains mutually independent Bernoulli random variables $x(t,L) \in [0,1]$. For each coordinate $(t,L)$, we can construct another random vector 
    $$\Vec{x}^{(t,L)} = ( \ldots, {x}'(t,L), \ldots )$$
where ${x}'(t,L)$ is an independent copy of ${x}(t,L)$ and nothing else is changed. Note that $\Vec{x}$ and $\Vec{x}^{(t,L)}$ follow the same distribution and  only differ by the coordinate $(t,L)$. Using $\Vec{x}^{(t,L)}$ we form the random matrices
\begin{align*}
    \hat{U}^{(t,L)} := \hat{U}(\Vec{x}^{(t,L)}) = U(\Vec{x}^{(t,L)}) - \Exp{U(\Vec{x}^{(t,L)})}
\end{align*}
for all $t \in T$, $L \in \mathfrak{S}(n, m_t)$.  Note that since the random matrix $U(\Vec{x}^{(t,L)})$ is identically distributed to $U(\Vec{x})$,
$\Exp{{U}(\Vec{x}^{(t,L)})} = \Exp{{U}(\Vec{x})}$. 
The matrix Efron–Stein inequality \cite{paulin2016efron} bounds the trace moments of the centered random matrix $\hat{U}(\Vec{x})$ in terms of the moments of its variance proxy
\begin{align*}
    V_U(\Vec{x}) := \frac{1}{2} \sum_{t \in T} \sum_{L \in \mathfrak{S}(n, m_t)} \Etl{(\hat{U}-\hat{U}^{(t,L)})^2 | \Vec{x}} .
\end{align*}
 as summarized next.

\begin{theorem} [Theorem 4.3, \cite{paulin2016efron}] \label{thm:es}
    Assume that the centered random matrix $\hat{U}(\Vec{x})$ is bounded. Then, for any $\psi > 0$ and $|\theta| \leq \sqrt{\psi/2}$, 
    \begin{align*}
        \log \Exp{\atrace e^{\theta \hat{U}}} \leq \frac{\theta^2/\psi}{1-2\theta^2 / \psi}\log \Exp{ \atrace e^{\psi V_U}},
    \end{align*}
     where $\atrace (\cdot):= \frac{1}{n} \trace (\cdot)$ is the normalized trace.
\end{theorem}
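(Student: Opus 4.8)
The plan is to read the claim as a bound on the log trace moment generating function $\Lambda(\theta) := \log\Exp{\atrace e^{\theta\hat{U}}}$ and to obtain it by converting the problem into a differential inequality in $\theta$ that I then integrate from $\theta=0$, where $\Lambda(0)=\log\atrace I = 0$. The engine is the method of exchangeable pairs for random matrices, built on the coupling already set up in the excerpt: for each coordinate $(t,L)$ the random vectors $\Vec{x}$ and $\Vec{x}^{(t,L)}$ are identically distributed and differ only in that one entry, so $(\hat{U},\hat{U}^{(t,L)})$ is an exchangeable pair of symmetric matrices. The first step I would carry out is a symmetrization identity: swapping the two members of the pair inside the (linear) normalized trace shows that, for any matrix function $\phi$,
$$\Exp{\atrace[(\hat{U}-\hat{U}^{(t,L)})\,\phi(\hat{U})]} = \tfrac{1}{2}\Exp{\atrace[(\hat{U}-\hat{U}^{(t,L)})(\phi(\hat{U})-\phi(\hat{U}^{(t,L)}))]},$$
which trades a lone factor of $\hat{U}$ for an antisymmetric difference that pairs naturally with the resampled copy.

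Next I would differentiate the trace m.g.f. via $\tfrac{d}{d\theta}\Exp{\atrace e^{\theta\hat{U}}}=\Exp{\atrace[\hat{U}e^{\theta\hat{U}}]}$ and control the drift $\Exp{\atrace[\hat{U}e^{\theta\hat{U}}]}$ one coordinate at a time. Because $\hat{U}=U-\Exp{U}$ is a centered function of the mutually independent coordinates, independence lets me introduce the resampled copy $\hat{U}^{(t,L)}$ at coordinate $(t,L)$ and apply the symmetrization above to bound that coordinate's contribution; summing these bounds over all $(t,L)$ is what brings in the variance proxy $V_U$. With $\phi=\exp(\theta\,\cdot)$, the object to be estimated at each coordinate is $\tfrac12\Exp{\atrace[(\hat{U}-\hat{U}^{(t,L)})(e^{\theta\hat{U}}-e^{\theta\hat{U}^{(t,L)}})]}$.

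The technical heart is a mean-value trace inequality that bounds this quantity. In the scalar case one would simply use that the secant slope of $\exp$ is dominated by the average of its endpoint slopes; for non-commuting symmetric matrices this has to be replaced by a genuine trace estimate, obtained from the integral representation $e^{\theta A}-e^{\theta B}=\theta\int_0^1 e^{u\theta A}(A-B)e^{(1-u)\theta B}\,du$ together with cyclicity of the trace and Lieb/Golden--Thompson-type convexity. The aim is to dominate each coordinate's contribution by something proportional to $\theta\,\Exp{\atrace[(\hat{U}-\hat{U}^{(t,L)})^2 e^{\theta\hat{U}}]}$, so that summing over $t\in T$ and $L\in\mathfrak{S}(n,m_t)$ yields a differential inequality schematically of the form $\Lambda'(\theta)\le 2\theta\,\Exp{\atrace[V_U e^{\theta\hat{U}}]}/\Exp{\atrace e^{\theta\hat{U}}}$.

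Finally I would decouple $V_U$ from $e^{\theta\hat{U}}$ on the right-hand side by a trace Hölder / Gibbs-variational inequality, which bounds $\Exp{\atrace[V_U e^{\theta\hat{U}}]}$ in terms of $\tfrac1\psi\log\Exp{\atrace e^{\psi V_U}}$ and the trace m.g.f. of $\hat{U}$; this is the step that introduces the free parameter $\psi$ and forces the admissibility range $|\theta|\le\sqrt{\psi/2}$, which is precisely what keeps the denominator $1-2\theta^2/\psi$ positive. Writing $\Lambda_V(\psi):=\log\Exp{\atrace e^{\psi V_U}}$, the estimate becomes a closed differential inequality for $\Lambda(\theta)$ whose Herbst-type integration from $0$, with $\Lambda(0)=0$, produces the stated closed form $\frac{\theta^2/\psi}{1-2\theta^2/\psi}\Lambda_V(\psi)$. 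I expect the two matrix-specific steps --- the mean-value trace inequality and the decoupling that produces the exact factor $(1-2\theta^2/\psi)^{-1}$ --- to be the principal obstacles, since both rest on non-commutative trace inequalities that have no scalar analogue and must be calibrated to reproduce precisely the claimed constant.
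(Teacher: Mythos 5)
First, note that the paper itself contains no proof of this statement: it is imported verbatim as Theorem 4.3 of \cite{paulin2016efron} and used as a black box, so your proposal can only be measured against the original proof of Paulin, Mackey and Tropp. At the level of architecture your outline does track that proof: it is an exchangeable-pairs argument; the antisymmetric symmetrization identity you state is valid (each pair $(\Vec{x},\Vec{x}^{(t,L)})$ is exchangeable, so swapping the two copies inside the linear trace functional and averaging gives it); a mean value trace inequality is a genuine ingredient; the decoupling of $V_U$ from $e^{\theta\hat{U}}$ via entropy duality is indeed what introduces the free parameter $\psi$ and the window $2\theta^2<\psi$; and your Herbst step is calibrated correctly, since integrating $\Lambda'(\theta)\le \frac{2\theta/\psi}{1-2\theta^2/\psi}\log\Exp{\atrace e^{\psi V_U}}$ from $\Lambda(0)=0$ (with $\Lambda(\theta):=\log\Exp{\atrace e^{\theta\hat{U}}}$) gives $-\frac{1}{2}\log(1-2\theta^2/\psi)\cdot\log\Exp{\atrace e^{\psi V_U}}$, and $-\log(1-x)\le x/(1-x)$ then yields exactly the stated coefficient.

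The genuine gap is the step where you pass from the drift $\Exp{\atrace \hat{U}e^{\theta\hat{U}}}$ to a sum of per-coordinate symmetrized terms, which you justify only by saying that ``independence lets me introduce the resampled copy.'' Averaging over the resampled coordinate gives $\Etl{\hat{U}-\hat{U}^{(t,L)} \mid \Vec{x}}=U(\Vec{x})-\mathbb{E}_{(t,L)}\left[U\mid\Vec{x}\right]$, where $\mathbb{E}_{(t,L)}$ integrates out the single coordinate $(t,L)$; but single-coordinate centerings do not telescope to the global centering, i.e.\ $\sum_{t,L}\bigl(U-\mathbb{E}_{(t,L)}[U\mid\Vec{x}]\bigr)\neq U-\Exp{U}=\hat{U}$ in general (already for the scalar $f=x_1x_2$ with independent centered $x_1,x_2$, the left side equals $2f$ rather than $f$, and $U$ here is highly nonlinear because of the entrywise minimum). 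So summing your coordinate-wise bounds does not reconstruct $\Lambda'(\theta)$; the missing inequality --- a subadditivity/tensorization statement allowing the coordinate-by-coordinate symmetrization to be assembled into a bound on the derivative of the full trace m.g.f.\ --- is precisely the non-commutative heart of the original argument and does not follow from independence alone. A secondary misdirection: the mean value trace inequality should not be sought through the integral representation of $e^{\theta A}-e^{\theta B}$ combined with Golden--Thompson or Lieb; the known proof (Lemma 3.4 of \cite{mackey2014matrix}) expands the trace in simultaneous eigenbases of the two matrices, whereupon the claim reduces exactly to the scalar secant-slope bound $\frac{e^{\theta a}-e^{\theta b}}{a-b}\le\frac{\theta}{2}\left(e^{\theta a}+e^{\theta b}\right)$ --- so, contrary to your worry, this step does have a scalar analogue, while the Golden--Thompson route would not produce the symmetric two-endpoint form with the correct constant.
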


Note that the left hand side of the inequality in Theorem \ref{thm:es} is the log of the trace moment generating function (m.g.f.) of a centered random matrix $\hat{U}$: 
    $$m_{\hat{U}}(\theta) := \Exp{\atrace \exp (\theta \hat{U})}.$$
By the matrix Laplace transform method, this can be used to bound the norm of $\hat U$, as desired.

\begin{theorem} [Proposition 3.3, \cite{mackey2014matrix}] \label{thm:matrix_lap}
    Let $X \in \mathrm{H}^n$ be a centered random matrix with trace m.g.f. $m_{X}(\theta) = \Exp{\atrace \exp (\theta X)}$. For each $t \in \RR$, 
    \begin{align*}
        \Prob{\lambda_{\max}(X) \geq t} &\leq n \cdot \inf_{\theta > 0} \exp \{ - \theta t + \log m_{X}(\theta) \}, \\
        \Prob{\lambda_{\min}(X) \leq t} &\leq n \cdot \inf_{\theta < 0} \exp \{ - \theta t + \log m_{X}(\theta) \}.
    \end{align*}
\end{theorem}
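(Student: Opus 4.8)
The plan is to establish both bounds via the matrix Laplace transform (Chernoff) method, treating the $\lambda_{\min}$ statement as a mirror image of the $\lambda_{\max}$ one obtained through a sign flip. The central difficulty is that $\lambda_{\max}(\cdot)$ is a nonlinear functional of a random matrix and so does not interact gracefully with the expectation; the key idea is to replace the maximum eigenvalue by the trace of the matrix exponential, which \emph{is} linear and therefore commutes with $\Exp{\cdot}$, producing precisely the trace m.g.f. $m_X(\theta)$.

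First I would fix $\theta > 0$ and apply the exponential Markov inequality. Since $\theta > 0$, the event $\{\lambda_{\max}(X) \geq t\}$ coincides with $\{e^{\theta \lambda_{\max}(X)} \geq e^{\theta t}\}$, so Markov gives $\Prob{\lambda_{\max}(X) \geq t} \leq e^{-\theta t}\, \Exp{e^{\theta \lambda_{\max}(X)}}$. The crucial step is then a spectral observation: because $s \mapsto e^{\theta s}$ is increasing for $\theta > 0$, the largest eigenvalue of $e^{\theta X}$ equals $e^{\theta \lambda_{\max}(X)}$, and since $e^{\theta X}$ is positive definite all of its eigenvalues are positive, so its largest eigenvalue is dominated by their sum, i.e. $e^{\theta \lambda_{\max}(X)} = \lambda_{\max}(e^{\theta X}) \leq \trace e^{\theta X} = n \, \atrace e^{\theta X}$. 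Taking expectations and using linearity of the trace gives $\Exp{e^{\theta \lambda_{\max}(X)}} \leq n\, m_X(\theta)$, whence $\Prob{\lambda_{\max}(X) \geq t} \leq n \exp\{-\theta t + \log m_X(\theta)\}$. As this holds for every $\theta > 0$, taking the infimum over $\theta > 0$ yields the first claim.

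For the $\lambda_{\min}$ bound I would reduce to the previous argument by flipping the sign of $\theta$. Fixing $\theta < 0$, multiplying $\lambda_{\min}(X) \leq t$ by $\theta$ reverses the inequality to $\theta \lambda_{\min}(X) \geq \theta t$; moreover $\theta \lambda_{\min}(X) = \lambda_{\max}(\theta X)$, precisely because scaling by a negative number interchanges the smallest and largest eigenvalues. Hence $\Prob{\lambda_{\min}(X) \leq t} = \Prob{\lambda_{\max}(\theta X) \geq \theta t}$, and the identical exponential-Markov-plus-trace passage applied to $\lambda_{\max}(\theta X)$ gives $\Prob{\lambda_{\min}(X) \leq t} \leq e^{-\theta t}\, \Exp{\trace e^{\theta X}} = n \exp\{-\theta t + \log m_X(\theta)\}$, after which taking the infimum over $\theta < 0$ completes the proof. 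I expect the only genuinely nontrivial point to be the eigenvalue-to-trace passage $\lambda_{\max}(e^{\theta X}) \leq \trace e^{\theta X}$; everything else is routine Chernoff bookkeeping, and notably no convexity property of the trace exponential (a Lieb-type argument) is required for this statement, since the exponential is never split over a sum here.
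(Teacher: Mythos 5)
Your proof is correct: exponential Markov plus the spectral passage $e^{\theta\lambda_{\max}(X)}=\lambda_{\max}(e^{\theta X})\le \trace e^{\theta X}=n\,\atrace e^{\theta X}$ for $\theta>0$, together with the reduction $\theta\lambda_{\min}(X)=\lambda_{\max}(\theta X)$ for $\theta<0$, is exactly the standard matrix Laplace transform argument. The paper does not prove this statement itself but imports it verbatim from Proposition 3.3 of \cite{mackey2014matrix}, and your argument coincides with the proof in that reference, including your accurate closing remark that no Lieb-type convexity is needed here, since that machinery only enters when the trace m.g.f.\ of a sum of independent matrices is bounded (as in Lemma \ref{lem:subaddCGF}).
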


To summarize, if we can find an upper bound on $\log \Exp{ \atrace e^{\psi V_U}}$ on the right hand side of the inequality in Theorem \ref{thm:es}, then combining it with the two bounds in Theorem \ref{thm:matrix_lap}, we  obtain a  probability bound on $\norm{\hat{U}(\Vec{x})}_2$, which is  the spectral norm of the difference between the realized and expected adjacency matrix in the unweighted SUGMs.\footnote{Recall that for a symmetrix matrix $X$, $\|X\|_2\le t$ if and only if $\lambda_{\max}(X) \leq t$ and $\lambda_{\min}(X) \geq -t$. } This is the result discussed in the next section.

\subsection{The Unweighted Subgraph Generated Model (uSUGM)} \label{section:uSUGM}

\begin{proposition}\label{prop:main_uSUGM_ES}
    Let $G_u$ be a random graph of size $n$ generated by the $\text{uSUGM}(n,T,p)$ with subgraph type set $T$ and probabilities $p(\cdot,\cdot)$. Let $A_u$ be the adjacency matrix of $G_u$. Denote $\Delta_u := \norm{\Exp{A_u}}_{\infty}$ as the maximum expected degree, and $M := \max_{t \in T} \{m_t\}$ as the max subgraph size. Let $\epsilon > 0$, suppose that for $n$ sufficiently large,
    \begin{equation*} \label{Assumption:P2_A2}
        \frac{\Delta_w}{\Delta_u} \leq \mu_{T,p} \tag{A2}
    \end{equation*}
    for some scalar $\mu_{T,p}$  independent of the size $n$, and 
    \begin{equation*} \label{Assumption:P2_A3}
        \Delta_u > \frac{16}{\mu_{T,p}} \ln{(2n/\epsilon)} \tag{A3}.
    \end{equation*}
    Then with probability at least $1-\epsilon$, for $n$ sufficiently large, 
    $$\norm{A_u - \Exp{A_u}}_2 \leq 4M^2\sqrt{\mu_{T,p} \Delta_u \ln{(2n/\epsilon)}}.$$
\end{proposition}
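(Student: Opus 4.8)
The plan is to run the machinery already assembled in Section~\ref{section:ES_setup}: bound the log trace m.g.f.\ of the variance proxy $V_U$, push it through the matrix Efron--Stein inequality (Theorem~\ref{thm:es}) to control $m_{\hat U}(\theta)$, and then convert this into a tail bound on $\norm{\hat U}_2 = \norm{A_u - \Exp{A_u}}_2$ via the matrix Laplace transform method (Theorem~\ref{thm:matrix_lap}), optimizing the free parameters at the end. As the excerpt itself flags, essentially all the work is in bounding $\log\Exp{\atrace e^{\psi V_U}}$.

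\textbf{Computing the variance proxy.} First I would make $V_U$ explicit. Centering cancels in the increment, so $\hat U - \hat U^{(t,L)} = U(\Vec{x}) - U(\Vec{x}^{(t,L)})$. Let $W^{-(t,L)}_{ij}$ denote the number of times edge $(i,j)$ is generated by subgraphs \emph{other} than $g_t(L)$. Because $\omega$ clips at $1$, the $(i,j)$ entry of this increment is nonzero only when $(i,j)\in g_t(L)$ and $W^{-(t,L)}_{ij}=0$, i.e.\ $g_t(L)$ is the \emph{unique} generator of that edge, and then it equals $x(t,L)-{x}'(t,L)$. Hence $\hat U - \hat U^{(t,L)} = (x(t,L)-{x}'(t,L))\,B^{(t,L)}$ for a symmetric $\{0,1\}$ ``unique-edge'' matrix $B^{(t,L)}$ that depends on $\Vec{x}$ but not on the coordinate $x(t,L)$, giving the positive semidefinite expression
$$V_U(\Vec{x}) = \tfrac12\sum_{t\in T}\sum_{L}\Exp{(x(t,L)-{x}'(t,L))^2\mid\Vec{x}}\,\bigl(B^{(t,L)}\bigr)^2 .$$

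\textbf{Bounding the trace m.g.f.\ (the crux).} Since $V_U\succeq 0$ has nonnegative entries, $\atrace e^{\psi V_U}\le e^{\psi\lambda_{\max}(V_U)}\le e^{\psi\norm{V_U}_\infty}$ and $\norm{V_U}_\infty = \max_i Y_i$ with $Y_i := (V_U\mathbf 1)_i$, so a union bound gives $\Exp{\atrace e^{\psi V_U}}\le n\max_i\Exp{e^{\psi Y_i}}$. Each subgraph spans at most $M$ nodes, so the row sums of $(B^{(t,L)})^2$ are at most $M$ and $Y_i \le \tfrac{M}{2}Z_i$, where $Z_i := \sum_{t,L}\Exp{(x(t,L)-{x}'(t,L))^2\mid\Vec{x}}\,\deg_{B^{(t,L)}}(i)$. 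Splitting the conditional second moment according to $x(t,L)\in\{0,1\}$ and using the identity $\sum_{t,L}x(t,L)\deg_{A(t,L)}(i) = \deg_{A_w}(i)$ together with $\sum_{t,L}p(t,L)\deg_{A(t,L)}(i)=\deg_{\Exp{A_w}}(i)\le\Delta_w$, I would bound $Z_i \le \deg_{A_w}(i) + \Delta_w$. Now $\deg_{A_w}(i)$ is a sum of independent $[0,M]$-valued random variables of total mean at most $\Delta_w$, so a Poisson-type scalar m.g.f.\ estimate yields $\Exp{e^{\psi Y_i}} \le \exp\bigl(\tfrac{\psi M}{2}\Delta_w + \tfrac{\Delta_w}{M}(e^{\psi M^2/2}-1)\bigr)$ and therefore
$$\log\Exp{\atrace e^{\psi V_U}} \le \ln n + \tfrac{\psi M}{2}\Delta_w + \tfrac{\Delta_w}{M}\bigl(e^{\psi M^2/2}-1\bigr).$$
This is the main obstacle: the nonlinear clipping $\omega$ is exactly what forces the unique-edge description of $B^{(t,L)}$, and the reduction of the matrix trace m.g.f.\ to the scalar m.g.f.\ of the weighted degree $\deg_{A_w}(i)$ is what makes the right-hand side tractable. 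Finally \eqref{Assumption:P2_A2} lets me replace $\Delta_w$ by $\mu_{T,p}\Delta_u$ throughout.

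\textbf{Assembly.} Feeding this bound into Theorem~\ref{thm:es} controls $\log m_{\hat U}(\theta)$ for any $\abs{\theta}\le\sqrt{\psi/2}$; on the admissible range of $\psi$ the factor $e^{\psi M^2/2}-1$ is comparable to $\psi M^2/2$, so the resulting bound on $\log m_{\hat U}(\theta)$ is of sub-Gaussian type in $\theta$ with a variance scale proportional to $\mu_{T,p}\Delta_u$ up to $M$-dependent constants. Applying Theorem~\ref{thm:matrix_lap} to $\lambda_{\max}(\hat U)$ and, by the symmetry $\theta\mapsto-\theta$, to $\lambda_{\min}(\hat U)$ gives $\Prob{\norm{\hat U}_2\ge a}\le 2n\exp\{-\theta a + \log m_{\hat U}(\theta)\}$. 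Choosing the ratio $2\theta^2/\psi$ to equal a fixed constant and then optimizing $\psi$ balances the variance term against $\theta a$ and produces a Gaussian-type exponent $-c\,a^2/(\text{poly}(M)\,\mu_{T,p}\Delta_u)$; \eqref{Assumption:P2_A3} is precisely what guarantees that the optimizing $\psi$ remains in the range where both the linearization and the constraint $\abs{\theta}\le\sqrt{\psi/2}$ hold. Setting the resulting probability bound equal to $\epsilon$ and solving for $a$, while tracking the powers of $M$ accumulated from the row-sum estimate, the factor $\tfrac{M}{2}$, and the exponent $\psi M^2/2$, yields $a = 4M^2\sqrt{\mu_{T,p}\Delta_u\ln(2n/\epsilon)}$, completing the proof.
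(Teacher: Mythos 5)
Your argument is correct, and while the outer skeleton (Efron--Stein via Theorem~\ref{thm:es}, matrix Laplace transform via Theorem~\ref{thm:matrix_lap}, and the Mackey-style optimization of $\theta,\psi$) matches the paper, your treatment of the crux --- bounding $\log\Exp{\atrace e^{\psi V_U}}$ --- takes a genuinely different route. The paper never estimates $V_U$ directly: it proves the entrywise comparison $0 \leq_{\text{e}} V_U \leq_{\text{e}} V_W$ (Lemma~\ref{lem:VRVS_entry}; your unique-edge matrix $B^{(t,L)}$ appears there as $\mathcal{C}_U^{(t,L)}$, and the comparison is exactly $B^{(t,L)} \leq_{\text{e}} A(t,L)$), transfers it to the trace m.g.f.\ (Lemma~\ref{lem:VRVS_overall}), and then bounds $\log\Exp{\atrace e^{\psi V_W}} \leq (e-1)\Delta_w M \psi$ by matrix-analytic means (Lemma~\ref{lem:bound_weight_proxy}): Tropp's subadditivity of matrix cumulant generating functions applied to the independent summands $D_W^{(t,L)}$, concavity and monotonicity of the matrix logarithm, the matrix Chernoff m.g.f.\ lemma, and finally the deterministic variance estimate $v^2 \leq \Delta_w M$ recycled from the proof of Proposition~\ref{prop:main_wSUGM_easy}. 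You instead collapse the matrix problem to scalars at once: $\atrace e^{\psi V_U} \leq e^{\psi \norm{V_U}_\infty}$, a union bound over rows, the majorization $\deg_{B^{(t,L)}}(i) \leq \deg_{A(t,L)}(i)$ (the scalar shadow of Lemma~\ref{lem:VRVS_entry}, and essential here, since the $B^{(t,L)}$ are dependent across $(t,L)$ while the $A(t,L)$ are deterministic), and an elementary Bennett/Poisson m.g.f.\ bound on $\deg_{A_w}(i)$, a sum of independent $[0,M]$-bounded variables. What each buys: your route needs no matrix Chernoff machinery, but the union bound costs an additive $\ln n$, so your m.g.f.\ bound is not dimension-free and closes only because Assumption~\eqref{Assumption:P2_A3} forces $\ln(2n/\epsilon) < \mu_{T,p}\Delta_u/16$, entangling the variance estimate with the tail-level assumption; the paper's dimension-free bound is cleaner and reusable --- it yields Corollary~\ref{prop:main_wSUGM_ES} for the weighted model for free.

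Two repairs are needed to make your write-up airtight. First, the claim that ``the row sums of $(B^{(t,L)})^2$ are at most $M$'' is wrong as stated (they can be as large as $(m_t-1)^2$); what your inequality $Y_i \leq \tfrac{M}{2} Z_i$ actually requires, and what is true, is that the $i$-th row sum of $(B^{(t,L)})^2$ equals $\sum_k B^{(t,L)}_{ik}\deg_{B^{(t,L)}}(k) \leq (M-1)\deg_{B^{(t,L)}}(i)$. Second, the final constant-tracking is asserted rather than derived, but it does close: taking $\psi = 1/(2M^2)$ as in the paper, your bound gives $\log\Exp{\atrace e^{\psi V_U}} \leq \ln n + \tfrac{\Delta_w}{4M} + (e^{1/4}-1)\tfrac{\Delta_w}{M}$, hence an effective $\chi' := \psi^{-1}\log\Exp{\atrace e^{\psi V_U}} \leq 2M^2 \ln n + 1.07\, M \Delta_w \leq \mu_{T,p}\Delta_u\bigl(\tfrac{M^2}{8} + 1.07\,M\bigr)$ after invoking \eqref{Assumption:P2_A2} and \eqref{Assumption:P2_A3}. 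Then $4\chi' \leq 8 M^4 \mu_{T,p}\Delta_u$ for all $M \geq 2$, and with $c = M^2$ and $t = 4M^2\sqrt{\mu_{T,p}\Delta_u \ln(2n/\epsilon)} < M^2 \mu_{T,p}\Delta_u$ (again by \eqref{Assumption:P2_A3}) one gets $8ct < 8M^4\mu_{T,p}\Delta_u$, so each tail is at most $n\exp\{-t^2/(16 M^4 \mu_{T,p}\Delta_u)\} = \epsilon/2$, recovering the stated bound $4M^2\sqrt{\mu_{T,p}\Delta_u\ln(2n/\epsilon)}$ exactly as in the paper's endgame.
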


We next present a summary of how the proof of this proposition is constructed. The full proof is detailed in Appendix \ref{appendix:main_uSUGM_ES}. For a fixed size $n$, finite set of subgraph types $T$, and probabilities $p(\cdot,\cdot)$, we will be using both the $\text{uSUGM}(n,T,p)$ and the $\text{wSUGM}(n,T,p)$. 

First, we construct the variance proxy for the weighed SUGM following a similar procedure as in Section \ref{section:ES_setup}. To this end, recall that 
$$W(\Vec{x}) = \sum_{t \in T} \sum_{L \in \mathfrak{S}(n, m_t)} x(t,L) A(t,L)$$
 constructs the adjacency matrix of the realized network from the wSUGM. From this, we can  construct the centered random matrices 
\begin{equation}\label{W}\begin{aligned}
    \hat{W} &:= \hat{W}(\Vec{x}) := W(\Vec{x}) - \Exp{W(\Vec{x})} \\
    \hat{W}^{(t,L)} &:= \hat{W}(\Vec{x}^{(t,L)}) = W(\Vec{x}^{(t,L)}) - \Exp{W(\Vec{x}^{(t,L)})}
\end{aligned}\end{equation}
and the  variance proxy for the weighted case
\begin{align*}
    V_W(\Vec{x}) := \frac{1}{2} \sum_{t \in T} \sum_{L \in \mathfrak{S}(n, m_t)} \Etl{(\hat{W}-\hat{W}^{(t,L)})^2 | \Vec{x}},
\end{align*}
as done in Section \ref{section:ES_setup} for the unweighted case.

The variance proxies for the weighted and unweighted cases satisfy the following relation.

\begin{lemma} \label{lem:VRVS_entry}
    For any $\Vec{x} \in \mathcal{Z}$, we have
    \begin{align*}
        0 \leq_{\text{e}} V_U(\Vec{x}) \leq_{\text{e}} V_W(\Vec{x})
    \end{align*}
    where $\leq_{\text{e}}$ stands for the entry-wise $\leq$ relationship.
\end{lemma}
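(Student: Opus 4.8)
The plan is to establish the inequality separately for each summand indexed by $(t,L)$, at the level of individual matrix entries, and then to push the resulting order relation through the conditional expectation and the sum that define $V_U$ and $V_W$. Since entry-wise inequalities between symmetric matrices are preserved both by (conditional) expectation and by summation with nonnegative coefficients, it suffices to prove, for every fixed realization of $\vec{x}$ and of the resampled coordinate $x'(t,L)$,
$$0 \leq_{\text{e}} (\hat{U}-\hat{U}^{(t,L)})^2 \leq_{\text{e}} (\hat{W}-\hat{W}^{(t,L)})^2.$$

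First I would remove the centering. As noted before the statement, $\Exp{U(\vec{x}^{(t,L)})}=\Exp{U(\vec{x})}$ and likewise for $W$, so the mean terms cancel and $\hat{U}-\hat{U}^{(t,L)} = U(\vec{x})-U(\vec{x}^{(t,L)})$, $\hat{W}-\hat{W}^{(t,L)} = W(\vec{x})-W(\vec{x}^{(t,L)})$. The weighted difference is immediate: only the $(t,L)$ coordinate changes, so $W(\vec{x})-W(\vec{x}^{(t,L)}) = \delta\,A(t,L)$ with $\delta := x(t,L)-x'(t,L) \in \{-1,0,1\}$, whence $(\hat{W}-\hat{W}^{(t,L)})^2 = \delta^2 A(t,L)^2$.

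The crux is the unweighted difference, where the clipping $\omega(\cdot)=\min(\cdot,1)$ is nonlinear. Let $S_{ij}$ be the number of times edge $(i,j)$ is generated by subgraphs \emph{other} than $g_t(L)$; this count is unchanged when the $(t,L)$ coordinate is resampled. On entries with $A(t,L)_{ij}=0$ the two clipped matrices agree, so the difference is zero there. On an edge of $g_t(L)$, a short case analysis over $\delta\in\{-1,0,1\}$ and over whether $S_{ij}=0$ gives $U(\vec{x})_{ij}-U(\vec{x}^{(t,L)})_{ij} = \delta\,\mathbf{1}[S_{ij}=0]$. Hence $\hat{U}-\hat{U}^{(t,L)} = \delta\,B$, where $B$ is the symmetric $0/1$ matrix $B_{ij}=A(t,L)_{ij}\,\mathbf{1}[S_{ij}=0]$, so in particular $0 \leq_{\text{e}} B \leq_{\text{e}} A(t,L)$.

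It then remains to transfer this entry-wise domination to the matrix squares. Because $B$ and $A(t,L)$ are entry-wise nonnegative with $B\leq_{\text{e}} A(t,L)$, each term of the product obeys $B_{ik}B_{kj}\leq A(t,L)_{ik}A(t,L)_{kj}$, so $0\leq_{\text{e}} B^2 \leq_{\text{e}} A(t,L)^2$; multiplying by $\delta^2\geq 0$ delivers the per-summand bound. Applying $\Etl{\cdot\mid\vec{x}}$, summing over all $(t,L)$ and halving then preserves the order and yields $0\leq_{\text{e}} V_U(\vec{x})\leq_{\text{e}} V_W(\vec{x})$. \textbf{The main obstacle} is exactly this clipping step: the content of the lemma is that the nonlinearity $\omega$ can only \emph{remove} edges from the difference matrix — namely those already generated by some other subgraph ($S_{ij}\geq 1$) — which is what forces $B\leq_{\text{e}} A(t,L)$; the subsequent passage $B\leq_{\text{e}} A(t,L)\Rightarrow B^2\leq_{\text{e}} A(t,L)^2$ rests essentially on nonnegativity of all entries and would fail for matrices of mixed sign.
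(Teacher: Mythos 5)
Your proof is correct and follows essentially the same route as the paper: the paper factors the difference as $(x(t,L)-x'(t,L))\,\mathcal{C}_U^{(t,L)}(\vec{x})$, where $\mathcal{C}_U^{(t,L)}(\vec{x}) := U(\dot{\vec{x}}^{(t,L)})-U(\ddot{\vec{x}}^{(t,L)})$ is a $0$--$1$ matrix supported on the edges of $g_t(L)$, and then uses exactly your observation that $0\leq_{\text{e}} \mathcal{C}_U^{(t,L)} \leq_{\text{e}} A(t,L)$ with nonnegative entries forces the same ordering of the squares. Your matrix $B_{ij}=A(t,L)_{ij}\mathbf{1}[S_{ij}=0]$ is precisely the paper's $\mathcal{C}_U^{(t,L)}(\vec{x})$, made explicit via the count of edges generated by other subgraphs.
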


This allows us to upper bound the log trace m.g.f. of the unweighted variance proxy, $\log \Exp{ \atrace e^{\psi V_U}}$, with the log trace m.g.f. of the weighted variance proxy, $\log \Exp{ \atrace e^{\psi V_W}}$, which is easier to study. 

\begin{lemma} \label{lem:VRVS_overall}
    \begin{align*}
        \log \Exp{ \atrace e^{\psi V_U}} \leq \log \Exp{ \atrace e^{\psi V_W}}.
    \end{align*}
\end{lemma}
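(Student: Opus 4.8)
The plan is to reduce the statement to a \emph{pointwise} (in $\Vec{x}$) comparison of the normalized traces $\atrace e^{\psi V_U(\Vec{x})}$ and $\atrace e^{\psi V_W(\Vec{x})}$, after which the claim follows immediately by monotonicity of expectation and of the logarithm. The conceptual difficulty is that the entrywise order $0 \leq_{\text{e}} V_U \leq_{\text{e}} V_W$ supplied by Lemma~\ref{lem:VRVS_entry} does \emph{not} imply the Loewner (positive semidefinite) order $V_U \preceq V_W$, so one cannot invoke operator monotonicity of the trace exponential directly. Instead I would exploit the nonnegativity of all entries together with the power-series expansion of the matrix exponential, which turns each trace into a sum of nonnegative monomials in the matrix entries.

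First, fix any $\Vec{x} \in \mathcal{Z}$, so that $V_U$ and $V_W$ are fixed symmetric matrices with nonnegative entries by Lemma~\ref{lem:VRVS_entry}. Since $n$ is finite, both matrices are bounded and the series $e^{\psi V} = \sum_{k \geq 0} \frac{\psi^k}{k!} V^k$ converges absolutely; by linearity of the trace,
$$\atrace e^{\psi V} = \frac{1}{n} \sum_{k=0}^{\infty} \frac{\psi^k}{k!}\, \trace(V^k),$$
where every weight $\psi^k/k!$ is strictly positive because $\psi > 0$. Thus it suffices to compare the individual power traces $\trace(V_U^k)$ and $\trace(V_W^k)$.

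Second, I would expand each power trace combinatorially as
$$\trace(V^k) = \sum_{i_1, \ldots, i_k} V_{i_1 i_2} V_{i_2 i_3} \cdots V_{i_k i_1},$$
a sum in which each summand is a product of $k$ entries of $V$. Because all entries of $V_U$ and $V_W$ are nonnegative and $V_U \leq_{\text{e}} V_W$, each such product satisfies $(V_U)_{i_1 i_2} \cdots (V_U)_{i_k i_1} \leq (V_W)_{i_1 i_2} \cdots (V_W)_{i_k i_1}$, since it is a product of nonnegative factors dominated term by term. Summing over all index tuples (a sum of nonnegative quantities) then yields $\trace(V_U^k) \leq \trace(V_W^k)$ for every $k \geq 0$.

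Combining these two observations with the positivity of the weights $\psi^k/k!$ gives the pointwise bound $\atrace e^{\psi V_U(\Vec{x})} \leq \atrace e^{\psi V_W(\Vec{x})}$ for every $\Vec{x}$. Taking expectations preserves this inequality, and applying the increasing logarithm delivers $\log \Exp{\atrace e^{\psi V_U}} \leq \log \Exp{\atrace e^{\psi V_W}}$, as desired. The only step requiring genuine care is the passage from the entrywise order to the trace inequality: it is precisely the expansion of $\trace(V^k)$ into nonnegative monomials in the matrix entries that allows entrywise domination to survive, thereby sidestepping the fact that the entrywise and Loewner orders are in general incomparable.
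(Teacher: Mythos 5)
Your proposal is correct and follows essentially the same route as the paper: the paper likewise deduces $0 \leq_{\text{e}} (\psi V_U)^k \leq_{\text{e}} (\psi V_W)^k$ from entrywise nonnegativity, passes this through the power series of the matrix exponential to get $\atrace \exp(\psi V_U(\Vec{x})) \leq \atrace \exp(\psi V_W(\Vec{x}))$ pointwise, and then applies expectation and the logarithm. Your explicit expansion of $\trace(V^k)$ into nonnegative cyclic monomials merely spells out the step the paper asserts, and your remark that the entrywise order does not imply the Loewner order correctly identifies why this series argument, rather than operator monotonicity, is the right tool.
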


We next derive an upper bound on the weighted variance proxy using properties of the matrix logarithm and matrix exponential operators. 

\begin{lemma} \label{lem:bound_weight_proxy}
For any $0<\psi < \frac{1}{M^2}$, we have 
\begin{align*}
    \log \Exp{ \atrace e^{\psi V_W}} &\leq (e-1) \Delta_w M  \cdot \psi
\end{align*}
where $\Delta_w := \norm{\Exp{A_w}}_{\infty}$.
\end{lemma}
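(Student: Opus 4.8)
The plan is to reduce the matrix variance proxy $V_W$ to an explicit expression, dominate it in the positive semidefinite (Loewner) order $\preceq$ by a \emph{diagonal} matrix, and then exploit the resulting product structure of the trace moment generating function. First I would compute the coordinate differences. Since $\Exp{W(\Vec{x}^{(t,L)})} = \Exp{W(\Vec{x})}$ and $\Vec{x}, \Vec{x}^{(t,L)}$ differ only in coordinate $(t,L)$, we have $\hat{W} - \hat{W}^{(t,L)} = W(\Vec{x}) - W(\Vec{x}^{(t,L)}) = (x(t,L) - x'(t,L)) A(t,L)$, so that $(\hat{W} - \hat{W}^{(t,L)})^2 = (x(t,L) - x'(t,L))^2 A(t,L)^2$. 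Taking the inner conditional expectation leaves a scalar coefficient $c(t,L) := \Etl{(x(t,L) - x'(t,L))^2 | \Vec{x}}$, which equals $p(t,L)$ when $x(t,L) = 0$ and $1 - p(t,L)$ when $x(t,L) = 1$; in particular $c(t,L) \in [0,1]$ and $\Exp{c(t,L)} = 2p(t,L)(1 - p(t,L))$. Hence $V_W(\Vec{x}) = \tfrac12 \sum_{t,L} c(t,L) A(t,L)^2$ is a positive semidefinite random matrix whose coefficients $c(t,L)$ are mutually independent, each being a deterministic function of the independent Bernoulli variable $x(t,L)$.

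Next I would dominate $V_W$ by a diagonal matrix. Writing $D(t,L)$ for the diagonal degree matrix of the subgraph $g_t(L)$ (padded with zeros outside $L$), a Cauchy--Schwarz estimate on $\norm{A(t,L) v}_2^2$ gives the Loewner bound $A(t,L)^2 \preceq (M-1) D(t,L) \preceq M\, D(t,L)$, using that every degree in a subgraph of size $m_t \le M$ is at most $M-1$. Summing yields $V_W \preceq M \tilde{D}$ with the diagonal matrix $\tilde{D} := \tfrac12 \sum_{t,L} c(t,L) D(t,L)$. The crucial point --- and where I expect the main obstacle to lie --- is that the matrix exponential is \emph{not} operator monotone, so $V_W \preceq M\tilde{D}$ does not give $e^{\psi V_W}\preceq e^{\psi M \tilde{D}}$; however, the \emph{trace} exponential is monotone under the Loewner order (via Courant--Fischer, $A \preceq B$ forces $\lambda_k(A) \le \lambda_k(B)$ for every $k$, hence $\trace e^{A} \le \trace e^{B}$). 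This lets me pass to the diagonal surrogate: pointwise in $\Vec{x}$, $\atrace e^{\psi V_W} \le \atrace e^{\psi M \tilde{D}}$, and taking expectations reduces the problem to bounding $\Exp{\atrace e^{\psi M \tilde{D}}}$.

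Finally I would evaluate the diagonal m.g.f. Because $\tilde{D}$ is diagonal, $\atrace e^{\psi M \tilde{D}} = \tfrac1n \sum_i e^{\psi M \tilde{D}_{ii}}$ with $\tilde{D}_{ii} = \tfrac12 \sum_{t,L} c(t,L)\deg_i(g_t(L))$, and by independence of the $c(t,L)$ the expectation of each diagonal exponential factorizes over $(t,L)$. For a single factor with $\alpha := \tfrac{\psi M}{2}\deg_i(g_t(L))$, convexity gives $\Exp{e^{\alpha c(t,L)}} \le 1 + (e^{\alpha} - 1)\Exp{c(t,L)}$; since $\psi < 1/M^2$ and $\deg_i(g_t(L)) \le M-1$ we have $\alpha < \tfrac12 < 1$, so the further convexity bound $e^{\alpha}-1 \le (e-1)\alpha$ applies. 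Combining these with $\log(1+z) \le z$ and $\Exp{c(t,L)} = 2p(t,L)(1-p(t,L)) \le 2 p(t,L)$, the factor $\tfrac12$ in $\tilde D$ cancels the factor $2$ in $\Exp{c(t,L)}$, and for each $i$ I obtain $\log \Exp{e^{\psi M \tilde{D}_{ii}}} \le (e-1)\psi M \sum_{t,L} p(t,L)\deg_i(g_t(L))$. The inner sum is exactly the expected weighted degree $(\Exp{A_w}\mathbf{1})_i \le \Delta_w$, so $\Exp{e^{\psi M \tilde{D}_{ii}}} \le e^{(e-1)\psi M \Delta_w}$ for every $i$; averaging over $i$ and taking logarithms yields $\log \Exp{\atrace e^{\psi V_W}} \le (e-1)\Delta_w M\,\psi$, as claimed. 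Besides the trace-monotonicity step, the only other thing to verify carefully is that the hypothesis $\psi < 1/M^2$ is precisely what keeps every scalar exponent $\alpha$ below $1$, so that both convexity inequalities remain valid.
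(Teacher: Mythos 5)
Your proposal is correct, but it reaches the bound by a genuinely different route than the paper. The paper keeps the matrix structure intact throughout: it decomposes $\psi V_W$ into the independent random matrices $\frac{\psi}{2}D_W^{(t,L)} = \frac{\psi}{2}c(t,L)A(t,L)^2$, applies Tropp's subadditivity lemma for matrix cumulant generating functions ($\Exp{\trace \exp \sum_k \theta X_k} \leq \trace \exp \sum_k \log \Exp{e^{\theta X_k}}$), then uses concavity and monotonicity of the matrix logarithm together with the matrix Chernoff bound $\Exp{e^{X}} \preceq I + (e-1)\Exp{X}$ --- this is where $\psi < 1/M^2$ enters tightly, as it guarantees $\lambda_{\max}\left(\frac{\psi}{2}D_W^{(t,L)}\right) < \psi M^2 < 1$ almost surely --- and finally extracts the constant through the spectral-norm variance estimate $v^2 \leq \Delta_w M$ recycled from the proof of Proposition \ref{prop:main_wSUGM_easy}. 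You instead dominate $V_W$ surely in the Loewner order by the diagonal matrix $M\tilde{D}$ via the (correct) Cauchy--Schwarz estimate $A(t,L)^2 \preceq (m_t-1)D(t,L)$, pass through monotonicity of the trace exponential (a tool the paper also invokes, citing Petz), and then exploit diagonality so that the trace m.g.f.\ factorizes over the independent scalars $c(t,L)$; from there only scalar convexity inequalities are needed, and the row-sum identity $\sum_{t,L} p(t,L)\deg_i(g_t(L)) = (\Exp{A_w}\mathbf{1})_i \leq \Delta_w$ plays the role of the paper's $v^2 \leq \Delta_w M$, with your degree bound $\deg_i(g_t(L)) \leq M-1$ supplying the factor $M$. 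Your argument is more elementary --- no matrix-log concavity and no cumulant subadditivity, just Weyl monotonicity of eigenvalues --- and it uses the hypothesis $\psi < 1/M^2$ with slack (your exponents satisfy $\alpha < (M-1)/(2M) < 1/2$, so even a somewhat larger $\psi$ would work), whereas the paper's use of the threshold is essentially sharp; the trade-off is that your diagonal-surrogate trick is tailored to summands that are squares of bounded-degree adjacency matrices, while the paper's pipeline is the generic matrix-Bernstein machinery that would apply to arbitrary bounded independent positive semidefinite summands. One point worth stating explicitly in a polished write-up: the domination $V_W \preceq M\tilde{D}$ holds pointwise in the sample because every coefficient $c(t,L)$ is nonnegative, and this sure (not merely in-expectation) comparison is what licenses taking expectations after the trace-monotonicity step.
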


Lastly, we combine all the lemmas together and apply Theorem \ref{thm:matrix_lap}. This is done in Appendix~\ref{appendix:main_uSUGM_ES}, thus concluding the proof of Proposition \ref{prop:main_uSUGM_ES}. 

It is worth noticing that as a byproduct of this proof, we find another way to show  convergence for the wSUGM.

\begin{corollary}\label{prop:main_wSUGM_ES}
    Let $G_w$ be a random graph of size $n$ generated by the $\text{wSUGM}(n,T,p)$ with subgraph type set $T$ and probabilities $p(\cdot,\cdot)$. Let $A_w$ be the adjacency matrix of $G_w$. Denote $\Delta_w := \norm{\Exp{A_w}}_{\infty}$ as the maximum expected degree, and $M := \max_{t \in T} \{m_t\}$ as the max subgraph size. Let $\epsilon > 0$, suppose that for $n$ sufficiently large,
    \begin{equation*} \label{Assumption:P3_A4}
        \Delta_w > 16 \ln{(2n/\epsilon)} \tag{A4}.
    \end{equation*}
    Then with probability at least $1-\epsilon$, for $n$ sufficiently large, 
    $$\norm{A_w - \Exp{A_w}}_2 \leq 4M^2\sqrt{\Delta_w \ln{(2n/\epsilon)}}.$$
\end{corollary}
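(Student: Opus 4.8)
The plan is to run exactly the machinery assembled for the proof of Proposition~\ref{prop:main_uSUGM_ES}, but applied directly to the centered weighted adjacency matrix $\hat{W}$ and its variance proxy $V_W$ rather than to $\hat{U}$ and $V_U$. The crucial simplification is that in the weighted case we already control $V_W$ directly through Lemma~\ref{lem:bound_weight_proxy}, so the comparison results (Lemmas~\ref{lem:VRVS_entry} and \ref{lem:VRVS_overall}) are never invoked and, consequently, neither is assumption~\eqref{Assumption:P2_A2}. This is why the final radius stays expressed in terms of $\Delta_w$ and the sole hypothesis required is \eqref{Assumption:P3_A4}, which is precisely \eqref{Assumption:P2_A3} specialized to $\mu_{T,p}=1$ with $\Delta_u$ replaced by $\Delta_w$.

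First I would note that, for a fixed $n$, $\hat{W}$ is a finite sum of bounded symmetric matrices minus its mean, hence bounded, so the matrix Efron--Stein inequality (Theorem~\ref{thm:es}) applies to $\hat{W}$ with variance proxy $V_W$, giving for every $\psi>0$ and $|\theta|\le\sqrt{\psi/2}$
$$\log m_{\hat{W}}(\theta) \le \frac{\theta^2/\psi}{1-2\theta^2/\psi}\,\log \Exp{\atrace e^{\psi V_W}}.$$
Substituting the bound of Lemma~\ref{lem:bound_weight_proxy}, valid for $0<\psi<1/M^2$, yields
$$\log m_{\hat{W}}(\theta) \le \frac{\theta^2/\psi}{1-2\theta^2/\psi}\,(e-1)\Delta_w M\psi = \frac{(e-1)\Delta_w M\,\theta^2}{1-2\theta^2/\psi}.$$
Feeding this into the matrix Laplace transform bounds of Theorem~\ref{thm:matrix_lap} controls both $\lambda_{\max}(\hat{W})$ and $\lambda_{\min}(\hat{W})$; since the right-hand side depends on $\theta$ only through $\theta^2$, the negative-$\theta$ branch yields an identical estimate, and a union bound over $\{\lambda_{\max}(\hat{W})\ge a\}\cup\{\lambda_{\min}(\hat{W})\le -a\}$ gives two-sided control of $\norm{\hat{W}}_2$ via the elementary fact that for symmetric $X$ one has $\norm{X}_2\le t$ iff $\lambda_{\max}(X)\le t$ and $\lambda_{\min}(X)\ge -t$.

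I expect the main obstacle to be the joint choice of $\psi$ and $\theta$ and the verification that it is admissible under \eqref{Assumption:P3_A4}. The tempting choice $\psi=4\theta^2$ (which fixes the denominator at $1/2$) turns out to be too lossy and produces only a linear-in-$\Delta_w$ tail; instead I would push $\psi$ up to its largest admissible value $\psi\uparrow 1/M^2$, so that the exponent in Theorem~\ref{thm:matrix_lap} becomes
$$-\theta a + \frac{(e-1)\Delta_w M\,\theta^2}{1-2M^2\theta^2},$$
and then take $a=4M^2\sqrt{\Delta_w\ln(2n/\epsilon)}$ together with $\theta=\tfrac{1}{2M^2}\sqrt{\ln(2n/\epsilon)/\Delta_w}$. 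Under \eqref{Assumption:P3_A4} one has $\ln(2n/\epsilon)/\Delta_w<1/16$, which makes $2M^2\theta^2$ negligibly small (so that both $\psi<1/M^2$ and $|\theta|\le\sqrt{\psi/2}$ hold) and forces the denominator above $31/32$; a short computation then gives $\theta a = 2\ln(2n/\epsilon)$ while the fractional term stays below $\tfrac12\ln(2n/\epsilon)$, so the whole exponent is at most $-\ln(2n/\epsilon)$. Each tail probability is therefore at most $n\,e^{-\ln(2n/\epsilon)}=\epsilon/2$, and the union bound closes the argument. The delicate point is precisely this tuning: one must keep $\psi$ near $1/M^2$ rather than coupling it rigidly to $\theta$, and then check that the generous prefactor $4M^2$ in the target radius leaves enough slack for the admissibility constraints to coexist with \eqref{Assumption:P3_A4}.
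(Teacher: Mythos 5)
Your proposal is correct and is essentially the paper's own argument: the paper obtains this corollary precisely as a byproduct of the proof of Proposition~\ref{prop:main_uSUGM_ES}, i.e.\ by running Theorem~\ref{thm:es} on $\hat{W}$ with variance proxy $V_W$, invoking Lemma~\ref{lem:bound_weight_proxy} directly (so that Lemmas~\ref{lem:VRVS_entry}--\ref{lem:VRVS_overall} and Assumption~\eqref{Assumption:P2_A2} are never needed), and closing with Theorem~\ref{thm:matrix_lap} and a union bound, which is exactly your plan. The only (immaterial) difference is the tuning: the paper fixes $\psi = \tfrac{1}{2M^2}$ and optimizes $\theta^*$ following Mackey et al.\ to get the tail $n\exp\left(-t^2/(4\chi+8ct)\right)$, whereas you take $\psi$ near $1/M^2$ with the explicit $\theta = \tfrac{1}{2M^2}\sqrt{\ln(2n/\epsilon)/\Delta_w}$ — your arithmetic checks out (with enough slack that a strict $\psi < 1/M^2$ suffices), though your aside that the coupling $\psi = 4\theta^2$ would fail is not actually true for this choice of $\theta$.
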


By comparing Proposition \ref{prop:main_wSUGM_easy} with Corollary \ref{prop:main_wSUGM_ES}, one can see that although the two bounds have the same asymptotic order, the Efron-Stein inequality method results in a worse leading coefficient due to the use of less tight relaxations while constructing the bounds. 

\subsection{Sufficient Conditions on Subgraph Generating Probabilities}\label{section:prob}

In this section, we derive sufficient conditions for Assumptions \eqref{Assumption:P1_A1}, \eqref{Assumption:P2_A2}, \eqref{Assumption:P2_A3} and \eqref{Assumption:P3_A4} to hold in terms of the subgraph generating probabilities. We start by assuming that, for each subgraph type $t \in T$, each subgraph $g_t(L)$ with a non-zero generating probability has a generating probability of the same order of magnitude with respect to $n$, that is 
\begin{multline}\label{Assumption:prob_A5}
    p(t,L) = 0 ~\text{, or }~ p(t,L) = \frac{b_t(L)}{n^{h_t}}, \\
    ~\textup{ with }~   m_t -2 < h_t < m_t-1 \text{, } b_t(L) \in [l_t,u_t] \tag{\text{A5}}
\end{multline} 
for every $L \in \mathfrak{S}(n,m_t)$ where $0 < l_t < u_t $ are some fixed constants.

We also assume that for each subgraph type $t \in T$, (1) The proportion of subgraphs $g_t(L)$ that have non-zero generating probability $p(t,L) > 0$ is lower bounded; and (2) The proportion of subgraphs $g_t(L)$ containing nodes $i,j \in L$ that have non-zero generating probability $p(t,L) > 0$ is lower bounded, for each pair of nodes $i,j \in V$. That is,
\begin{align*}\label{Assumption:prob_A6}
    ~~&\exists ~\xi_t \in (0,1)  \text{ s.t } \\
    (1) ~~ & |\{L \in \mathfrak{S}(n,m_t): p(t,L) > 0\}| \geq \xi_t \cdot |\mathfrak{S}(n,m_t)| \\
    (2) ~~&\forall~ i,j \in V,  |\{L \in \mathfrak{S}(n,m_t): i,j \in L ~,~ p(t,L) > 0\}| \\
    &~\quad\quad\geq \xi_{t} \cdot |L \in \mathfrak{S}(n,m_t): i,j \in L|. \tag{A6}
\end{align*}

We next show that assumptions \eqref{Assumption:prob_A5} and \eqref{Assumption:prob_A6} are sufficient to guarantee all previous assumptions. We provide two brief insights into the above inequalities before presenting the main lemma of this section. Let $G$ be a random graph of size $n$ generated by the $\text{SUGM}(n,T,p)$ (either weighted or unweighted):
\begin{enumerate}
    \item On one hand, the expected degree of any node in the weighted model is upper bounded by 
    \begin{align*}
        &~~~~\sum_{t \in T} {n-1 \choose m_t-1} m_t! (m_t-1) \frac{u_t}{n^{h_t}} \\
        &\leq \sum_{t \in T} {n^{m_t-1}} m_t(m_t -1 ) \frac{u_t}{n^{h_t}}\\
        &< \sum_{t \in T} n^{m_t - h_t -1} u_t m_t^2.
    \end{align*}
    Therefore, the left condition in \eqref{Assumption:prob_A5} ensures that the overall degree of any node grows linearly or sub-linearly. 

    \item On the other hand, the expected total number of subgraphs in the entire network can be lower bounded by 
    \begin{align*}
         \sum_{t \in T} \sum_{L \in \mathfrak{S}} p(t,L) &\geq  \sum_{t \in T}  \xi_t {n \choose m_t}  \frac{l_t}{n^{h_t}} \\
         & \geq \sum_{t \in T} \xi_t \left(\frac{n}{m_t}\right)^{m_t} \frac{l_t}{n^{h_t}} \\
         &= \sum_{t \in T} \xi_t n^{m_t - h_t} \frac{l_t}{{m_t}^{m_t}},
    \end{align*}
    where we used $ {n \choose k}\ge \left(\frac{n}{k}\right)^k.$
    Therefore, the right condition in \eqref{Assumption:prob_A5} ensures that the expected number of subgraphs in the entire network  grows superlinearly.
\end{enumerate}

\begin{lemma}  \label{lem:prob_frac}
    If Assumption \eqref{Assumption:prob_A5} and \eqref{Assumption:prob_A6} hold, then 
    \begin{align*}
        \text{Assumption } &\eqref{Assumption:P1_A1}\ and \ \eqref{Assumption:P3_A4} \text{ hold for wSUGM$(n,T,p)$, and } \\
        \text{Assumption } &\eqref{Assumption:P2_A2}\ and \ \eqref{Assumption:P2_A3} \text{ hold for uSUGM$(n,T,p)$.}
    \end{align*}
\end{lemma}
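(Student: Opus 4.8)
The plan is to convert the two structural hypotheses into two-sided polynomial estimates on the maximum expected degrees $\Delta_w$ and $\Delta_u$, and then observe that polynomial growth dominates the logarithmic thresholds appearing in \eqref{Assumption:P1_A1}, \eqref{Assumption:P2_A3} and \eqref{Assumption:P3_A4}, while \eqref{Assumption:prob_A5} forces the weighted and unweighted expected degrees to stay within a constant factor of each other, which yields \eqref{Assumption:P2_A2}. The exponent governing everything is $\alpha := \max_{t\in T}(m_t - 1 - h_t)$, which the bound $m_t - 2 < h_t < m_t - 1$ places strictly inside $(0,1)$.

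First I would pin down the growth rate of $\Delta_w = \norm{\Exp{A_w}}_\infty$. The upper bound is already essentially the computation in insight (1) preceding this lemma: each of the $\Theta(n^{m_t-1})$ lists through a fixed node contributes at most $(m_t-1)u_t/n^{h_t}$ to its expected degree, so $\Delta_w \le \sum_{t\in T}\Theta(n^{m_t-1-h_t}) = \Theta(n^\alpha)$. For the matching lower bound I would invoke \eqref{Assumption:prob_A6}(2): summing that inequality over all $j\neq i$ and using that every list of size $m_t$ containing $i$ contains exactly $m_t-1$ other nodes, the number of positive-probability lists through $i$ is at least $\xi_t$ times the total, i.e. $\Omega(n^{m_t-1})$. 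Since each such list contributes at least $l_t/n^{h_t}$ (every vertex of each subgraph type has degree at least $1$), we get $\Delta_w \ge \sum_{t\in T}\Omega(n^{m_t-1-h_t}) = \Omega(n^\alpha)$, hence $\Delta_w = \Theta(n^\alpha)$. Because $\alpha>0$, the quantity $n^\alpha$ eventually exceeds $\frac{4}{9}\ln(2n/\epsilon)$ and $16\ln(2n/\epsilon)$, so \eqref{Assumption:P1_A1} and \eqref{Assumption:P3_A4} hold for $n$ large.

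Next I would establish the sparsity fact that carries the unweighted case, namely that the expected weight $S_{ij}:=\Exp{(A_w)_{ij}}$ of any single edge vanishes uniformly: the edge $(i,j)$ lies in only $\Theta(n^{m_t-2})$ lists, each contributing at most $u_t/n^{h_t}$, so $S_{ij}\le\sum_{t\in T}\Theta(n^{m_t-2-h_t})\to 0$ precisely because $h_t > m_t-2$. Writing $\Exp{(A_u)_{ij}} = 1-\prod_{(t,L):(i,j)\in g_t(L)}(1-p(t,L))$, the union bound gives $\Exp{(A_u)_{ij}}\le S_{ij}$, while $\Exp{(A_u)_{ij}}\ge 1-e^{-S_{ij}}\ge S_{ij}-S_{ij}^2/2 = S_{ij}(1-S_{ij}/2)$. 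Since $\sup_{i,j}S_{ij}\to 0$, for $n$ large every edge satisfies $\Exp{(A_u)_{ij}}\ge\tfrac12\Exp{(A_w)_{ij}}$; summing over $j$ yields $\Delta_u\ge\tfrac12\Delta_w$, so $\Delta_w/\Delta_u\le 2$ and \eqref{Assumption:P2_A2} holds with the $n$-independent constant $\mu_{T,p}=2$. Finally, $\Delta_u\ge\tfrac12\Delta_w=\Theta(n^\alpha)$ with $\alpha>0$ dominates $\frac{16}{\mu_{T,p}}\ln(2n/\epsilon)=8\ln(2n/\epsilon)$, giving \eqref{Assumption:P2_A3}.

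The main obstacle is the last step, the comparison of $\Delta_u$ with $\Delta_w$, since the nonlinear ``OR'' makes these genuinely different quantities. The entire purpose of the two-sided window $m_t-2<h_t<m_t-1$ is to make the per-edge probabilities simultaneously summable to a \emph{vanishing} per-edge total (the upper side $h_t>m_t-2$, used in the sparsity estimate) yet aggregated over a node into a \emph{growing} degree (the lower side $h_t<m_t-1$, used for $\Delta_w=\Theta(n^\alpha)$). The technical heart is therefore the uniform edge-sparsity bound together with the elementary inequality $1-e^{-S}\ge S-S^2/2$ that converts it into the constant-factor comparison of degrees; the remaining counting arguments for $\Delta_w$ are routine once \eqref{Assumption:prob_A6}(2) is summed correctly.
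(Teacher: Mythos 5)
Your proposal is correct, but it reaches the crucial Assumption~\eqref{Assumption:P2_A2} by a genuinely different route than the paper. The paper (via Lemma~\ref{lem:delta_order} and Lemma~\ref{cor:binomial}) fixes the dominant type $t^* \in \arg\max_t(m_t-1-h_t)$, considers the unweighted model restricted to that single type, and extracts $\Delta_{u|t^*} \in \Theta(n^{\bar\gamma})$ from an asymptotic Taylor-expansion analysis of $1-(1-a/n^{\alpha})^{b(n-k)^{\beta}}$; combining $\Delta_u \ge \Delta_{u|t^*}$ with $\Delta_w \in \Theta(n^{\bar\gamma})$ then yields \eqref{Assumption:P2_A2} with an unspecified constant $\mu_{T,p}$. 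You instead prove the uniform per-edge sparsity $\sup_{i,j} S_{ij} \to 0$ (using only $h_t > m_t-2$) and convert it, via independence of the subgraph indicators and the inequalities $S_{ij} \ge \Exp{(A_u)_{ij}} = 1-\prod(1-p(t,L)) \ge 1-e^{-S_{ij}} \ge S_{ij}-S_{ij}^2/2$, into the entrywise comparison $\Exp{(A_u)_{ij}} \ge \tfrac12 \Exp{(A_w)_{ij}}$ for all $n$ large, hence $\Delta_u \ge \tfrac12\Delta_w$ and the explicit constant $\mu_{T,p}=2$. Your route is more elementary (it replaces the expansion lemma with one calculus inequality), treats all subgraph types simultaneously rather than only the dominant one (the paper must even special-case $t^*=\textup{links}$), and delivers a concrete constant plus a stronger per-entry conclusion; the paper's route, in exchange, exhibits the exact two-sided growth rate $\Delta_{u|t^*} \asymp n^{\bar\gamma}$ for the dominant type, though your bounds $\tfrac12\Delta_w \le \Delta_u \le \Delta_w$ recover the same rate for $\Delta_u$. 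The handling of \eqref{Assumption:P1_A1}, \eqref{Assumption:P3_A4} and \eqref{Assumption:P2_A3} (polynomial growth of $\Delta_w$ dominating the logarithmic thresholds) coincides with the paper's.

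One small slip in your routine part: the lower bound $\Delta_w = \Omega(n^{\alpha})$ asserts that each positive-probability list through node $i$ contributes at least $l_t/n^{h_t}$ to $i$'s expected degree because ``every vertex of each subgraph type has degree at least $1$.'' The paper's definition only requires the patterns $g_t$ to be \emph{nonempty} (its footnote explicitly admits non-complete, non-symmetric patterns), so a list may place $i$ on an isolated position of the pattern, and then $d(t,L)_i = 0$; a priori all positive-probability lists through $i$ could do so, since \eqref{Assumption:prob_A6} controls counts but not positions. The fix is exactly the paper's averaging step: a nonempty pattern has at least one edge, so $\sum_k d(t,L)_k \ge 2$ for every list, and bounding the maximum expected degree below by the average over the $n$ nodes gives $\Delta_w \ge \frac{1}{n}\sum_t \frac{l_t}{n^{h_t}} \cdot 2\xi_t\binom{n}{m_t} = \Omega(n^{m_t-1-h_t})$ using \eqref{Assumption:prob_A6}(1) directly (your summation of \eqref{Assumption:prob_A6}(2) over $j$, which is itself valid, is then unnecessary). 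With this one-line repair the argument is complete.
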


\section{Application To Graph Centrality Measures}

\label{section:application}

A key task in complex system analysis is the identification of key nodes or agents in a network. To tackle this task, different graph centrality measures have been proposed to quantify node importance. In our analysis, we  focus on the following commonly used centrality measures.

\begin{itemize}
        \item \textbf{Degree centrality} $c^d (\cdot)$ measures the local importance of a node $i$ based on the number of neighbors. Mathematically,
        \begin{equation*} \label{defn:degree_centrality}
            c^d (A) := A \mathbf{1}.
        \end{equation*}
        
        \item \textbf{Eigenvector centrality} $c^e_n(\cdot)$  considers not only on the number of neighbors, but also their importance.
         Mathematically, eigenvector centrality is defined as 
        \begin{equation*} \label{defn:eigen_centrality}
            c^e := \sqrt{n}{v_1} 
        \end{equation*}
        where $v_1$ is the dominant eigenvector\footnote{If $A$ is symmetric and if the associated graph is connected, the eigenvalues of $A$ are real and can be ordered as $\lambda_1 > \lambda_2 \geq \ldots \geq \lambda_n$, with $\lambda_1$ being a simple eigenvalue by the Perron-Frobenious theorem.} of $A$ normalized to have unit norm $\norm{v_1}_2 = 1$ and the scaling factor $\sqrt{n}$ is needed to guarantee that the centrality will not tend to zero with increasing graph size, see \cite{avella2018centrality}. \\

        \item \textbf{Katz centrality} $c^k_{\alpha} (\cdot)$ measures the importance of a node based on neighbors that are multiple-hops away, discounted by a weighting factor $\alpha\in (0,1)$. Mathematically, it is computed as 
        $$c^k_{\alpha} (A) = \left( \sum_{i = 0}^{\infty} (\alpha A)^i \mathbf{1} \right). $$ 
         By choosing  $\alpha \in (0,\frac{1}{\lambda_1})$, the series converges and  
        \begin{equation*} \label{defn:katz_centrality}
            c^k_{\alpha} (A) := (I - \alpha A)^{-1} \mathbf{1}.
        \end{equation*}
\end{itemize}

Before delving in our analysis of these measures for networks sampled from a SUGM, we present two motivating examples illustrating their importance  for different socio-economic systems. 

\begin{enumerate}
    \item
    \textbf{DeGroot Opinion Dynamics in Influence Networks.}\cite{golub2010naive} \\
    Consider a network $G$ consisting of $n$ agents where each agent $i$ has an initial opinion $p_i(0) \in [0,1]$ on a subject. Agents update their opinions at each time by communicating with their neighbors and taking weighted averages of their neighbors’ opinions from the previous period. Let $A$ be the weighted adjacency matrix of the network $G$. The update rule can be formulated as the following equation, 
    $$p_i(t+1) = \frac{1}{\sum_{j} A_{ij}} \sum_{j} A_{ij} p_j(t).$$
    When the network $G$ is strongly connected and aperiodic,  the DeGroot opinion dynamics converge to a unique consensus 
    $$\lim_{t \rightarrow \infty} \vec{p}(t) = (\vec{w}^T \vec{p}(0)) \cdot \mathbf{1}$$
    where $\vec{w} \geq 0$, $\norm{\vec{w}}_1 = 1$ is the left dominant eigenvector of the row-stochastic normalized adjacency matrix. Note that the networks generated by the SUGM are undirected, hence the left dominant eigenvector coincides with  eigenvector centrality. \\
        
    \item
    \textbf{Linear Quadratic Network Games.} \cite{jackson2015games} \\
    Consider a network $G$ consisting of $n$ agents where each agent $i$ plays a simultaneous game by choosing action $a_i \in \RR_+$ to maximize a linear quadratic utility function
    $$u_i = - \frac{1}{2} a_i^2 + a_i \left( b_i + \frac{\beta}{\sum_{j} A_{ij}} \sum_{j} A_{ij} a_j \right)$$
    where $b_i \in \RR_+$ represents the standalone heterogeneous marginal return on each agent's individual action and $\beta \in \RR_+$ is a parameter capturing the strength of peer effects.\\
    As pointed out in \cite{ballester2006s}, when the spectral radius of the matrix $\frac{\beta}{\sum_{j} A_{ij}} A=: \beta \tilde A$ is strictly less than $1$, there exists a unique Nash equilibrium action 
    $$\vec{a}= \left(I - \beta \tilde A\right)^{-1} \vec{b}$$
    where $\vec{b} \in \RR^n$ contains the scalar $b_i$'s. 
    For $\vec{b}=\mathbf{1}$, that is when all agents have the same standalone marginal return, this equilibrium action coincides with Katz centrality (for $\alpha = \beta$). 
\end{enumerate}

Note that in both examples, some form of normalization is needed in order for the local aggregate information to remain bounded as the network size grows. Therefore in this section, we will analyze the  normalized graph $\Bar{A} := A/\Delta$, where $\Delta := \norm{\Exp{A}}_{\infty}$.
We will also make the following assumption 
\begin{align*} \label{Assumption:connect}
    &~~~~~\Bar{A} \text{ corresponds to a connected graph} \\
    &\text{where } \lambda_1(\Exp{\Bar{A}}) \rightarrow \lambda^* \text{ as } n \rightarrow \infty \\
    &\text{and } \abs{\lambda_1(\Exp{\Bar{A}}) - \lambda_2(\Exp{\Bar{A}})} \in [m,M]. \tag{\text{A7}}
\end{align*} 
where $\lambda_i(\Exp{\Bar{A}})$ corresponds to the $i-th$ largest eigenvalue of the expectation matrix $\Exp{\Bar{A}}$ and $m,M > 0$ are fixed constants. Notice that by the Perron-Frobenious theorem, $\Bar{A}$ is irreducible by \eqref{Assumption:connect} and $\lambda_1(\Bar{A})$ is a simple eigenvalue. This assumption guarantees that the centrality measures are well defined and is critical in the convergence analysis. 

Using Propositions \ref{prop:main_wSUGM_easy} and \ref{prop:main_uSUGM_ES}, we show that the average distance between the centrality measures of nodes in networks sampled from the SUGM  converges to the centrality measures of the corresponding nodes in the expected network with high probability.

\begin{corollary} [Convergence of graph centrality measures] \label{cor:centrality}
    Let $G$ be a random graph of size $n$ generated by the $\text{SUGM}(n,T,p)$ (either weighted or unweighted). Let $A$ be the adjacency matrix of $G$, $\Delta := \norm{\Exp{A}}_{\infty}$ be the maximum expected degree, and $\Bar{A} := A/\Delta$ be the normalized adjacency matrix. Let $c(\cdot)$ be either degree centrality, eigenvector centrality or Katz centrality (with $\alpha \in (0, 1/\lambda^*)$). Suppose that Assumptions~\eqref{Assumption:prob_A5}, \eqref{Assumption:prob_A6} and \eqref{Assumption:connect} are satisfied, then with probability at least $1-\epsilon$,  
    \begin{align*}
        \frac{1}{n} \norm{c(\Bar{A}) - c(\Exp{\Bar{A}})}_1 &\leq \phi(n)
    \end{align*}
    where $\phi(n) \rightarrow 0$ as $n \rightarrow \infty$.
\end{corollary}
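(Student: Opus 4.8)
The plan is to reduce all three cases to a single high-probability control on the normalized spectral deviation
$\delta(n) := \norm{\Bar{A} - \Exp{\Bar{A}}}_2 = \frac{1}{\Delta}\norm{A - \Exp{A}}_2$,
and then propagate this bound through each centrality map deterministically. First I would invoke Lemma~\ref{lem:prob_frac}: Assumptions~\eqref{Assumption:prob_A5}--\eqref{Assumption:prob_A6} imply the hypotheses of Proposition~\ref{prop:main_wSUGM_easy} (weighted) or Proposition~\ref{prop:main_uSUGM_ES} (unweighted). Hence, with probability at least $1-\epsilon$, $\norm{A - \Exp{A}}_2 = O\big(\sqrt{\Delta\ln(2n/\epsilon)}\big)$ with model-dependent constants, so on this event $\delta(n) = O\big(\sqrt{\ln(2n/\epsilon)/\Delta}\big)$. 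The degree-growth insight following \eqref{Assumption:prob_A5} shows that $\Delta$ grows polynomially, of order $n^{\gamma}$ with $\gamma = \max_{t}(m_t - h_t - 1) \in (0,1)$ (for the unweighted case one additionally uses \eqref{Assumption:P2_A2} to get $\Delta_u \ge \Delta_w/\mu_{T,p}$), so $\ln n = o(\Delta)$ and $\delta(n) \to 0$. Everything below is conditioned on this probability-$(1-\epsilon)$ event and is deterministic.

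For degree centrality the argument is immediate: $c^d(\Bar{A}) - c^d(\Exp{\Bar{A}}) = (\Bar{A} - \Exp{\Bar{A}})\mathbf{1}$, so combining $\norm{v}_1 \le \sqrt{n}\,\norm{v}_2$ with $\norm{(\Bar{A} - \Exp{\Bar{A}})\mathbf{1}}_2 \le \delta(n)\norm{\mathbf{1}}_2 = \delta(n)\sqrt{n}$ gives $\tfrac{1}{n}\norm{c^d(\Bar{A}) - c^d(\Exp{\Bar{A}})}_1 \le \delta(n)$. For Katz centrality I would use the resolvent identity $X^{-1} - Y^{-1} = X^{-1}(Y - X)Y^{-1}$ with $X = I - \alpha\Bar{A}$ and $Y = I - \alpha\Exp{\Bar{A}}$, which yields $c^k_\alpha(\Bar{A}) - c^k_\alpha(\Exp{\Bar{A}}) = \alpha\,(I - \alpha\Bar{A})^{-1}(\Bar{A} - \Exp{\Bar{A}})(I - \alpha\Exp{\Bar{A}})^{-1}\mathbf{1}$. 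Each resolvent of a symmetric matrix $M$ satisfies $\norm{(I - \alpha M)^{-1}}_2 = 1/(1 - \alpha\lambda_1(M))$; since $\alpha < 1/\lambda^*$ and, by \eqref{Assumption:connect} together with Weyl's inequality, both $\lambda_1(\Bar{A})$ and $\lambda_1(\Exp{\Bar{A}})$ converge to $\lambda^*$, these norms are bounded by a constant for $n$ large. Applying $\norm{v}_1 \le \sqrt{n}\,\norm{v}_2$ once more gives $\tfrac{1}{n}\norm{c^k_\alpha(\Bar{A}) - c^k_\alpha(\Exp{\Bar{A}})}_1 = O(\delta(n)) \to 0$.

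The eigenvector centrality is the crux and the main obstacle, since it is a nonlinear function of the matrix whose stability depends delicately on the spectral gap. Here $\tfrac{1}{n}\norm{c^e(\Bar{A}) - c^e(\Exp{\Bar{A}})}_1 = \tfrac{1}{\sqrt{n}}\norm{v_1(\Bar{A}) - v_1(\Exp{\Bar{A}})}_1 \le \norm{v_1(\Bar{A}) - v_1(\Exp{\Bar{A}})}_2$, so it suffices to bound the $\ell_2$ distance between leading unit eigenvectors. I would invoke the Davis--Kahan $\sin\Theta$ theorem (in the Yu--Wang--Samworth form), which, after an appropriate sign choice, gives $\norm{v_1(\Bar{A}) - v_1(\Exp{\Bar{A}})}_2 \le C\,\norm{\Bar{A} - \Exp{\Bar{A}}}_2 / \mathrm{gap} = C\,\delta(n)/\mathrm{gap}$. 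The essential role of Assumption~\eqref{Assumption:connect} is to keep $\mathrm{gap} = \abs{\lambda_1(\Exp{\Bar{A}}) - \lambda_2(\Exp{\Bar{A}})} \ge m > 0$ bounded away from zero uniformly in $n$, turning the bound into $\tfrac{C}{m}\delta(n) \to 0$. Two delicate points must be handled: (i) the gap condition required by Davis--Kahan should be verified for the perturbed matrix, which follows from Weyl's inequality once $\delta(n) < m/2$, i.e. for $n$ large; and (ii) the sign ambiguity of eigenvectors, which the Yu--Wang--Samworth statement resolves by bounding $\min_{s \in \{\pm 1\}}\norm{v_1(\Bar{A}) - s\,v_1(\Exp{\Bar{A}})}_2$, and which I would pin down by taking the Perron (nonnegative) representative guaranteed by the irreducibility asserted in \eqref{Assumption:connect}. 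Finally, taking $\phi(n)$ to be the maximum of the three resulting bounds --- each of order $\delta(n)$ with prefactors depending only on $M,\mu_{T,p},m,\alpha,\lambda^*$ and not on $n$ --- produces a single function satisfying $\phi(n) \to 0$, and the whole statement holds on the probability-$(1-\epsilon)$ event from the first step.
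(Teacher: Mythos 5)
Your proposal is correct and follows essentially the same route as the paper: reduce everything to the spectral bound $\norm{\Bar{A}-\Exp{\Bar{A}}}_2\to 0$ via Lemma~\ref{lem:prob_frac}, Propositions~\ref{prop:main_wSUGM_easy} and \ref{prop:main_uSUGM_ES}, and $\Delta\in\Theta(n^\gamma)$; then Cauchy--Schwarz for degree, the Yu--Wang--Samworth Davis--Kahan bound with the uniform gap $m$ from \eqref{Assumption:connect} for eigenvector centrality, and Weyl's inequality plus a resolvent perturbation estimate for Katz. The only cosmetic differences are that you use the resolvent identity where the paper cites an equivalent inverse-perturbation theorem, and you make the eigenvector sign convention explicit via the Perron representative, which the paper leaves implicit.
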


This convergence result is of practical importance for settings in which collecting exact network data may be too costly. In fact, Theorem \ref{cor:centrality} guarantees that, in the limit of large networks, one can use information about the generating process (which is typically easier  to obtain \cite{breza2020using}) to predict node importance in any realized network without the need for exact data. 

\section{Numerical Simulations}

To validate our theoretical contributions, we implemented a suite of numerical simulations for SUGM involving links and triangles. These simulations encompassed three distinct probability models from which the subgraphs can be generated: the uniform model, the stochastic block model (SBM), and a distance-based model.

\begin{itemize}
    \item
    \textbf{The uniform model.} \\ 
    This model generates random graphs where the probability of forming links and triangles is uniform across all pairs and triplets of nodes. Specifically, the link probability is defined as $p({\text{link}}) = \frac{5}{n^{0.65}}$ and the triangle probability as $p({\text{triangle}}) = \frac{1}{n^{1.4}}$, where $n$ is the size of the graph. \\

    \item
    \textbf{The stochastic block model model (SBM).} \\ 
    This model assumes that nodes belong to either one of two communities (with $70-30\%$ split). Link probability between nodes of the same community is higher ($\frac{7}{n^{0.65}}$) compared to the link probability between nodes of different communities ($\frac{2}{n^{0.65}}$). The same applies to triangles. Specifically, the probability of generating a triangle in which each of the three nodes is within the same community is higher ($\frac{1}{n^{1.4}}$) than triangles with nodes from mixed communities ($\frac{0.1}{n^{1.4}}$). \\
        
    \item
    \textbf{The distance-based model.} \\
    In this model, each node is equidistantly assigned a position between $0$ and $1$ and the probability of link and triangle formation is influenced by the relative positions or 'distances' between nodes. The link probability is   $p({\text{link}}) = - C_{\text{link}} \cdot \log(|i - j| + \epsilon)$, and the triangle probability is   $p({\text{triangle}}) = - C_{\text{triangle}} \cdot (\log(|i - j| + \epsilon) + \log(|i - k| + \epsilon) + \log(|j - k| + \epsilon))$ where $i,j,k$ are the node positions, and $C_{\text{link}} = 3e-2$, $C_{\text{triangle}} = 5e-5$, $\epsilon=1e-4$.\\
\end{itemize}

The objective was to empirically demonstrate the consistency of the spectral norm bounds and centrality measures (degree, eigenvector, and Katz centrality) with our theoretical findings.

Our approach involved generating random graphs of varying sizes and complexities. For each graph, we computed the norm distance between the simulated and expected adjacency matrices, along with various centrality measures. The simulations were repeated five times to ensure statistical robustness, averaging the errors across trials.

\begin{figure}[!h]
\centering
\includegraphics[width=3.5in]{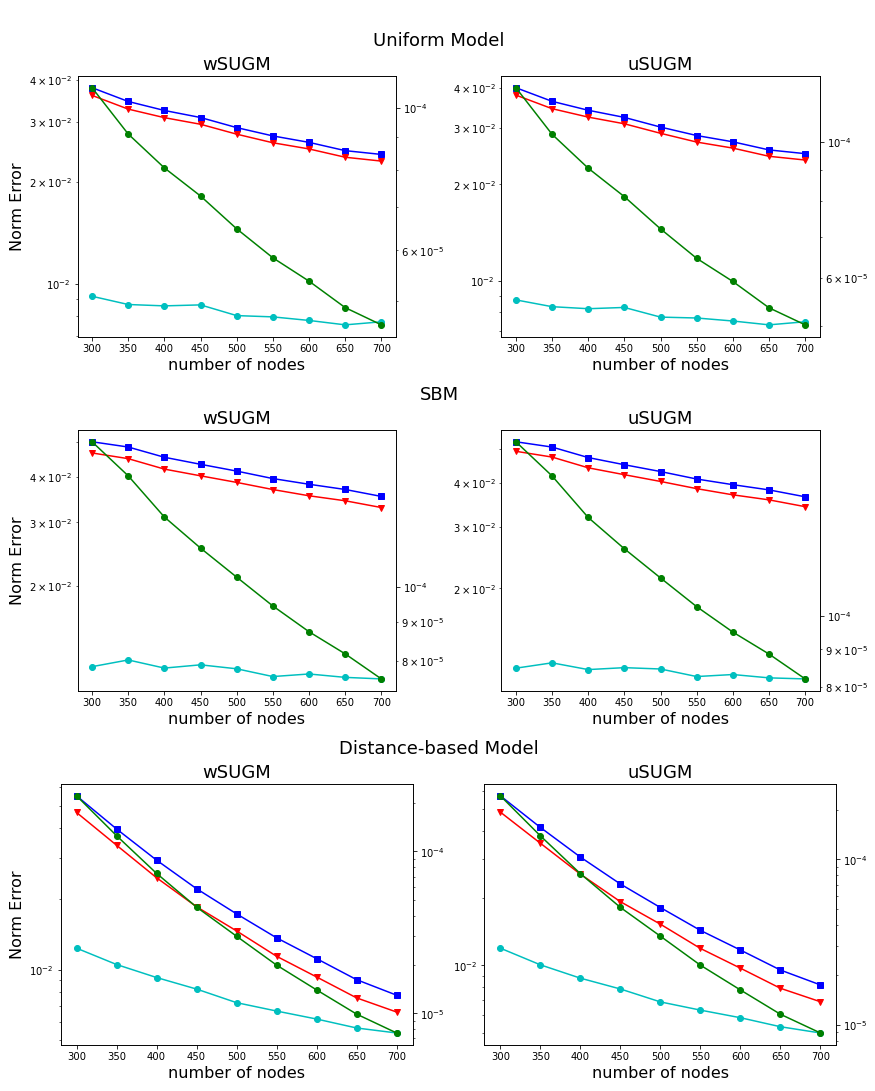}
\caption{Concentration results for Subgraph Generated Models (SUGMs) across network sizes: \\(1) Norm (navy): $\norm{\Bar{A} - \Exp{\Bar{A}}}_2$, left y-axis; \\(2) Degree (red): $\norm{c^d_{\alpha} (\Bar{A}) - c^d_{\alpha} (\Exp{\Bar{A}})}_1 / n$, left y-axis; \\(3) Eigenvector (cyan): $\norm{c^e_{\alpha} (\Bar{A}) - c^e_{\alpha} (\Exp{\Bar{A}})}_1/ n$, left y-axis; \\(4) Katz (green): $\norm{c^k_{\alpha} (\Bar{A}) - c^k_{\alpha} (\Exp{\Bar{A}})}_1/ n$, right y-axis.\\Plots report the average of these quantities over 5 random network realizations for each network size. }
\label{fig:error}
\end{figure}

The above plots display the error metrics for the weighted and unweighted SUGMs with the three connection models. In all cases, as the number of nodes increases, the error in the norm and centrality measures generally decreases, in log-log scale. These trends substantiate our theoretical results in the previous sections, and is critical for practical applications in large-scale network analysis.

\section{Discussion and Future Work}

In this article, we derived a probability bound for the spectral norm of the difference between the realized and expected adjacency matrix of a network generated from a SUGM. As a corollary, we showed that degree, eigenvector and Katz centrality measures of nodes in networks sampled from the SUGMs  converge on average to the centrality measures of the corresponding nodes in the expected networks with high probability. As future directions we aim at proving convergence for additional centrality measures as well as for other important network models, such as network games or contagion processes.

\appendices
\section{Notation}
In the following proofs, we use the common complexity notations: $f(n) \in O(g(n))$ to denote $|f|$ is bounded above by $g$ (up to constant factor) asymptotically; $f(n) \in \Omega(g(n))$ to denote $f$ is bounded below by $g$ asymptotically; $f(n) = \Theta(g(n))$, or $f(n) \asymp g(n)$ to denote $f$ is bounded both above and below by $g$ asymptotically. 

\section{Omitted Proofs} 
\subsection{Proof of Proposition \ref{prop:main_wSUGM_easy}}\label{appendix:main_wSUGM_easy}

\begin{proof}

    Recall that 
    $$A_w \sim W(\Vec{x}) = \sum_{\substack{ t \in T \\ L \in \mathfrak{S}(n, m_t)}} X(t,L)$$
where we defined the random symmetric matrices $X(t,L) := x(t,L) A(t,L)$. 
For every $t \in T$, $L \in \mathfrak{S}(n, m_t)$,
    \begin{align*}
        \norm{X(t,L) - \Exp{X(t,L)}}_2 &= \norm{(x(t,L) - p(t,L)) A(t,L)}_2\\
        &= \abs{x(t,L) - p(t,L)} \cdot \norm{A(t,L)}_2 \\
        &\leq 1 \cdot \norm{A(t,L)}_2 .
    \end{align*}
Since $\norm{A}_2 \leq \sqrt{\norm{A}_1 \norm{A}_\infty}$ holds for any matrix $A$, we obtain 
\begin{align*}
    \norm{A(t,L)}_2 &\leq \sqrt{\norm{A(t,L)}_1 \norm{A(t,L)}_\infty} \\
    &\leq \sqrt{(m_t-1)(m_t-1)} = m_t-1.
\end{align*}
Therefore,
\begin{equation} \label{Equation:P1_E0}
    \begin{aligned}
        \norm{X(t,L) - \Exp{X(t,L)}}_2 &\leq m_t - 1 < M 
    \end{aligned}
\end{equation}  
for every $t \in T$, $L \in \mathfrak{S}(n, m_t)$. On the other hand,
\begin{align*}
    \var{X(t,L)} &= \Exp{(X(t,L) - \Exp{X(t,L)})^2}\\
    &= \Exp{(x(t,L) - p(t,L))^2 A(t,L)^2}\\
    &= \var{x(t,L)} \cdot A(t,L)^2 .
\end{align*}
We use the matrix $P_2(t,L) \in \RR^{n \times n}$ to denote $A(t,L)^2$ (i.e.,  the matrix whose entries correspond to the number of two-hop paths between two nodes in the adjacency matrix $A{(t,L)}$). Then
   \begin{align*}
        v^2 &:= \norm{\smashoperator[r]{\sum_{\substack{ t \in T \\ L \in \mathfrak{S}(n, m_t)}}} ~\var{X(t,L)} }_2 \\
        &= \norm{\smashoperator[r]{\sum_{\substack{ t \in T \\ L \in \mathfrak{S}(n, m_t)}}} ~p(t,L)(1-p(t,L)) P_2(t,L) }_2 .
    \end{align*}

    Note that the entire matrix inside the spectral norm is symmetric, and for a symmetric matrix $A$,
    $\norm{A}_2 \leq \sqrt{\norm{A}_1 \norm{A}_{\infty}} = \sqrt{\norm{A}_{\infty}^2} = \norm{A}_\infty .$
We can therefore bound
    \begin{align*}
        v^2 &\leq \norm{\smashoperator[r]{\sum_{\substack{ t \in T \\ L \in \mathfrak{S}(n, m_t)}}} ~p(t,L)(1-p(t,L))  P_2(t,L) }_\infty \\
        &\leq \norm{\smashoperator[r]{\sum_{\substack{ t \in T \\ L \in \mathfrak{S}(n, m_t)}}} ~p(t,L)  P_2(t,L) }_\infty \\
        &= \max_{1 \leq k \leq n} \left\{\smashoperator[r]{\sum_{\substack{ t \in T \\ L \in \mathfrak{S}(n, m_t)}}} ~p(t,L) \left[ \sum_{1 \leq j \leq n}  P_2(t,L)_{kj}  \right] \right\} .
    \end{align*}
We use $d(t,L)_i$ to denote the degree of node $i$ in subgraph $g_t(L)$, where $d(t,L)_i = 0$ if $i \notin L$. Then for every $t \in T$, $L \in \mathfrak{S}(n, m_t)$,
    \begin{align*}
        \begin{cases}
            ~{P_2(t,L)}_{kj}  \leq d(t,L)_k \quad &\text{ if } j \in g_t(L), \\
            ~{P_2(t,L)}_{kj} =0 \quad &\text{ if } j \notin g_t(L). 
        \end{cases}
    \end{align*}
    And since $|g_t(L)| = m_t$, 
    \begin{align*}
        v^2  
        &\leq \max_{1 \leq k \leq n} \left[ \smashoperator[r]{\sum_{\substack{ t \in T \\ L \in \mathfrak{S}(n, m_t)}}} ~p(t,L) \left( m_t  \cdot d(t,L)_k\right)\right] \\
        &\leq M \cdot \max_{1 \leq k \leq n} \left[ \smashoperator[r]{\sum_{\substack{ t \in T \\ L \in \mathfrak{S}(n, m_t)}}} ~p(t,L) d(t,L)_k \right].
    \end{align*}
On the other hand, given 
    \begin{align*}
        \Delta_w &:= \norm{\Exp{A_w}}_{\infty} 
       = \max_{1 \leq k \leq n} \left[ \smashoperator[r]{\sum_{\substack{ t \in T \\ L \in \mathfrak{S}(n, m_t)}}} ~p(t,L) \cdot d(t,L)_k \right] ,
    \end{align*}
    we have
    \begin{equation} \label{Equation:P1_E1}
    \begin{aligned}
        v^2 \leq M \Delta_w < M^2 \Delta_w . 
    \end{aligned}
    \end{equation}
    Set $a = \sqrt{4M^2 \Delta_w\ln{(2n/\epsilon)}}$, so that
    \begin{align*}
        a &= \sqrt{4M^2 \Delta_w\ln{(2n/\epsilon)}} \\
        &= 3 \sqrt{M^2 \Delta_w\cdot \frac{4}{9} \ln{(2n/\epsilon)}} \\
        &< 3 \sqrt{M^2\Delta_w^2} \\
        &= 3 M\Delta_w,
    \end{align*}
    where the inequality comes from  Assumption \eqref{Assumption:P1_A1}. Therefore 
    \begin{equation} \label{Equation:P1_E2}
    \begin{aligned}
        \frac{a}{3} < M \Delta_w . 
    \end{aligned}
    \end{equation}

    Combining equations \eqref{Equation:P1_E0}, \eqref{Equation:P1_E1} and \eqref{Equation:P1_E2} with Theorem \ref{thm:chung_ineq}, we conclude that 
    \begin{align*}
        &~~~~~ \Prob{\norm{A_w - \Exp{A_w}}_2 > a} \\
        &\leq 2n \exp{\left( -\frac{a^2}{2v^2+2Ka/3}  \right)} \\
        &< 2n \exp{\left( -\frac{4M^2 \Delta_w \ln{(2n/\epsilon)}}{2 \cdot M^2 \Delta_w +2 \cdot M \cdot M\Delta_w}  \right)} \\
        &= \epsilon ~,
    \end{align*}
    which finishes the proof.
\end{proof}


\subsection{Proof of Lemma \ref{lem:VRVS_entry}} 

\begin{proof} 
    For any random vector $\Vec{x} \in \mathcal{Z}$ of length $K \in \NN$, and every $t \in T$, $L \in \mathfrak{S}(n, m_t)$, construct 
    \begin{align*}
        \dot{\Vec{x}}^{(t,L)} &= (x_1, x_2, \ldots, 1, \ldots, x_K) \\
        \ddot{\Vec{x}}^{(t,L)} &= (x_1, x_2, \ldots, 0, \ldots, x_K)
    \end{align*}
    where we enforce the result of the Bernoulli random variable entry at index $(t,L)$. Define the random matrix 
    \begin{align*}
        \mathcal{C}_U^{(t,L)}(\Vec{x}) &= U(\dot{\Vec{x}}^{(t,L)}) - U(\ddot{\Vec{x}}^{(t,L)})
    \end{align*}
    so that    
    \begin{equation} \label{step3}
    \begin{aligned}
        &~~~~~\Etl{(\hat{U}-\hat{U}^{(t,L)})^2 | \Vec{x}} \\
        &= \Etl{ \left(x(t,L) - x'(t,L)\right)^2 \cdot \mathcal{C}_U^{(t,L)}(\Vec{x})^2 | \Vec{x}}.
    \end{aligned}
    \end{equation}
    To derive \eqref{step3}  note  that if $x(t,L) = x'(t,L)$ (i.e., they are both ones or both zeros) then $\hat{U}=\hat{U}^{(t,L)}$; if instead $x(t,L) \neq x'(t,L)$ then $(\hat{U}-\hat{U}^{(t,L)})^2= (U(\dot{\Vec{x}}^{(t,L)}) - U(\ddot{\Vec{x}}^{(t,L)}))^2= \mathcal{C}_U^{(t,L)}(\Vec{x})^2.$

    From \eqref{step3} it is immediate to conclude that $0 \le_e V_U(\Vec{x})$, and we therefore focus on proving $V_U(\Vec{x})\le_e V_W(\Vec{x}).$

    The matrix $\mathcal{C}_U^{(t,L)}(\Vec{x})$ is a $0$-$1$ adjacency matrix that can only have nonzero entries on edges within $g_t(L)$. Hence
    \begin{align*}
        & ~~~~~ \Etl{(\hat{W}-\hat{W}^{(t,L)})^2 | \Vec{x}} - \Etl{(\hat{U}-\hat{U}^{(t,L)})^2 | \Vec{x}} \\
        &= \Etl{(\hat{W}-\hat{W}^{(t,L)})^2 - (\hat{U}-\hat{U}^{(t,L)})^2| \Vec{x}} \\
        &= \mathbb{E}_{{x}'(t,L)} \left[ \left(x(t,L) - x'(t,L)\right)^2 \cdot A(t,L)^2 \right.\\
        &\quad\quad \left. - \left(x(t,L) - x'(t,L)\right)^2 \cdot \mathcal{C}_U^{(t,L)}(\Vec{x})^2 | \Vec{x} \right] \\
        &= \mathbb{E}_{{x}'(t,L)} \left[ \left(x(t,L) - x'(t,L)\right)^2 \cdot (A(t,L)^2  - \mathcal{C}_U^{(t,L)}(\Vec{x})^2) | \Vec{x} \right] \\
        &\geq_{\text{e}} 0
    \end{align*}
    where the last inequality comes from the fact that $A(t,L) \geq_{\text{e}} \mathcal{C}_U^{(t,L)}(\Vec{x}) \geq_{\text{e}} 0$ implies $A(t,L)^2 \geq_{\text{e}} \mathcal{C}_U^{(t,L)}(\Vec{x})^2$. Therefore,
    \begin{align*}
         V_U(\Vec{x}) &= \frac{1}{2} \cdot \smashoperator{\sum_{\substack{ t \in T \\ L \in \mathfrak{S}(n, m_t)}}} ~\Etl{(\hat{U}-\hat{U}^{(t,L)})^2 | \Vec{x}} \\
         &\leq_{\text{e}} \frac{1}{2} \cdot  \smashoperator{\sum_{\substack{ t \in T \\ L \in \mathfrak{S}(n, m_t)}}} ~\Etl{(\hat{W}-\hat{W}^{(t,L)})^2 | \Vec{x}} \\
         &= V_W(\Vec{x}).
    \end{align*}
\end{proof}


\subsection{Proof of Lemma \ref{lem:VRVS_overall}} 

\begin{proof} 
    From Lemma \ref{lem:VRVS_entry}, for any $\Vec{x} \in \mathcal{Z}$,
    $$0 \leq_{\text{e}} V_U(\Vec{x}) \leq_{\text{e}} V_W(\Vec{x}).$$
 
    Then for any $\psi > 0$, and integer $k > 0$,
    $$0 \leq_{\text{e}} (\psi V_U(\Vec{x}))^k \leq_{\text{e}} (\psi V_W(\Vec{x}))^k.$$

    By the definition of matrix exponential, for any $\Vec{x} \in \mathcal{Z}$,
    \begin{align*}
        \exp (\psi V_U(\Vec{x})) &\leq_{\text{e}} \exp (\psi V_W(\Vec{x})) \\
        \Longrightarrow \atrace \exp (\psi V_U(\Vec{x})) &\leq \atrace \exp (\psi V_W(\Vec{x})) \\
        \log \Exp{\atrace \exp (\psi V_U)} &\leq \log \Exp{\atrace \exp (\psi V_W)} .
    \end{align*}    
\end{proof}


\subsection{Proof of Lemma \ref{lem:bound_weight_proxy}} \label{appendix:bound_weight_proxy}  

We list several useful lemmas here before presenting the proof of Lemma \ref{lem:bound_weight_proxy}.

\begin{lemma} \label{lem:LogCon} [Section 2.5, \cite{tropp2012user}]
    The matrix logarithm operator is monotone
    \begin{align*}
        0 \prec A \preceq B \Longrightarrow \log A \preceq \log B
    \end{align*}
    and concave
    \begin{align*}
        \alpha \log A + (1-\alpha) \log B \preceq \log (\alpha A + (1-\alpha) B)
    \end{align*}
    where $A,B$ are positive definite matrices and $\alpha \in [0,1]$.
\end{lemma}

\begin{lemma} \label{lem:ChernoffMGF} (Lemma 5.8, \cite{tropp2012user})
    Suppose $X$ is a random positive semidefinite matrix that satisfies $\lambda_{\max}(X) \leq 1$. Then, for $\theta \in \RR$,
    \begin{align*}
        \Exp{e^{\theta X}} \preceq I + (e^\theta - 1) \Exp{X}.
    \end{align*}
\end{lemma}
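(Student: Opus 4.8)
The plan is to establish a deterministic (sample-path) matrix inequality first and then integrate, exploiting that the Loewner order is preserved under expectation. Concretely, I would prove that for every \emph{fixed} symmetric positive semidefinite matrix $X$ with $\lambda_{\max}(X) \le 1$ one has $e^{\theta X} \preceq I + (e^\theta - 1) X$, after which taking expectations on both sides yields the claim directly, since $\Exp{I + (e^\theta-1)X} = I + (e^\theta-1)\Exp{X}$ by linearity.

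The engine of the argument is an elementary scalar convexity bound. For any $\theta \in \RR$ the map $x \mapsto e^{\theta x}$ is convex, so on the interval $[0,1]$ it lies below the secant line joining the endpoints $(0,1)$ and $(1,e^\theta)$; that is, $e^{\theta x} \le 1 + (e^\theta - 1)x$ for all $x \in [0,1]$. This holds for both signs of $\theta$, since convexity does not depend on the sign of the slope.

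I would then lift the scalar inequality to matrices through the spectral transfer rule. Because $X$ is symmetric positive semidefinite with $\lambda_{\max}(X) \le 1$, all of its eigenvalues lie in $[0,1]$. The two matrices $e^{\theta X}$ and $I + (e^\theta - 1) X$ are both functions of the single matrix $X$, hence diagonal in a common orthonormal eigenbasis of $X$; applying the scalar bound to each eigenvalue $\lambda$ of $X$ gives $e^{\theta \lambda} \le 1 + (e^\theta - 1)\lambda$, and since the corresponding eigenvectors coincide this is precisely the Loewner inequality $e^{\theta X} \preceq I + (e^\theta - 1) X$.

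Finally I would take expectations: if $A \preceq B$ pointwise, then $B - A$ is almost surely positive semidefinite, so $\Exp{B - A} \succeq 0$ and by linearity $\Exp{A} \preceq \Exp{B}$; applying this to the deterministic bound completes the proof. The step I would treat most carefully is the transfer rule, as it is the only place where matrix structure matters: its validity rests on the fact that both sides are functions of the \emph{same} matrix $X$ (so there is no non-commutativity obstruction and no appeal to operator monotonicity of the exponential is needed), together with the confinement of the spectrum of $X$ to $[0,1]$, which is exactly the range on which the scalar chord bound holds.
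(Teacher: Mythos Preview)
Your proposal is correct and is exactly the standard argument: the scalar chord bound $e^{\theta x}\le 1+(e^\theta-1)x$ on $[0,1]$, lifted to matrices via the spectral transfer rule (both sides being functions of the same $X$), followed by taking expectations. The paper does not supply its own proof of this lemma but simply cites it from \cite{tropp2012user}, where the proof proceeds in the same way, so there is nothing to compare beyond noting that your write-up matches the source.
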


\begin{lemma} \label{lem:independent}
    Define the random matrix $D_W^{(t,L)}(\Vec{x}) := \Etl{(\hat{W}-\hat{W}^{(t,L)})^2 | \Vec{x}}$ for $\hat W,\hat{W}^{(t,L)}$ as in \eqref{W}. The sequence $ \{ D_W^{(t,L)}(\Vec{x}) \}_{t \in T, L \in \mathfrak{S}(n, m_t)}$ is a finite sequence of independent, random, symmetric matrices.
\end{lemma}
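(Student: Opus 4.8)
The plan is to show that each matrix $D_W^{(t,L)}(\Vec{x})$ is, in fact, a measurable function of the single scalar coordinate $x(t,L)$ of $\Vec{x}$; once this reduction is established, all four claimed properties (finite, symmetric, random, independent) follow almost immediately. The starting point is the weighted analogue of \eqref{step3}, which is derived inside the proof of Lemma \ref{lem:VRVS_entry}. Since $U(\Vec{x}^{(t,L)})$ and $U(\Vec{x})$ are identically distributed, the same holds for $W$, so $\Exp{W(\Vec{x}^{(t,L)})}=\Exp{W(\Vec{x})}$ and the centering terms cancel:
$$\hat{W}-\hat{W}^{(t,L)} = W(\Vec{x})-W(\Vec{x}^{(t,L)}) = \big(x(t,L)-x'(t,L)\big)\,A(t,L),$$
because $\Vec{x}$ and $\Vec{x}^{(t,L)}$ differ only in coordinate $(t,L)$. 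Squaring and pulling out the deterministic matrix $A(t,L)^2$ gives
$$D_W^{(t,L)}(\Vec{x}) = \Etl{\big(x(t,L)-x'(t,L)\big)^2 \,\big|\, \Vec{x}}\cdot A(t,L)^2.$$

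Next I would evaluate the scalar conditional expectation. Conditioning on $\Vec{x}$ fixes $x(t,L)\in\{0,1\}$, while $x'(t,L)\sim\textit{Bern}(p(t,L))$ is an independent copy, so a direct computation (using $x'(t,L)^2=x'(t,L)$) yields
$$\Etl{\big(x(t,L)-x'(t,L)\big)^2 \,\big|\, \Vec{x}} = \begin{cases} p(t,L), & x(t,L)=0,\\ 1-p(t,L), & x(t,L)=1.\end{cases}$$
This scalar is a deterministic function of $x(t,L)$ alone, so $D_W^{(t,L)}(\Vec{x})$ is that scalar times the fixed matrix $A(t,L)^2$, hence a function of the single coordinate $x(t,L)$.

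From this representation the remaining claims are bookkeeping. The index set $\{(t,L): t\in T,\ L\in\mathfrak{S}(n,m_t)\}$ is finite because $T$ is finite and each $\mathfrak{S}(n,m_t)$ is finite (of cardinality $n!/(n-m_t)!$), giving a finite sequence. Each $D_W^{(t,L)}(\Vec{x})$ is symmetric because $A(t,L)$ is the adjacency matrix of an undirected subgraph, so $A(t,L)^2$ is symmetric and any scalar multiple of it remains symmetric. Measurability (the ``random matrix'' property) holds since $D_W^{(t,L)}$ is a measurable function of the random variable $x(t,L)$. For independence I would invoke that the coordinates $\{x(t,L)\}_{t\in T,\,L\in\mathfrak{S}(n,m_t)}$ are mutually independent by construction of $\Vec{x}$; since each $D_W^{(t,L)}$ is a function of a single, distinct coordinate, the standard fact that functions of disjoint collections of independent random variables are mutually independent yields the mutual independence of the matrices.

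The only nontrivial step, and hence the main obstacle, is the reduction showing that $D_W^{(t,L)}$ collapses to a function of $x(t,L)$ alone. This hinges on the two cancellations above: the vanishing of the centering terms (from equal distributions) and the fact that $\Vec{x}$ and $\Vec{x}^{(t,L)}$ differ in exactly one coordinate, which turns the squared difference into a scalar multiple of the deterministic matrix $A(t,L)^2$. Everything after that is routine.
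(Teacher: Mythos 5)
Your proposal is correct and follows essentially the same route as the paper's own proof: both cancel the centering terms (since $W(\Vec{x})$ and $W(\Vec{x}^{(t,L)})$ are identically distributed), use that $\Vec{x}$ and $\Vec{x}^{(t,L)}$ differ only in the coordinate $(t,L)$ to obtain $\hat{W}-\hat{W}^{(t,L)}=\left(x(t,L)-x'(t,L)\right)A(t,L)$, and conclude that $D_W^{(t,L)}(\Vec{x})$ is a deterministic function of the single coordinate $x(t,L)$ times the fixed symmetric matrix $A(t,L)^2$, whence mutual independence follows from the mutual independence of the coordinates. Your explicit evaluation of the scalar conditional expectation and the finiteness/symmetry bookkeeping are details the paper leaves implicit, but the argument is the same.
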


\begin{proof}
    To show that the terms are independent, for a fixed $\Vec{x} \in \mathcal{Z}$, fix $t_0 \in T$ and $L_0 \in \mathfrak{S}(n, m_{t_0})$, we have
    \begin{equation}\label{step4}
    \begin{aligned}
        &~~~~~D_W^{(t_0,L_0)}(\Vec{x}) \\
        &= \Exp{(\hat{W}-\hat{W}^{(t_0,L_0)})^2 | \Vec{x}} \\
        &= \Exp{((W(\Vec{x}) \!- \! \Exp{W(\Vec{x})})\!-\!(W(\Vec{x}^{(t,L)}) \!- \!\Exp{W(\Vec{x}^{(t,L)})}))^2 | \Vec{x}} \\
        &= \Exp{\left(W(\Vec{x}) - W(\Vec{x}^{(t,L)}) \right)^2 | \Vec{x}} \\
        &= \Exp{\left(x(t_0,L_0) A(t_0,L_0) - x'(t_0,L_0) A(t_0,L_0)\right)^2 | \Vec{x}} \\
        &= \Exp{\left(x(t_0,L_0) - x'(t_0,L_0)\right)^2 | \Vec{x}} \cdot A(t_0,L_0)^2 \\
        &= \Exp{\left(x(t_0,L_0) - x'(t_0,L_0)\right)^2 | x(t_0,L_0)} \cdot A(t_0,L_0)^2 
    \end{aligned}
    \end{equation}
    which only depends on the Bernoulli random variable $x(t_0,L_0)$. In other words, $D_W^{(t_0,L_0)}(\Vec{x})=D_W^{(t_0,L_0)}(x(t_0,L_0))$.
\end{proof}

\begin{lemma} \label{lem:spectrum}
    For any $\Vec{x} \in \mathcal{Z}$, $t \in T$, and $L \in \mathfrak{S}(n, m_t)$, $D_W^{(t,L)}(\Vec{x})$ is a positive semi-definite matrix and
    \begin{align*}
        \norm{D_W^{(t,L)}(\Vec{x})}_2 < 2 M^2 ~\textit{ almost surely}
    \end{align*}
    where $M = \max_{t \in T} \{m_t\}$.
\end{lemma}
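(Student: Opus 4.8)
The plan is to build directly on the explicit formula for $D_W^{(t,L)}(\Vec{x})$ obtained in the proof of Lemma~\ref{lem:independent} (see equation \eqref{step4}), namely
$$D_W^{(t,L)}(\Vec{x}) = \Exp{\left(x(t,L)-x'(t,L)\right)^2 \mid x(t,L)} \cdot A(t,L)^2.$$
This expresses $D_W^{(t,L)}(\Vec{x})$ as a nonnegative scalar multiple of the matrix $A(t,L)^2$, so both claimed properties reduce to analyzing these two factors separately.

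First I would establish positive semidefiniteness. Since $A(t,L)$ is symmetric, $A(t,L)^2 = A(t,L)^\top A(t,L) \succeq 0$, and the scalar factor $\Exp{(x(t,L)-x'(t,L))^2 \mid x(t,L)}$ is the expectation of a square and hence nonnegative. A nonnegative scalar times a positive semidefinite matrix is positive semidefinite, which gives $D_W^{(t,L)}(\Vec{x}) \succeq 0$.

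Next I would bound the spectral norm, using that it factors through the nonnegative scalar:
$$\norm{D_W^{(t,L)}(\Vec{x})}_2 = \Exp{\left(x(t,L)-x'(t,L)\right)^2 \mid x(t,L)} \cdot \norm{A(t,L)^2}_2.$$
For the scalar factor, since $x(t,L),x'(t,L) \in \{0,1\}$ with $x'(t,L)$ an independent copy, a short computation shows the conditional expectation equals $p(t,L)$ when $x(t,L)=0$ and $1-p(t,L)$ when $x(t,L)=1$; in either case it is at most $1$. For the matrix factor, symmetry of $A(t,L)$ gives $\norm{A(t,L)^2}_2 = \norm{A(t,L)}_2^2$, and reusing the per-subgraph estimate $\norm{A(t,L)}_2 \le m_t - 1 < M$ from the proof of Proposition~\ref{prop:main_wSUGM_easy} yields $\norm{A(t,L)^2}_2 < M^2$. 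Combining the two factors gives $\norm{D_W^{(t,L)}(\Vec{x})}_2 < M^2 < 2M^2$. Because the scalar factor is bounded by $1$ regardless of the realization, this holds for every $\Vec{x}$, hence in particular almost surely.

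There is essentially no hard step: the whole argument rests on the reduction already carried out in Lemma~\ref{lem:independent}, which collapses $D_W^{(t,L)}(\Vec{x})$ to a scalar times $A(t,L)^2$. The only points requiring mild care are verifying that the conditional expectation of $(x(t,L)-x'(t,L))^2$ is at most one and recalling the per-subgraph spectral-norm bound; the stated constant $2M^2$ leaves comfortable slack over the sharper $M^2$ bound one actually obtains, so no tightness needs to be fought for here.
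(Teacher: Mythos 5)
Your proof is correct and takes essentially the same route as the paper's: both rest on the factorization $D_W^{(t,L)}(\Vec{x}) = \Exp{\left(x(t,L)-x'(t,L)\right)^2 \mid x(t,L)} \cdot A(t,L)^2$ from Lemma~\ref{lem:independent}, read off positive semidefiniteness from it, and bound the spectral norm via $\norm{A(t,L)}_2 \le \sqrt{\norm{A(t,L)}_1\norm{A(t,L)}_\infty} \le m_t-1 < M$. The only difference is cosmetic: you bound the scalar conditional-expectation factor by $1$ (computing it exactly as $p(t,L)$ or $1-p(t,L)$), whereas the paper crudely bounds it by $2$, so you get the sharper constant $M^2$ where the paper states $2M^2$.
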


\begin{proof}
    The semi-definite positiveness is obvious from the quadratic form in  the last line of \eqref{step4} in Lemma \ref{lem:independent}. For the upper bound, first recall from \eqref{step4} in Lemma \ref{lem:independent} that
    \begin{align*}
    &~~~~~D_W^{(t,L)}(\Vec{x}) = D_W^{(t,L)}({x}(t,L)) \\
    &= \Exp{\left(x(t,L) - x'(t,L)\right)^2 | x(t,L)} \cdot A(t,L)^2.
    \end{align*}
    Hence w.p.1,
    \begin{align*}
        \norm{D_W^{(t,L)}(\Vec{x})}_2 & =\norm{\Exp{\left(x(t,L) - x'(t,L)\right)^2 | x(t,L)} \cdot A(t,L)^2}_2\\
        &=\Exp{\left(x(t,L) - x'(t,L)\right)^2 | x(t,L)} \norm{  A(t,L)^2}_2\\
        &< 2 \norm{A(t,L)^2}_2 \leq 2 \norm{A(t,L)}_2^2 \\
        &\leq 2 \norm{A(t,L)}_\infty \norm{A(t,L)}_1 \\
        &\leq 2 (m_t-1) (m_t - 1) \\
        &< 2 M^2.
    \end{align*}
\end{proof}

\begin{lemma} \label{lem:subaddCGF} (Lemma 3.4, \cite{tropp2012user})
    For a finite sequence $\{X_k\}$ of independent, random, symmetric matrices, and any $\theta > 0$,
    \begin{align*}
        \Exp{ \trace \exp{ \left( \sum_{k} \theta X_k \right)}} \leq \trace \exp \sum_{k} \log \Exp{\exp \left( \theta X_k \right)}.
    \end{align*}
\end{lemma}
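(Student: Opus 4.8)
The plan is to reduce the claim to Lieb's concavity theorem and then to peel off the summands one at a time by Jensen's inequality. The key analytic input is the following fact (Lieb's theorem, invoked as Theorem~3.3 in \cite{tropp2012user}): for any fixed symmetric matrix $H \in \mathrm{H}^n$, the map
$$A \mapsto \trace \exp(H + \log A)$$
is concave on the cone of positive definite matrices. I would use this as a black box; establishing it is the genuinely hard, operator-theoretic part of the argument, and everything below is an elementary consequence.

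Write $\Xi_k := \theta X_k$ and index the sequence by $k = 1, \ldots, K$. Since each $\Xi_k$ is symmetric, $e^{\Xi_k}$ is positive definite, and by independence the matrix moment generating functions $\Exp{e^{\Xi_k}}$ are deterministic positive definite matrices, so each $\log \Exp{e^{\Xi_k}}$ is well defined and Lieb's theorem applies. I would argue by induction on the number of random summands. For the inductive step, I condition on $X_1, \ldots, X_{K-1}$ and set $H := \sum_{k=1}^{K-1} \Xi_k$, which is a fixed symmetric matrix under this conditioning. Writing $\Xi_K = \log(e^{\Xi_K})$ and applying Jensen's inequality to the concave map above (over the expectation in $X_K$) gives
$$\Exp{\trace \exp(H + \Xi_K) \mid X_1, \ldots, X_{K-1}} \leq \trace \exp\!\left(H + \log \Exp{e^{\Xi_K}}\right),$$
where I used independence to identify the conditional expectation of $e^{\Xi_K}$ with the unconditional moment generating function $\Exp{e^{\Xi_K}}$.

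Taking the outer expectation over $X_1, \ldots, X_{K-1}$, the right-hand side has the same structural form as the original quantity but with one fewer random summand: the term $\Xi_K$ has been replaced by the deterministic matrix $\log \Exp{e^{\Xi_K}}$, which can be absorbed into the fixed part $H$ of the exponent at the next stage. Iterating this peeling procedure for $k = K-1, K-2, \ldots, 1$ replaces each $\Xi_k$ by $\log \Exp{e^{\Xi_k}}$ in turn, and after $K$ steps yields
$$\Exp{\trace \exp\!\left(\sum_{k=1}^{K} \Xi_k\right)} \leq \trace \exp\!\left(\sum_{k=1}^{K} \log \Exp{e^{\Xi_k}}\right),$$
which is exactly the claimed inequality after substituting $\Xi_k = \theta X_k$. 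The only real obstacle is Lieb's concavity theorem; granting it, the subadditivity of the trace cumulant generating function follows purely from Jensen's inequality, the independence of the summands, and a routine induction.
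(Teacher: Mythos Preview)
Your proof is correct and is precisely the standard argument from \cite{tropp2012user}: invoke Lieb's theorem to get concavity of $A \mapsto \trace \exp(H + \log A)$, then peel off the independent summands one at a time via conditional Jensen. The paper itself does not prove this lemma at all---it simply cites it as Lemma~3.4 of \cite{tropp2012user}---so your write-up is in fact more detailed than what the paper provides.
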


\begin{proof} [\textbf{Proof of Lemma \ref{lem:bound_weight_proxy}}]
For $\psi > 0$, we have,

\begin{equation}\label{Equation:bound_E1}
\begin{aligned} 
    &~~~~~\Exp{\trace \exp (\psi V_W)} \\
    &= \Exp{\trace \exp \left(\frac{\psi}{2} \cdot \smashoperator{\sum_{\substack{ t \in T \\ L \in \mathfrak{S}(n, m_t)}}} ~D_W^{(t,L)}(\Vec{x}) \right)} \\
    &\leq \trace \exp{ \left( \smashoperator[r]{\sum_{\substack{ t \in T \\ L \in \mathfrak{S}(n, m_t)}}} ~\log \Exp{\exp \left( \frac{\psi}{2}D_W^{(t,L)}(\Vec{x}) \right) }    \right) }
\end{aligned}
\end{equation}
where the inequality on the second line comes from Lemma \ref{lem:independent} and \ref{lem:subaddCGF}. 

Examine the inner summation part from above, letting $K \in \NN$ be the total number of terms in the summation, 
\begin{align*}
    \mathcal{A} &:= \smashoperator[r]{\sum_{\substack{ t \in T \\ L \in \mathfrak{S}(n, m_t)}}} ~\log \Exp{\exp \left( \frac{\psi}{2}D_W^{(t,L)}(\Vec{x})\right)} \\
    &= K \cdot \frac{1}{K} \cdot \smashoperator{\sum_{\substack{ t \in T \\ L \in \mathfrak{S}(n, m_t)}}} ~\log \Exp{\exp \left( \frac{\psi}{2}D_W^{(t,L)}(\Vec{x})\right)} \\
    &\preceq K \cdot \log \left( \frac{1}{K} \cdot \smashoperator{\sum_{\substack{ t \in T \\ L \in \mathfrak{S}(n, m_t)}}} ~\Exp{\exp \left( \frac{\psi}{2}D_W^{(t,L)}(\Vec{x})\right)} \right)
\end{align*}
where the last line uses the concavity property of the matrix logarithm from Lemma \ref{lem:LogCon} and the fact that matrix exponential $\exp \left( \frac{\psi}{2}D_W^{(t,L)}(\Vec{x})\right)$ is positive definite. 

Now, apply Lemma \ref{lem:ChernoffMGF} with $X = \frac{\psi}{2}D_W^{(t,L)}(\Vec{x})$, $\theta = 1$. By Lemma \ref{lem:spectrum}, $X$ is positive semi-definite and $\lambda_{\max}(X) = \frac{\psi}{2} \norm{D_W^{(t,L)}(\Vec{x})}_2 < 1$ a.s. , since $\psi<1/M^2$, hence
\begin{align*}
    \Exp{\exp \left( \frac{\psi}{2}D_W^{(t,L)}(\Vec{x})\right)} &\preceq I + (e-1) \Exp{\frac{\psi}{2}D_W^{(t,L)}(\Vec{x})} \\
    &= I + \frac{(e-1)\psi}{2} \cdot \Exp{D_W^{(t,L)}(\Vec{x})}.
\end{align*}
Therefore, the summation 
\begin{equation}\label{step5}
\begin{aligned}
    \mathcal{A} &\preceq K \cdot \log \left( \frac{1}{K}  \cdot \smashoperator{\sum_{\substack{ t \in T \\ L \in \mathfrak{S}(n, m_t)}}} ~\Exp{\exp \left( \frac{\psi}{2}D_W^{(t,L)}(\Vec{x})\right)} \right) \\
    &\preceq K \cdot \log \left( \frac{1}{K} \cdot \smashoperator{\sum_{\substack{ t \in T \\ L \in \mathfrak{S}(n, m_t)}}} ~\left( I + \frac{(e-1)\psi}{2} \cdot \Exp{D_W^{(t,L)}(\Vec{x})} \right) \right) \\
    &= K \cdot \log \left( I + \frac{(e-1)\psi}{2K} \cdot \smashoperator{\sum_{\substack{ t \in T \\ L \in \mathfrak{S}(n, m_t)}}} ~\Exp{D_W^{(t,L)}(\Vec{x})} \right) =: \mathcal{B}
\end{aligned}
\end{equation}
where the second to last line uses the monotonicity of the matrix logarithm operator from Lemma~\ref{lem:LogCon}. 

Since the trace exponential function is monotone with respect to the semidefinite order (i.e., 
$
    \mathcal{A} \preceq \mathcal{B} \Longrightarrow \trace e^{\mathcal{A}} \leq \trace e^{\mathcal{B}}
$, Sec. 2, \cite{petz1994survey})
we have from equations \eqref{Equation:bound_E1} and \eqref{step5} that, 
\begin{equation}\label{Equation:bound_E2}
\begin{aligned}
    &~~~~~ \Exp{\trace \exp (\psi V_W)} \\
    &\leq \trace \exp{ \mathcal{A}} 
    \leq \trace \exp{\mathcal{B}} \\
    &\leq n \cdot \lambda_{\max} (\exp {\mathcal{B}}) \\
    &= n \!\cdot\! \exp{ \!\left\{ K \!\cdot\! \log \lambda_{\max} \left( I + \frac{(e-1)\psi}{2K} \!\cdot\! \smashoperator{\sum_{\substack{ t \in T \\ L \in \mathfrak{S}(n, m_t)}}} ~\Exp{D_W^{(t,L)}(\Vec{x})} \right) \!\right\} }
\end{aligned}
\end{equation}

where the last equality comes from applying the spectral mapping theorem onto the matrix exponential and the matrix logarithm. Note that
\begin{equation} \label{Equation:bound_E3}
\begin{aligned}
    &~~~~~ \lambda_{\max} \left( I + \frac{(e-1)\psi}{2K} \cdot \smashoperator{\sum_{\substack{ t \in T \\ L \in \mathfrak{S}(n, m_t)}}} ~\Exp{D_W^{(t,L)}(\Vec{x})} \right) 
    \\
    &= 1 + \frac{(e-1)\psi}{2K} \norm{\smashoperator[r]{\sum_{\substack{ t \in T \\ L \in \mathfrak{S}(n, m_t)}}} ~\Exp{D_W^{(t,L)}(\Vec{x})}}_2 ,
\end{aligned}
\end{equation}
where we used $\lambda_{\max} = \norm{A}_2$ for symmetric positive semidefinite matrix $A$.

From Lemma \ref{lem:independent}, we have calculated
\begin{align*}
    &~~~~~ D_W^{(t,L)}(\Vec{x}) \\
    &= \Exp{(x(t,L) - x'(t,L))^2 | x(t,L)} \cdot A(t,L)^2 \\
    &= \left( x(t,L)^2 - 2 x(t,L) p(t,L) + \Exp{x'(t,L)^2} \right) \!\cdot\! A(t,L)^2 \\
    \Longrightarrow  &~~~~~\Exp{D_W^{(t,L)}(\Vec{x})} \\
    &= \left(\Exp{x(t,L)^2} \!-\! 2 p(t,L) p(t,L) \!+\! \Exp{x'(t,L)^2} \right) \!\cdot\! A(t,L)^2 \\
    &= 2 \left( \Exp{x(t,L)^2} - \Exp{x(t,L)}^2 \right) \cdot A(t,L)^2 \\
    &= 2  \var{x(t,L)} \cdot A(t,L)^2.
\end{align*}

Substituting in \eqref{Equation:bound_E3},
\begin{align*}
    &~~~~~\lambda_{\max} \left( I + \frac{(e-1)\psi}{2K} \cdot \smashoperator{\sum_{\substack{ t \in T \\ L \in \mathfrak{S}(n, m_t)}}} ~\Exp{D_W^{(t,L)}(\Vec{x})} \right) \\
    &= 1 + \frac{(e-1)\psi}{2K} \norm{\smashoperator[r]{\sum_{\substack{ t \in T \\ L \in \mathfrak{S}(n, m_t)}}} ~\Exp{D_W^{(t,L)}(\Vec{x})}}_2 \\
    &= 1 + \frac{(e-1)\psi}{K} \norm{\smashoperator[r]{\sum_{\substack{ t \in T \\ L \in \mathfrak{S}(n, m_t)}}} ~\var{x(t,L)} \cdot A(t,L)^2 }_2 \\
    &= 1 + \frac{(e-1)\psi}{K} \cdot v^2 \\
    &\leq  1 + \frac{(e-1)\psi}{K} \cdot \Delta_w M
\end{align*}
where we used $v^2 \leq \Delta_w M$ as derived in the proof of Proposition~\ref{prop:main_wSUGM_easy} before. 

Substituting into \eqref{Equation:bound_E2},
\begin{align*}
    &~~~~~\Exp{\trace \exp (\psi V_W)}  \\
    &\leq n \cdot \exp{ \left( K \cdot \log \left( 1 + \frac{(e-1)\psi}{K} \cdot \Delta_w M \right) \right) }.
\end{align*}

Finally, since $\log(1+a) \leq a $ for $a > 0$,
\begin{align*}
    \Exp{\trace \exp (\psi V_W)} &\leq n \cdot \exp{ \left\{ K \cdot \frac{(e-1)\Delta_w M }{K} \cdot \psi \right\} } \\
    &= n \cdot \exp{ \left\{ (e-1)\Delta_w M  \cdot \psi \right\} } \\
    \Longrightarrow \Exp{\atrace \exp (\psi V_W)} &\leq \exp{ \left\{ (e-1)\Delta_w M  \cdot \psi \right\} } \\
    \log \Exp{\atrace \exp (\psi V_W)} &\leq (e-1)\Delta_w M  \cdot \psi
\end{align*}
\end{proof}


\subsection{Proof of Proposition \ref{prop:main_uSUGM_ES}} \label{appendix:main_uSUGM_ES}

\begin{proof}
    By Theorem \ref{thm:es} recalled in the main text, when $|\theta| < \sqrt{\psi/2}$, 
    \begin{align*}
        \log m_{\hat{U}(\Vec{x})}(\theta) = \log \Exp{\atrace \exp (\theta \cdot \hat{U}(\Vec{x}))} \\
        \leq \frac{\theta^2 / \psi}{1 - 2\theta^2 / \psi} \log \Exp{\atrace \exp (\psi V_U)}.
    \end{align*}
    
    From Lemma \ref{lem:VRVS_overall} and \ref{lem:bound_weight_proxy}, for $0 < \psi < \frac{1}{M^2}$, 
    \begin{align*}
        \log \Exp{ \atrace e^{\psi V_U}} \leq \log \Exp{ \atrace e^{\psi V_W}} \leq (e-1)\Delta_w M  \cdot \psi.
    \end{align*}

    Hence for $n$ sufficiently large, given Assumption \eqref{Assumption:P2_A2}, 
    \begin{align*}
        \log \Exp{ \atrace e^{\psi V_U}} 
        &\leq  (e-1) \mu_{T,p} \Delta_u M  \cdot \psi.
    \end{align*}
    
    Denote the scalar $\chi = (e-1) \mu_{T,p} \Delta_u M $, then
    \begin{align*}
        \log m_{\hat{U}(\Vec{x})}(\theta) &\leq \frac{\theta^2 / \psi}{1 - 2\theta^2 / \psi} \cdot \chi \psi \\
        &= \frac{\theta^2 \chi}{1 - 2\theta^2 / \psi} .
    \end{align*}

    Fix  $\psi = \frac{1}{2M^2} < 1$. By Theorem \ref{thm:matrix_lap} in the main text, for $t \in \RR_{>0}$
    \begin{align*}
        \Prob{\lambda_{\max}(\hat{U}(\Vec{x})) \geq t } &\leq n \cdot \inf_{\theta > 0} \exp \left\{ -\theta t + \log m_{\hat{U}(\Vec{x})}(\theta) \right\} \\
        &\leq n \cdot \smashoperator[lr]{\inf_{ 0 < \theta < \sqrt{\psi/2}}} ~\exp \left\{ -\theta t + \log m_{\hat{U}(\Vec{x})}(\theta) \right\} \\
        &\leq n \cdot \smashoperator[lr]{\inf_{ 0 < \theta < \sqrt{\psi/2}}} ~\exp \left\{ -\theta t + \frac{\theta^2 \chi}{1 - 2\theta^2 / \psi} \right\}. 
    \end{align*}
 We follow arguments similar to the ones in Section 4.2.4 of \cite{mackey2014matrix} to bound the term in the right hand side of the previous inequality. Specifically, 
    set  $c = \frac{1}{2\psi} = M^2$ and  $\theta^* = \frac{1-1/\sqrt{1+4ct/\chi}}{4c}$. Then
    \begin{align*}
        (1)~~~  \theta^* &= \frac{1-1/\sqrt{1+4ct/\chi}}{4c} > 0, \\
        (2)~~~ \theta^* &= \frac{1-1/\sqrt{1+4ct/\chi}}{4c} < \frac{1}{4c} =\frac{\psi}{2}< \sqrt{\frac{\psi}{2}}.
    \end{align*}
    
    Hence $\theta^* \in (0, \sqrt{\psi/2})$. Using 
    \begin{align*}
        {\theta^*}^2 / \psi  = {\theta^*} \cdot {\theta^*} / \psi < 1 &\Longrightarrow 1 - {\theta^*}^2 / \psi > 0 \\
        c {\theta^*} < \frac{1}{4} &\Longrightarrow 1 - 4c {\theta^*} > 0 \\
        2{\theta^*} / \psi < 1 < 4c &\Longrightarrow 1 - 2{\theta^*}^2 / \psi > 1 - 4c {\theta^*} ,
    \end{align*}
    we have
    \begin{align*}
        &~~~~~\Prob{\lambda_{\max}(\hat{U}(\Vec{x})) \geq t } \\
        &\leq n \cdot \exp \left\{ -{\theta^*} t + \frac{{\theta^*}^2 \chi}{1 - 2{\theta^*}^2 / \psi} \right\}\\
        &\leq n \cdot \exp \left\{ -\theta^* t + \frac{{\theta^*}^2 \chi}{1 - 4c{\theta^*}} \right\}\\
        &= n \cdot \exp \left\{ - \frac{\chi}{16c^2} (1- \sqrt{1+4ct/\chi})^2 \right\} \\
        &\leq n \cdot \exp \left\{ -\frac{t^2}{4\chi + 8ct} \right\}
    \end{align*}
    where the last line depends on the numerical fact that
    \begin{align*}
        (1- \sqrt{1+2x})^2 \geq \frac{x^2}{1+x} ~\text{ for all }~ x \geq 0.
    \end{align*}

    Set $t = \sqrt{16 M^4 \mu_{T,p} \Delta_u  \ln{(2n/\epsilon)}}$ so that  
    \begin{align*}
        t &=  \sqrt{ M^4 \mu_{T,p} \Delta_u  \cdot 16 \ln{(2n/\epsilon)}} \\
        &< \sqrt{M^4\mu_{T,p} \Delta_u  \cdot \mu_{T,p} \Delta_u } \\
        &= M^2\mu_{T,p} \Delta_u ,
    \end{align*}
    where the inequality comes from the Assumption \eqref{Assumption:P2_A3}. For this choice of $t$
    \begin{align*}
        &~~~~~\Prob{\lambda_{\max}(\hat{U}(\Vec{x})) \geq t }  \\
        &\leq n \cdot \exp \left\{ -\frac{t^2}{4 \chi + 8ct} \right\} \\
        &= n \cdot \exp \left\{ -\frac{ 16 M^4 \mu_{T,p} \Delta_u  \ln{(2n/\epsilon)} }{4 \cdot (e-1)\mu_{T,p} \Delta_u  M + 8 \cdot M^2 \cdot t} \right\} \\
        &< n \cdot \exp \left\{ -\frac{ 16 M^4 \mu_{T,p} \Delta_u \ln{(2n/\epsilon)} }{8\mu_{T,p} \Delta_u  M^4 + 8M^2 \cdot M^2\mu_{T,p} \Delta_u } \right\} \\
        &= n \cdot \exp \left\{ -\frac{ 16 M^4 \mu_{T,p} \Delta_u  \ln{(2n/\epsilon)} }{16\mu_{T,p} \Delta_u  M^4 } \right\} \\
        &= \epsilon /2.
    \end{align*}

    We then apply Theorem \ref{thm:matrix_lap} again, this time on the minimum eigenvalue of $\hat{U}(\Vec{x})$. For $t \in \RR_{>0}$, 
    \begin{align*}
        &~~~~~\Prob{\lambda_{\min}(\hat{U}(\Vec{x})) \leq -t } \\
        &\leq n \cdot \inf_{\theta < 0} \exp \left\{ -\theta (-t) + \log m_{\hat{U}(\Vec{x})}(\theta) \right\} \\
        &= n \cdot \inf_{\theta > 0} \exp \left\{ -(-\theta) (-t) + \log m_{\hat{U}(\Vec{x})}(-\theta) \right\} \\
        &= n \cdot \inf_{\theta > 0} \exp \left\{ -\theta t + \log m_{\hat{U}(\Vec{x})}(-\theta) \right\} \\
        &\leq n \cdot \smashoperator[lr]{\inf_{ 0 < \theta < \sqrt{\psi/2}}} ~\exp \left\{ -\theta t + \log m_{\hat{U}(\Vec{x})}(-\theta) \right\}. 
    \end{align*}

    Note that since Theorem \ref{thm:es} works as long as $|\theta| < \sqrt{\psi/2}$, the rest of the argument follows the same as in the preceding paragraphs and we can conclude that with the same choice of $t = \sqrt{16 M^4 \mu_{T,p} \Delta_u  \ln{(2n/\epsilon)}}$,
    \begin{align*}
        \Prob{\lambda_{\min}(\hat{U}(\Vec{x})) \leq -t } < \epsilon /2.
    \end{align*}

    Finally, by union bound, with probability at least $1-\epsilon$, $\norm{A_u - \Exp{A_u}}_2 \leq 4M^2\sqrt{ \mu_{T,p} \Delta_u \ln{(2n/\epsilon)}}$.
\end{proof}


\subsection{Proof of Lemma \ref{lem:prob_frac}} \label{appendix:prob_frac}

\begin{lemma} \label{lem:delta_order}
    Let $G_w$ be the random graphs of size $n$ generated by the model wSUGM$(n,T,p)$, with probabilities $p(\cdot,\cdot)$ following Assumption \eqref{Assumption:prob_A5} and \eqref{Assumption:prob_A6}, and $A_w$ be the corresponding adjacency matrix. Denote $\Delta_w := \norm{\Exp{A_w}}_{\infty}$. There exists some $\gamma := \gamma_{T,p} \in (0,1]$, independent of $n$, such that
    $$\Delta_w \in \Theta(n^\gamma)$$
\end{lemma}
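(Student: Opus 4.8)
The plan is to sandwich $\Delta_w$ between two polynomials of the same degree
\[
\gamma := \max_{t \in T}\,(m_t - h_t - 1),
\]
and to read off $\gamma \in (0,1]$ directly from \eqref{Assumption:prob_A5}. Indeed, \eqref{Assumption:prob_A5} forces $m_t - 2 < h_t < m_t - 1$ for every type $t \in T$, so each exponent $\gamma_t := m_t - h_t - 1$ lies in $(0,1)$, hence so does their maximum $\gamma$, which already places it in $(0,1]$. Throughout I would work from the explicit expression
\[
\Delta_w = \max_{1 \le k \le n}\Big[\sum_{t \in T}\sum_{L \in \mathfrak{S}(n,m_t)} p(t,L)\, d(t,L)_k\Big]
\]
established in the proof of Proposition~\ref{prop:main_wSUGM_easy}, where $d(t,L)_k$ denotes the degree of node $k$ in $g_t(L)$.

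For the upper bound $\Delta_w \in O(n^\gamma)$ I would reuse the counting estimate already recorded in insight~(1) preceding the lemma: for every node $k$ and type $t$ there are $\binom{n-1}{m_t-1} m_t!$ ordered lists containing $k$, each contributing at most $(u_t/n^{h_t})(m_t-1)$ by \eqref{Assumption:prob_A5}. Summing gives, uniformly in $k$, the bound $\Delta_w < \sum_{t \in T} u_t m_t^2\, n^{m_t - h_t - 1} = \sum_{t\in T} u_t m_t^2\, n^{\gamma_t} \le \big(\sum_{t\in T} u_t m_t^2\big)\, n^{\gamma}$, which is $O(n^\gamma)$.

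The lower bound $\Delta_w \in \Omega(n^\gamma)$ is the only step needing a genuinely new idea, and the cleanest route bypasses any single-node analysis by passing to the \emph{average} expected degree via a handshake identity. Summing over all $n$ nodes and exchanging the order of summation converts node degrees into total edge counts:
\[
\sum_{k=1}^n \sum_{t,L} p(t,L)\, d(t,L)_k = \sum_{t \in T} 2 e_t \sum_{L \in \mathfrak{S}(n,m_t)} p(t,L),
\]
where $e_t \ge 1$ is the constant number of edges in the pattern $g_t$. Then \eqref{Assumption:prob_A6}(1) guarantees at least $\xi_t\,|\mathfrak{S}(n,m_t)|$ lists with $p(t,L) \ge l_t/n^{h_t}$ (by \eqref{Assumption:prob_A5}), and since $|\mathfrak{S}(n,m_t)| = \Theta(n^{m_t})$, each inner sum $\sum_L p(t,L)$ is bounded below by a constant times $n^{m_t - h_t}$. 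Retaining only a type $t^\star$ attaining the maximum $\gamma = \gamma_{t^\star}$, so that $m_{t^\star} - h_{t^\star} = \gamma + 1$, shows the total is $\Omega(n^{\gamma+1})$. Dividing by $n$, the average expected degree is $\Omega(n^\gamma)$, and since $\Delta_w$ is the maximum over nodes it dominates the average, giving $\Delta_w \in \Omega(n^\gamma)$.

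Combining the two bounds yields $\Delta_w \in \Theta(n^\gamma)$ with $\gamma \in (0,1] $ independent of $n$. The main obstacle is exactly the lower bound: a direct per-node estimate is awkward because whether an edge incident to $k$ appears in $g_t(L)$ depends on the position of $k$ in the ordered list $L$ and on the possibly asymmetric pattern $g_t$. The handshake/averaging device sidesteps this entirely, since the summed degrees collapse to $2e_t$ times the expected number of realized subgraphs of each type, a quantity for which \eqref{Assumption:prob_A6}(1) supplies a clean lower bound matching the degree of the upper estimate.
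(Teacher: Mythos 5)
Your proposal is correct and takes essentially the same route as the paper: the upper bound is the same per-node count of $\binom{n-1}{m_t-1}\,m_t!$ ordered lists through a fixed node with $p(t,L)\le u_t/n^{h_t}$, and your ``handshake/averaging'' lower bound is exactly the paper's opening step $\Delta_w \ge \frac{1}{n}\sum_{t,L} p(t,L)\sum_k d(t,L)_k$, where each nonempty subgraph contributes total degree at least $2$ and Assumptions \eqref{Assumption:prob_A5} and \eqref{Assumption:prob_A6}(1) supply $\Omega(n^{m_t-h_t})$ for each inner sum. The only cosmetic difference is that you isolate the dominant type $t^\star$ and name $\gamma=\max_t(m_t-h_t-1)$ explicitly, whereas the paper keeps the full sum $\Theta\bigl(\sum_t n^{m_t-1-h_t}\bigr)$ and extracts $\gamma$ implicitly.
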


\begin{proof} 
    Using $d(t,L)_i$ to denote the degree of node $i$ in subgraph $g_t(L)$, where $d(t,L)_i = 0$ if $i \notin L$. We have
    \begin{align*}
        \Delta_w &:= \norm{\Exp{A}}_{\infty} = \max_{1 \leq k \leq n} \left[ \smashoperator[r]{\sum_{\substack{ t \in T \\ L \in \mathfrak{S}(n, m_t)}}} ~ p(t,L) \cdot d(t,L)_k \right] .
    \end{align*}

    On one hand, 
    \begin{align*}
        \Delta_w 
        &\geq \frac{1}{n} \sum_{t \in T} \left[ \smashoperator[r]{\sum_{\substack{L \in \mathfrak{S}(n, m_t)}}} ~p(t,L) \cdot \sum_{1 \leq k \leq n} d(t,L)_k \right] \\
        &\geq \frac{1}{n} \sum_{t \in T} \left[ \sum_{\substack{L \in \mathfrak{S}(n, m_t) 
        \\ p(t,L) > 0}} \frac{l_t}{n^{h_t}} \cdot \sum_{1 \leq k \leq n} d(t,L)_k \right] \\
        &\geq \frac{1}{n} \sum_{t \in T} \left[ \frac{l_t}{n^{h_t}} \cdot \sum_{\substack{L \in \mathfrak{S}(n, m_t) \\ p(t,L) > 0}} \sum_{1 \leq k \leq n} d(t,L)_k \right] \\
        &\geq \frac{1}{n} \sum_{t \in T} \left[ \frac{l_t}{n^{h_t}} \cdot 2 \xi_t {{n}\choose{m_t}} \right] \\
        & \geq  \sum_{t \in T} \frac{2 l_t}{n^{h_t+1}} \cdot \xi_t \left(\frac{n}{m_t} \right)^{m_t} \in \Theta\left(  \sum_{t \in T} n^{m_t-1- h_t} \right) .
    \end{align*}

    On the other hand, 
    \begin{align*}
        \Delta_w
        &\leq \max_{1 \leq k \leq n} \left[ \sum_{t \in T} \frac{u_t}{n^{h_t}} \cdot \left( \sum_{{L \in \mathfrak{S}(n, m_t)|k \in L}} m_t - 1 \right) \right] \\
        &= \max_{1 \leq k \leq n} \left[ \sum_{t \in T} \frac{u_t(m_t-1)}{n^{h_t}} \cdot \left( \sum_{{L \in \mathfrak{S}(n, m_t)|k \in L}} 1 \right) \right] \\
        &\leq \max_{1 \leq k \leq n} \left[ \sum_{t \in T} \frac{u_t (m_t-1)}{n^{h_t}} \cdot {{n-1}\choose{m_t-1}} \cdot m_t!\right] \\
        &\le \sum_{t \in T} \frac{u_t (m_t-1)}{n^{h_t}} \cdot {\left(\frac{e(n-1)}{m_t-1}\right)^{m_t-1}} \cdot m_t! \\
        &\in \Theta\left(  \sum_{t \in T} n^{m_t-1- h_t} \right) 
    \end{align*}
    where we used ${n \choose k} \leq \left(\frac{en}{k} \right)^k$.
    
    The conclusion follows since $m_t-1-h_t\in (0,1]$ for any $t$ by assumption.
\end{proof}

\begin{lemma} \label{cor:binomial}
    Given function $F(n) : \NN \rightarrow \RR$ where
    $$F(n) =  1-(1 - \frac{a}{n^{\alpha}})^{b(n-k)^\beta},$$
    with $k \in \NN$ fixed, $a,b,\alpha,\beta > 0$ and $\alpha > \beta$, we have
    $$F(n) \in \Theta(n^{\beta-\alpha}).$$
\end{lemma}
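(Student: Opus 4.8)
The plan is to pass to the complementary quantity $1 - F(n) = (1 - a/n^{\alpha})^{b(n-k)^{\beta}}$ and sandwich $F(n)$ between two constant multiples of $n^{\beta-\alpha}$ using only elementary inequalities, so that no delicate series expansion is needed. Throughout I would write $m := b(n-k)^{\beta}$ for the exponent and $x := a/n^{\alpha}$ for the base offset; the whole argument hinges on the product
\[
    mx = ab\,\frac{(n-k)^{\beta}}{n^{\alpha}},
\]
which satisfies $mx \asymp n^{\beta-\alpha}$ since $(n-k)^{\beta} \asymp n^{\beta}$. The hypothesis $\alpha > \beta$ enters here and only here: it forces $\beta-\alpha < 0$, hence $mx \to 0$, which is precisely what places us in the regime where $F(n)$, $mx$, and $1-e^{-mx}$ are all comparable up to constants. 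For $n$ large enough one has $0 \le x \le 1$, $m \ge 1$, $n \ge 2k$, and $mx \le 1$ simultaneously, and every step below assumes such $n$.

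For the upper bound I would invoke Bernoulli's inequality $(1-x)^{m} \ge 1 - mx$, valid for $x \in [0,1]$ and real $m \ge 1$, which gives at once
\[
    F(n) = 1 - (1-x)^{m} \le mx = ab\,\frac{(n-k)^{\beta}}{n^{\alpha}} \le ab\, n^{\beta-\alpha},
\]
where the last step uses $(n-k)^{\beta} \le n^{\beta}$ for $n > k$. This yields the $O(n^{\beta-\alpha})$ direction with constant $ab$.

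For the lower bound I would start from $1-x \le e^{-x}$ (true for all real $x$), so that $(1-x)^{m} \le e^{-mx}$ and therefore $F(n) \ge 1 - e^{-mx}$. Because $mx \to 0$, I can apply the elementary estimate $1 - e^{-z} \ge z/2$, valid on $0 \le z \le 1$, to get $F(n) \ge \tfrac{1}{2}\,mx$. Finally, bounding $(n-k)^{\beta} \ge (n/2)^{\beta} = 2^{-\beta} n^{\beta}$ for $n \ge 2k$ produces
\[
    F(n) \ge 2^{-(\beta+1)}\,ab\, n^{\beta-\alpha},
\]
the matching $\Omega(n^{\beta-\alpha})$ direction. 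Combining the two displayed bounds gives $F(n) \in \Theta(n^{\beta-\alpha})$.

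I do not expect a serious obstacle. The only points requiring mild care are verifying that the threshold conditions ($x \le 1$, $m \ge 1$, $n \ge 2k$, and especially $mx \le 1$) all hold for $n$ sufficiently large, and ensuring the small-argument estimate $1 - e^{-z} \ge z/2$ is used strictly on the range $z = mx \in [0,1]$. The conceptual crux is simply recognizing that $\alpha > \beta$ drives $mx \to 0$: in that limit the convexity gap between $1 - e^{-mx}$ and its linearization $mx$ is controlled by a constant, which is exactly what makes the two-sided $\Theta$ estimate go through.
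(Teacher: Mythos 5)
Your proposal is correct, and it takes a genuinely different route from the paper. The paper proves the lemma by series expansion: it writes $G(n) = (1 - a/n^{\alpha})^{b(n-k)^{\beta}} = \exp\bigl( b(n-k)^{\beta}\ln(1 - a/n^{\alpha}) \bigr)$, Taylor-expands the logarithm around $1$ and then the exponential around $0$, and uses $\alpha > \beta$ to suppress all higher-order terms, concluding $G(n) \asymp 1 - ab\, n^{\beta-\alpha}$ and hence $F(n) \asymp ab\, n^{\beta-\alpha}$. You instead sandwich $F(n)$ directly: the generalized Bernoulli inequality $(1-x)^{m} \ge 1 - mx$ gives the upper bound $F(n) \le mx \le ab\, n^{\beta-\alpha}$, while $1-x \le e^{-x}$ combined with $1 - e^{-z} \ge z/2$ on $[0,1]$ gives $F(n) \ge \tfrac12 mx \ge 2^{-(\beta+1)} ab\, n^{\beta-\alpha}$ once $n$ is large enough that $mx \le 1$, $m \ge 1$, and $n \ge 2k$ — all of which you correctly reduce to the single observation that $\alpha > \beta$ forces $mx \asymp n^{\beta-\alpha} \to 0$. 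Your argument is arguably tighter as a piece of rigor: the paper's manipulation of the two infinite series under $\asymp$ is informal (the suppression of tail terms is asserted rather than bounded), whereas your inequalities are exact with explicit constants and thresholds. What the paper's expansion buys in exchange is the sharp leading constant, i.e., the asymptotic equivalence $F(n) \sim ab\, n^{\beta-\alpha}$, which your sandwich loses by the factor $2^{\beta+1}$ between the two bounds; since only the $\Theta$ statement is used downstream (in the proof of Lemma~\ref{lem:prob_frac}), this loss is immaterial, and your version suffices for everything the paper needs.
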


\begin{proof}
    Define the function 
    \begin{align*}
        G(n) &= (1 - \frac{a}{n^{\alpha}})^{b(n-k)^\beta} \\
        &= \exp{\left(  {b(n-k)^\beta} \ln{\left(1 - \frac{a}{n^{\alpha}}\right)} \right)} \\
        \Longrightarrow 
        \ln G(n) &= {b(n-k)^\beta} \ln{\left(1 - \frac{a}{n^{\alpha}}\right)}  \\
        &= {b(n-k)^\beta} \left[ - \frac{a}{n^{\alpha}} - \sum_{h = 2}^{\infty} \frac{1}{h} \left(\frac{a}{n^\alpha}\right)^h \right]
    \end{align*}
    where we apply Taylor expansion on $\ln(\cdot)$ around 1. 

    Hence,
    \begin{align*}
        G(n) &= \exp{\left( -\frac{ab(n-k)^\beta}{n^\alpha} - b(n-k)^\beta \cdot \sum_{h = 2}^{\infty} \frac{1}{h} \left(\frac{a}{n^\alpha}\right)^h  \right)} \\
        &\asymp \exp{\left( - ab \cdot n^{\beta-\alpha} - b \cdot n^\beta \cdot \sum_{h = 2}^{\infty} \frac{1}{h} \left(\frac{a}{n^\alpha}\right)^h  \right)} \\
        &= 1 - ab \cdot n^{\beta-\alpha} - b \cdot n^\beta \cdot \sum_{h = 2}^{\infty} \frac{1}{h} \left(\frac{a}{n^\alpha}\right)^h \\
        &~~~~ + \sum_{s = 2}^\infty \frac{1}{s!} \left[ - ab \cdot n^{\beta-\alpha} - b \cdot n^\beta \cdot \sum_{h = 2}^{\infty} \frac{1}{h} \left(\frac{a}{n^\alpha}\right)^h \right]^s \\
        &\asymp 1 - ab \cdot n^{\beta-\alpha}
    \end{align*}
    as we apply Taylor expansion on $\exp(\cdot)$ around 0 and use the fact that $\alpha > \beta$ to suppress higher order terms. 

    Therefore, $F(n) = 1 - G(n) \asymp ab \cdot n^{\beta - \alpha}$.
\end{proof}

\begin{proof} [\textbf{Proof of Lemma \ref{lem:prob_frac}}] 
    Given wSUGM$(n,T,p)$ and Assumption \eqref{Assumption:prob_A5} and \eqref{Assumption:prob_A6}, by Lemma \ref{lem:delta_order}, 
    $$\Delta_w \in \Theta(n^\gamma)$$
    for some $\gamma := \gamma_{T,p} \in (0,1]$, independent of $n$. Therefore, given some fixed $\epsilon > 0$, for $n$ sufficiently large,
    \begin{equation*} 
        \Delta_w > 16 \ln{(2n/\epsilon)} ,
    \end{equation*}
    and Assumption \eqref{Assumption:P1_A1} and \eqref{Assumption:P3_A4} are satisfied. 

    To show the second half of this proof, we follow the proof of Lemma \ref{lem:delta_order}. Let $t^*\in \arg\max_t [m_t-h_t-1 ]$ with minimum $m_t$, and denote $\bar \gamma = m_{t^*}-h_{t^*}-1 \in (0,1]$, which gives
    \begin{align*}
        \Delta_w &\in \Theta(n^{\bar \gamma}).
    \end{align*} 

    Now, let $A_{u|t}$ and $\Delta_{u|t}$ denote the unweighted adjacency matrix and its maximum expected degree when only subgraphs of type $t$ are generated, and let $N^t_{i,j}$ denote the number of subgraphs of type $t$ with non-zero generating probability that contains node $i$ and $j$. Clearly $\Delta_u \ge \Delta_{u|t^*}$. We distinguish two cases:

    \begin{enumerate}
        \item If $t^*=\textup{links}$, then since there is no overlap among links, $\Delta_{u|t^*} \in \Theta(n^{\bar \gamma})$ following the same derivation of the proof of Lemma \ref{lem:delta_order}.
        
        \item If $t^*\neq$ \textup{links}, then $\Exp{A_{u|t^*}}_{i,j}$ equals the probability that at least one subgraph of type $t^*$ containing node $i$ and $j$ is being generated. 
        
        On one hand, we have
        \begin{align*}
            \Delta_{u|t^*} &\leq n \cdot \max_{i,j} \left( 1-\left(1-\frac{u_{t^*}}{n^{h_{t^*}}}\right)^{N^{t^*}_{i,j}} \right) \\
            &\leq n \cdot \left( 1-\left(1-\frac{u_{t^*}}{n^{h_{t^*}}}\right)^{m_{t^*}! \cdot {{n-2} \choose {m_{t^*}-2}}} \right) \\
            &\leq n \cdot \left( 1-\left(1-\frac{u_{t^*}}{n^{h_{t^*}}}\right)^{m_{t^*}! \cdot (\frac{en}{m_{t^*}-2})^{m_{t^*}-2}} \right) \\
            &= n \cdot \left( 1 - \left(1 - \frac{a}{n^{\alpha}}\right)^{b(n-k)^\beta}\right)
        \end{align*}
        with $k = 0$, $a = u_{t^*}$, $b = m_{t^*}! \cdot (\frac{e}{m_{t^*}-2})^{m_{t^*}-2}$, $\alpha = h_{t^*}$, $\beta = m_{t^*} - 2$. Since $\alpha = h_{t^*} > m_{t^*} - 2 = \beta > 0$ from Assumption \eqref{Assumption:prob_A5}, we can apply Lemma \ref{cor:binomial} and conclude that 
        \begin{align*}
            \Delta_{u|t^*} &\leq n \cdot \left( 1 - \left(1 - \frac{a}{n^{\alpha}}\right)^{b(n-k)^\beta}\right) \\
            &\asymp n \cdot n^{\beta - \alpha} \\
            &= n \cdot n^{m_{t^*}-2 - h_{t^*}} \\
            &=  n^{m_{t^*}- 1 - h_{t^*}}.
        \end{align*}

        On the other hand,
        \begin{align*}
            \Delta_{u|t^*} &\geq \frac{1}{n} \cdot \sum_{i,j} \Exp{A_{u|t^*}}_{i,j} \\
            &\geq \frac{1}{n} \cdot \sum_{i,j} \left[ 1-\left(1-\frac{l_{t^*}}{n^{h_{t^*}}}\right)^{N^{t^*}_{i,j}} \right] .
        \end{align*}
        By Assumption \eqref{Assumption:prob_A6},
        \begin{align*}
            N^{t^*}_{i,j} &\geq \xi_{t,i,j} \cdot m_{t^*}! \cdot {{n-2} \choose {m_{t^*}-2}} \\
            &\geq \xi_{t,i,j} \cdot m_{t^*}! \cdot \left(\frac{n-2}{m_{t^*}-2}\right)^{m_{t^*}-2} .
        \end{align*}

        Applying Lemma \ref{cor:binomial} with $k = 2$, $a = l_{t^*}$, $b = \xi_{t,i,j} \cdot m_{t^*}! \cdot (\frac{1}{m_{t^*}-2})^{m_{t^*}-2}$, $\alpha = h_{t^*}$, $\beta = m_{t^*} - 2$, we conclude that 
        \begin{align*}
            \Delta_{u|t^*} &\geq \frac{1}{n} \cdot \sum_{i,j} \left[ 1-\left(1-\frac{l_{t^*}}{n^{h_{t^*}}}\right)^{N^{t^*}_{i,j}} \right] \\
            &\geq \frac{1}{n} \cdot \sum_{i,j} \left[ 1-\left(1-\frac{a}{n^{\alpha}}\right)^{b \cdot (n-2)^\beta} \right] \\
            &\asymp \frac{1}{n} \cdot \sum_{i,j} n^{m_{t^*}-2 - h_{t^*}} \\
            &\asymp n^{m_{t^*}-1 - h_{t^*}}.
        \end{align*}

    \end{enumerate}

    Hence $\Delta_u \ge \Delta_{u|t^*} \in \Theta(n^{\bar \gamma})$ and $\Delta_w \in \Theta(n^{\bar \gamma})$. There exist some scalar $\mu_{T,p}$ independent of $n$, such that
    \begin{equation*} 
        \frac{\Delta_w}{\Delta_u} \leq \mu_{T,p} 
    \end{equation*}
    holds for $n$ sufficiently large and Assumption \eqref{Assumption:P2_A2} holds. Finally, Assumption \eqref{Assumption:P2_A3} is satisfied following Assumption \eqref{Assumption:P1_A1} and Assumption \eqref{Assumption:P2_A2}.
\end{proof}

\subsection{{Proof of Corollary \ref{cor:centrality}}}\label{appendix:centrality}

We start by deriving a convergence result for the normalized adjacency matrix.
\begin{corollary} \label{cor:convergence_SUGM}
    Let $G$ be a random graph of size $n$ generated by the $\text{SUGM}(n,T,p)$ (either weighted or unweighted). Let $A$ be the adjacency matrix of $G$, $\Delta := \norm{\Exp{A}}_{\infty}$ be the maximum expected degree, and $\Bar{A} := A/\Delta$ be the normalized adjacency matrix.  If Assumption \eqref{Assumption:prob_A5} and \eqref{Assumption:prob_A6} are satisfied, then with probability at least $1-\epsilon$,  
    $$\norm{\Bar{A} - \Exp{\Bar{A}}}_2 \leq \rho(n)$$
    where $\rho(n) \rightarrow 0$ as $n \rightarrow \infty$.
\end{corollary}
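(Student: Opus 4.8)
The plan is to reduce this normalized statement to the unnormalized spectral-norm bounds already established in Propositions \ref{prop:main_wSUGM_easy} and \ref{prop:main_uSUGM_ES}. The key observation is that the normalizing constant $\Delta = \norm{\Exp{A}}_\infty$ is a deterministic scalar depending only on $T$ and $p$, not on the random realization. Since a nonnegative scalar factors out of both the expectation and the spectral norm, I would first write
$$\norm{\Bar{A} - \Exp{\Bar{A}}}_2 = \frac{1}{\Delta}\,\norm{A - \Exp{A}}_2,$$
so that any high-probability bound on $\norm{A - \Exp{A}}_2$ transfers directly to the normalized quantity upon dividing by $\Delta$.

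Next I would treat the two models symmetrically, in each case first invoking Lemma \ref{lem:prob_frac} to verify that Assumptions \eqref{Assumption:prob_A5} and \eqref{Assumption:prob_A6} imply the hypotheses required by the relevant proposition. For the weighted model (so $\Delta = \Delta_w$), Lemma \ref{lem:prob_frac} ensures \eqref{Assumption:P1_A1}, and Proposition \ref{prop:main_wSUGM_easy} then gives, with probability at least $1-\epsilon$,
$$\norm{\Bar{A} - \Exp{\Bar{A}}}_2 \le \frac{\sqrt{4M^2 \Delta_w \ln(2n/\epsilon)}}{\Delta_w} = \frac{2M\sqrt{\ln(2n/\epsilon)}}{\sqrt{\Delta_w}}.$$
For the unweighted model (so $\Delta = \Delta_u$), Lemma \ref{lem:prob_frac} ensures \eqref{Assumption:P2_A2} and \eqref{Assumption:P2_A3}, and Proposition \ref{prop:main_uSUGM_ES} gives
$$\norm{\Bar{A} - \Exp{\Bar{A}}}_2 \le \frac{4M^2 \sqrt{\mu_{T,p}\Delta_u \ln(2n/\epsilon)}}{\Delta_u} = \frac{4M^2\sqrt{\mu_{T,p}\ln(2n/\epsilon)}}{\sqrt{\Delta_u}}.$$

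Finally I would show that each bound vanishes as $n \to \infty$. The essential ingredient is Lemma \ref{lem:delta_order}, together with the extension to $\Delta_u$ established inside the proof of Lemma \ref{lem:prob_frac}, which give $\Delta_w, \Delta_u \in \Theta(n^{\gamma})$ for some fixed exponent $\gamma \in (0,1]$ independent of $n$. Substituting this polynomial growth, both right-hand sides are of order $\sqrt{\ln(2n/\epsilon)}/n^{\gamma/2}$, which tends to $0$ because any positive power of $n$ eventually dominates $\sqrt{\ln n}$. Setting $\rho(n)$ equal to the corresponding bound then completes the proof in each case.

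I do not anticipate a serious technical obstacle, as the corollary is essentially a bookkeeping consequence of the preceding results. The only points requiring care are making the deterministic-scalar normalization explicit — confirming that $\Delta$ is independent of the realization so it may be pulled outside the norm and expectation — and tracking which maximum expected degree ($\Delta_w$ or $\Delta_u$) and which concentration proposition applies in each of the weighted and unweighted cases, so that the $\Theta(n^\gamma)$ growth from Lemma \ref{lem:delta_order} is applied to the correct quantity.
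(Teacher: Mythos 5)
Your proposal is correct and follows essentially the same route as the paper's own proof: invoke Lemma \ref{lem:prob_frac} to verify the hypotheses, apply Proposition \ref{prop:main_wSUGM_easy} (weighted) or Proposition \ref{prop:main_uSUGM_ES} (unweighted), divide by the deterministic normalizer $\Delta$, and conclude via the $\Theta(n^\gamma)$ growth of the maximum expected degree, so that $\sqrt{\ln(2n/\epsilon)}/n^{\gamma/2} \rightarrow 0$. The only cosmetic difference is that for the unweighted case the paper cites Assumption \eqref{Assumption:P2_A2} together with Lemma \ref{lem:delta_order} to lower bound $\Delta_u$, whereas you point directly to the $\Delta_u \in \Theta(n^{\bar\gamma})$ estimate established inside the proof of Lemma \ref{lem:prob_frac}; these are equivalent justifications of the same fact.
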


\begin{proof} 
    By Lemma \ref{lem:prob_frac} Assumption \eqref{Assumption:prob_A5} and \eqref{Assumption:prob_A6} imply that Assumption \eqref{Assumption:P1_A1},  \eqref{Assumption:P2_A2} and \eqref{Assumption:P2_A3} hold. 
    
    For $A_w \sim \text{wSUGM}(n,T,p)$, by Proposition \ref{prop:main_wSUGM_easy}, with probability at least $1-\epsilon$, for $n$ sufficiently large, 
    $$\norm{A_w - \Exp{A_w}}_2 \leq \sqrt{4M^2 \Delta_w\ln{(2n/\epsilon)}}.$$
    Dividing both sides of the inequality by $\Delta_w$, 
    \begin{align*}
        \norm{\Bar{A}_w - \Exp{\Bar{A}}_w}_2 
        &\le \sqrt{\frac{4 M^2 \ln{(2n/\epsilon)}}{\Delta_w}} \rightarrow 0 ~\text{ as } ~n \rightarrow \infty,
    \end{align*}
    where the convergence to zero follows from  Lemma \ref{lem:delta_order}. 

    For $A_u \sim \text{uSUGM}(n,T,p)$, by Proposition \ref{prop:main_uSUGM_ES}, with probability at least $1-\epsilon$, for $n$ sufficiently large, 
    $$\norm{A_u - \Exp{A_u}}_2 \leq 4M^2\sqrt{\mu_{T,p} \Delta_u \ln{(2n/\epsilon)}}.$$
    Dividing both sides of the inequality by $\Delta_u$, 
    \begin{align*}
        \norm{\Bar{A}_u - \Exp{\Bar{A}}_u}_2 &\le  4M^2 \sqrt{\frac{\mu_{T,p} \ln{(2n/\epsilon)}}{\Delta_u}}  \rightarrow 0 ~\text{ as } ~n \rightarrow \infty,
    \end{align*}
    where convergence to zero follows from Assumption \eqref{Assumption:P2_A2} and Lemma \ref{lem:delta_order}.
\end{proof}

\begin{proof} [\textbf{Proof of Corollary \ref{cor:centrality}}]
    We now use Corollary \ref{cor:convergence_SUGM}  to prove  convergence of the centrality measures. All subsequent statements hold with probability $1-\epsilon$.
    
    \begin{enumerate}
        \item For degree centrality: $c^d(\Bar{A}) = \Bar{A} \mathbf{1}$ and $c^d (\Exp{\Bar{A}}) = \Exp{\Bar{A}} \mathbf{1}$. Since $\norm{\mathbf{1}}_2 = \sqrt{n}$, the average difference between the two degree centrality measures 
        \begin{align*}
            &~~~~~ \frac{1}{n} \norm{c^d(\Bar{A}) - c^d(\Exp{\Bar{A}})}_1 
            \\
            &=\frac{1}{n} \norm{\Bar{A} \mathbf{1} -  \Exp{\Bar{A}} \mathbf{1}}_1 = \frac{1}{\sqrt{n}} \frac{\norm{\Bar{A} \mathbf{1} -  \Exp{\Bar{A}} \mathbf{1}}_1}{\sqrt{n}} \\
            &\leq \frac{1}{\sqrt{n}} \norm{\Bar{A} \mathbf{1} -  \Exp{\Bar{A}} \mathbf{1}}_2 \leq \frac{1}{\sqrt{n}} \norm{\Bar{A} -  \Exp{\Bar{A}}}_2 \norm{\mathbf{1}}_2 \\
            &= \norm{\Bar{A} -  \Exp{\Bar{A}}}_2 \le \rho(n)
            \rightarrow 0 \text{ as } n \rightarrow \infty.
        \end{align*}
        
        \item For eigenvector centrality: By Corollary 1 from \cite{yu2015useful} it holds
        \begin{equation*} 
            \norm{v_1(\Bar{A}) - v_1(\Exp{\Bar{A}})}_2 \leq \frac{2^{3/2}\norm{\Bar{A} -  \Exp{\Bar{A}}}_2}{\lambda_1(\Exp{\Bar{A}}) - \lambda_2(\Exp{\Bar{A}})}. 
        \end{equation*} 
        
        Combine it with the definition and assumption \eqref{Assumption:connect}, we get
        \begin{align*}
            &~~~~~ \frac{1}{n} \norm{c^e(\Bar{A}) - c^e (\Exp{\Bar{A}})}_1 \\
            &= \frac{1}{\sqrt{n}} \norm{v_1(\Bar{A}) - v_1(\Exp{\Bar{A}})}_1 \\
            &<\norm{v_1(\Bar{A}) - v_1(\Exp{\Bar{A}})}_2 \leq  \frac{2^{3/2}\norm{\Bar{A} -  \Exp{\Bar{A}}}_2}{\lambda_1(\Exp{\Bar{A}}) - \lambda_2(\Exp{\Bar{A}})} \\
            &=  \frac{2^{3/2}\norm{\Bar{A} -  \Exp{\Bar{A}}}_2}{m} \leq  2^{3/2} \rho(n) \frac{1}{m}\rightarrow 0 \text{ as } n \rightarrow \infty .
        \end{align*}

        \item For Katz centrality: Since $\alpha \in (0,\frac{1}{\lambda^*})$, and $\lambda_1(\Exp{\Bar{A}}) \rightarrow \lambda^*$ by Assumption \eqref{Assumption:connect}, there exist some $N_1 \in \NN$ such that $I - \alpha \Exp{\Bar{A}}$ is invertible for all $n > N_1$ and $c^k_{\alpha}(\Exp{\Bar{A}})) =(I - \alpha \Exp{\Bar{A}})^{-1} \mathbf{1}$ is well defined.

        Under Assumption \eqref{Assumption:prob_A5} and \eqref{Assumption:prob_A6}, $\norm{\Bar{A} -  \Exp{\Bar{A}}}_2 \rightarrow 0$ as $n \rightarrow \infty$ by Corollary \ref{cor:convergence_SUGM}. By Weyl's inequality, $| \lambda_1({\Bar{A}}) - \lambda_1(\Exp{\Bar{A}})| \leq \norm{\Bar{A} -  \Exp{\Bar{A}}}_2$, and since $\alpha < \frac{1}{\lambda^*}$, $\lambda_1({\Bar{A}}) \rightarrow \lambda^*$, there exists some $N_{2} > 0$ such that $\alpha < \frac{1}{\lambda_1({\Bar{A}})}$ for any $n > N_{2}$. This implies that for any $n > N_{2}$, $I - \alpha {\Bar{A}}$ is invertible and $c^k_{\alpha}({\Bar{A}}) = (I - \alpha {\Bar{A}})^{-1} \mathbf{1}$ is well defined. \\

        Note that $\forall \epsilon > 0$, $\exists N_3 \in \NN$ such that $\lambda_1(\Exp{\Bar{A}}) \leq \lambda^* (1+\epsilon)$. In the following, we fix $\epsilon > 0$ such that $\alpha(1+\epsilon) < \frac{1}{\lambda^*}$. Then for $n > \max\{N_1, N_2, N_3\}$,
        \begin{align*}
            \norm{(I - \alpha \Exp{\Bar{A}})^{-1}}_2 &= \frac{1}{1 - \alpha \lambda_1(\Exp{\Bar{A}})} \\
            &\leq \frac{1}{1 - \alpha (1+\epsilon) \lambda^*} =: \frac{1}{\delta}
        \end{align*}
        for $\delta := 1 - \alpha (1+\epsilon) \lambda^* \in (0,1)$ independent of $n$. And 
        \begin{align*}
            &~~~~~ \norm{(I - \alpha \Exp{\Bar{A}}) - (I - \alpha {\Bar{A}})}_2 \\
            &= \alpha \norm{\Bar{A} -  \Exp{\Bar{A}}}_2 \\
            &\leq \alpha \rho(n)\rightarrow 0 \text{ as } n \rightarrow \infty .
        \end{align*}

        It then follows by Theorem 2.3.5 from \cite{han2009theoretical} (with $L := I - \alpha \Exp{\Bar{A}}$ $M: = I - \alpha {\Bar{A}}$), that 
        \begin{align*}
            &~~~~~ \norm{(I \!-\! \alpha \Exp{\Bar{A}})^{-1} \!-\! (I \!-\! \alpha {\Bar{A}})^{-1}}_2 \\
            &\leq \frac{\norm{(I \!-\! \alpha \Exp{\Bar{A}})^{-1}}_2^2 \norm{(I \!-\! \alpha \Exp{\Bar{A}}) \!-\! (I \!-\! \alpha {\Bar{A}})}_2}{1\!-\! \norm{(I \!-\! \alpha \Exp{\Bar{A}})^{-1}}_2 \norm{(I \!-\! \alpha \Exp{\Bar{A}}) \!-\! (I \!-\! \alpha {\Bar{A}})}_2} \\
            &\le \frac{ \alpha\rho(n) / \delta^2}{1 \!-\! \alpha\rho(n) / \delta} \\
            &\leq \frac{2\alpha}{\delta^2} \rho(n)
        \end{align*}

        where we used $1 - \alpha\rho(n) / \delta \geq 1/2$ for large $n$.
        And finally, since $\norm{\mathbf{1}}_2 = \sqrt{n}$, we get the average difference between the two Katz centrality measures is
        \begin{align*}
            &~~~~~\frac{1}{{n}} \norm{c^k_{\alpha} (\Bar{A}) - c^k_{\alpha} (\Exp{\Bar{A}})}_1 \\
            &=
            \frac{1}{\sqrt{n}} \frac{\norm{(I - \alpha \Exp{\Bar{A}})^{-1} \mathbf{1} - (I - \alpha {\Bar{A}})^{-1} \mathbf{1}}_1}{\sqrt{n}} \\
            &\leq \frac{1}{\sqrt{n}} \norm{(I - \alpha \Exp{\Bar{A}})^{-1} \mathbf{1} - (I - \alpha {\Bar{A}})^{-1} \mathbf{1}}_2 \\
            &\leq \frac{1}{\sqrt{n}} \norm{(I - \alpha \Exp{\Bar{A}})^{-1} - (I - \alpha {\Bar{A}})^{-1}}_2 \sqrt{n} \\
            &\leq \frac{2\alpha}{\delta^2} \rho(n) \rightarrow 0 \text{ as } n \rightarrow \infty.
        \end{align*}
    \end{enumerate}
\end{proof}

\bibliographystyle{IEEEtran}
\bibliography{reference.bib}

\begin{thebibliography}{10}
\providecommand{\url}[1]{#1}
\csname url@samestyle\endcsname
\providecommand{\newblock}{\relax}
\providecommand{\bibinfo}[2]{#2}
\providecommand{\BIBentrySTDinterwordspacing}{\spaceskip=0pt\relax}
\providecommand{\BIBentryALTinterwordstretchfactor}{4}
\providecommand{\BIBentryALTinterwordspacing}{\spaceskip=\fontdimen2\font plus
\BIBentryALTinterwordstretchfactor\fontdimen3\font minus
  \fontdimen4\font\relax}
\providecommand{\BIBforeignlanguage}[2]{{%
\expandafter\ifx\csname l@#1\endcsname\relax
\typeout{** WARNING: IEEEtran.bst: No hyphenation pattern has been}%
\typeout{** loaded for the language `#1'. Using the pattern for}%
\typeout{** the default language instead.}%
\else
\language=\csname l@#1\endcsname
\fi
#2}}
\providecommand{\BIBdecl}{\relax}
\BIBdecl

\bibitem{chandrasekhar2016network}
A.~G. Chandrasekhar and M.~O. Jackson, ``A network formation model based on
  subgraphs,'' \emph{arXiv preprint arXiv:1611.07658}, 2016.

\bibitem{breza2020using}
E.~Breza, A.~G. Chandrasekhar, T.~H. McCormick, and M.~Pan, ``Using aggregated
  relational data to feasibly identify network structure without network
  data,'' \emph{American Economic Review}, vol. 110, no.~8, pp. 2454--2484,
  2020.

\bibitem{milo2002network}
R.~Milo, S.~Shen-Orr, S.~Itzkovitz, N.~Kashtan, D.~Chklovskii, and U.~Alon,
  ``Network motifs: simple building blocks of complex networks,''
  \emph{Science}, vol. 298, no. 5594, pp. 824--827, 2002.

\bibitem{benson2016higher}
A.~R. Benson, D.~F. Gleich, and J.~Leskovec, ``Higher-order organization of
  complex networks,'' \emph{Science}, vol. 353, no. 6295, pp. 163--166, 2016.

\bibitem{golub2010naive}
B.~Golub and M.~O. Jackson, ``Naive learning in social networks and the wisdom
  of crowds,'' \emph{American Economic Journal: Microeconomics}, vol.~2, no.~1,
  pp. 112--149, 2010.

\bibitem{jackson2015games}
M.~O. Jackson and Y.~Zenou, ``Games on networks,'' in \emph{Handbook of game
  theory with economic applications}.\hskip 1em plus 0.5em minus 0.4em\relax
  Elsevier, 2015, vol.~4, pp. 95--163.

\bibitem{karrer2010random}
B.~Karrer and M.~E. Newman, ``Random graphs containing arbitrary distributions
  of subgraphs,'' \emph{Physical Review E—Statistical, Nonlinear, and Soft
  Matter Physics}, vol.~82, no.~6, p. 066118, 2010.

\bibitem{ritchie2017generation}
M.~Ritchie, L.~Berthouze, and I.~Z. Kiss, ``Generation and analysis of networks
  with a prescribed degree sequence and subgraph family: higher-order structure
  matters,'' \emph{Journal of complex networks}, vol.~5, no.~1, pp. 1--31,
  2017.

\bibitem{avella2018centrality}
M.~Avella-Medina, F.~Parise, M.~T. Schaub, and S.~Segarra, ``Centrality
  measures for graphons: Accounting for uncertainty in networks,'' \emph{IEEE
  Transactions on Network Science and Engineering}, vol.~7, no.~1, pp.
  520--537, 2018.

\bibitem{dasaratha2020distributions}
K.~Dasaratha, ``Distributions of centrality on networks,'' \emph{Games and
  Economic Behavior}, vol. 122, pp. 1--27, 2020.

\bibitem{golub2012homophily}
B.~Golub and M.~O. Jackson, ``How homophily affects the speed of learning and
  best-response dynamics,'' \emph{The Quarterly Journal of Economics}, vol.
  127, no.~3, pp. 1287--1338, 2012.

\bibitem{parise2023graphon}
F.~Parise and A.~Ozdaglar, ``Graphon games: A statistical framework for network
  games and interventions,'' \emph{Econometrica}, vol.~91, no.~1, pp. 191--225,
  2023.

\bibitem{galeotti2010network}
A.~Galeotti, S.~Goyal, M.~O. Jackson, F.~Vega-Redondo, and L.~Yariv, ``Network
  games,'' \emph{The Review of Economic Studies}, vol.~77, no.~1, pp. 218--244,
  2010.

\bibitem{sadler2020diffusion}
E.~Sadler, ``Diffusion games,'' \emph{American Economic Review}, vol. 110,
  no.~1, pp. 225--270, 2020.

\bibitem{rossi2017threshold}
W.~S. Rossi, G.~Como, and F.~Fagnani, ``Threshold models of cascades in
  large-scale networks,'' \emph{IEEE Transactions on Network Science and
  Engineering}, vol.~6, no.~2, pp. 158--172, 2017.

\bibitem{valente2020diffusion}
T.~W. Valente and G.~G. Vega~Yon, ``Diffusion/contagion processes on social
  networks,'' \emph{Health Education \& Behavior}, vol.~47, no.~2, pp.
  235--248, 2020.

\bibitem{akbarpour2020just}
M.~Akbarpour, S.~Malladi, and A.~Saberi, ``Just a few seeds more: value of
  network information for diffusion,'' \emph{Available at SSRN 3062830}, 2020.

\bibitem{jackson2017behavioral}
M.~O. Jackson and E.~C. Storms, ``Behavioral communities and the atomic
  structure of networks,'' \emph{arXiv preprint arXiv:1710.04656}, 2017.

\bibitem{chung2011spectra}
F.~Chung and M.~Radcliffe, ``On the spectra of general random graphs,''
  \emph{The Electronic Journal of Combinatorics}, pp. P215--P215, 2011.

\bibitem{paulin2016efron}
D.~Paulin, L.~Mackey, and J.~A. Tropp, ``Efron--{S}tein inequalities for random
  matrices,'' \emph{The Annals of Probability}, vol.~44, no.~5, pp. 3431--3473,
  2016.

\bibitem{mackey2014matrix}
L.~Mackey, M.~I. Jordan, R.~Y. Chen, B.~Farrell, and J.~A. Tropp, ``Matrix
  concentration inequalities via the method of exchangeable pairs,'' \emph{The
  Annals of Probability}, vol.~42, no.~3, pp. 906--945, 2014.

\bibitem{ballester2006s}
C.~Ballester, A.~Calv{\'o}-Armengol, and Y.~Zenou, ``Who's who in networks.
  {W}anted: The key player,'' \emph{Econometrica}, vol.~74, no.~5, pp.
  1403--1417, 2006.

\bibitem{tropp2012user}
J.~A. Tropp, ``User-friendly tail bounds for sums of random matrices,''
  \emph{Foundations of computational mathematics}, vol.~12, pp. 389--434, 2012.

\bibitem{petz1994survey}
D.~Petz, ``A survey of certain trace inequalities,'' \emph{Banach Center
  Publications}, vol.~30, no.~1, pp. 287--298, 1994.

\bibitem{yu2015useful}
Y.~Yu, T.~Wang, and R.~J. Samworth, ``A useful variant of the {D}avis--{K}ahan
  theorem for statisticians,'' \emph{Biometrika}, vol. 102, no.~2, pp.
  315--323, 2015.

\bibitem{han2009theoretical}
W.~Han and K.~E. Atkinson, \emph{Theoretical Numerical Analysis: A Functional
  Analysis Framework}.\hskip 1em plus 0.5em minus 0.4em\relax Springer, 2009.

\end{thebibliography}

\end{document}